\newcommand{\hide}[1]{}
\newcommand{\diam}{{\tt diam}}
\newcommand{\doap}{\mbox{\sc Doap}}
\newcommand{\wdoap}{\mbox{\sc WDoap}}
\title{Almost optimal algorithms for diameter-optimally augmenting trees}
\titlerunning{Almost optimal algorithms for diameter-optimally augmenting trees}
\author{Davide Bilò}{Department of Humanities and Social Sciences, University of Sassari\\{Via Roma 151, 07100 Sassari (SS), Italy}}{davide.bilo@uniss.it}{https://orcid.org/0000-0003-3169-4300}{}
\authorrunning{D. Bilò}
\subjclass{\ccsdesc[100]{Theory of computation~Graph algorithms analysis}\\\ccsdesc[100]{Theory of computation~Approximation algorithms analysis}}
\keywords{Graph diameter, augmentation problem, trees, time-efficient algorithms.}
\begin{document}

\maketitle

\begin{abstract}
We consider the problem of augmenting an $n$-vertex tree with one shortcut in order to minimize the diameter of the resulting graph. The tree is embedded in an unknown space and we have access to an oracle that, when queried on a pair of vertices $u$ and $v$, reports the weight of the shortcut $(u,v)$ in constant time.
Previously, the problem was solved in $O(n^2 \log^3 n)$ time for general weights [Oh and Ahn, ISAAC 2016], in $O(n^2 \log n)$ time for trees embedded in a metric space [Gro{\ss}e et al., {\tt arXiv:1607.05547}],  and in $O(n \log n)$ time for paths embedded in a metric space [Wang, WADS 2017]. Furthermore, a $(1+\varepsilon)$-approximation algorithm running in $O(n+1/\varepsilon^{3})$ has been designed for paths embedded in $\mathbb{R}^d$, for constant values of $d$ [Gro{\ss}e et al., ICALP 2015].

The contribution of this paper is twofold: we address the problem for trees (not only paths) and we also improve upon all known results. More precisely, we design a {\em time-optimal} $O(n^2)$ time algorithm for general weights. Moreover, for trees embedded in a metric space, we design (i) an exact $O(n \log n)$ time algorithm and (ii) a $(1+\varepsilon)$-approximation algorithm that runs in $O\big(n+ \varepsilon^{-1}\log \varepsilon^{-1}\big)$ time.
\end{abstract}

\section{Introduction}

Consider a tree $T=(V(T),E(T))$ of $n$ vertices, with a {\em weight} $\delta(u,v)>0$ associated with each edge $(u,v) \in E(T)$, and let $c:V(T)^2 \rightarrow \mathbb{R}_{\geq 0}$ be an unknown function that assigns a weight to each possible {\em shortcut} $(u,v)$ we could add to $T$. For a given path $P$ of an edge-weighted graph $G$, the {\em length} of $P$ is given by the overall sum of its edge weights. We denote by $d_G(u,v)$ the {\em distance} between $u$ and $v$ in $G$, i.e., the length of a shortest path between $u$ and $v$ in $G$.\footnote{If $u$ and $v$ are in two different connected components of $G$, then $d_G(u,v)=\infty$.} The {\em diameter} of $G$ is the maximum distance between any two vertices in $G$, that is $\max_{u,v \in V(G)}d_G(u,v)$.

In this paper we consider the {\em Diameter-Optimal Augmentation Problem} ($\doap$ for short). More precisely, we are given an edge-weighted tree $T$ and we want to find a shortcut $(u,v)$ whose addition to $T$ minimizes the diameter of the resulting (multi)graph, that we denote by $T+(u,v)$. We assume to have (unlimited access to) an oracle that is able to report the weight of a queried shortcut in $O(1)$ time.

$\doap$ has already been studied before and the best known results are the following:
\begin{itemize}
\item an $O(n^2 \log^3 n)$ time and $O(n)$ space algorithm and a lower bound of $\Omega(n^2)$ on the time complexity of any exact algorithm~\cite{DBLP:conf/isaac/0001A16a};
\item an $O(n^2 \log n)$ time algorithm for trees {\em embedded} in a metric space~\cite{DBLP:journals/corr/GrosseGKSS16};
\item an $O(n \log n)$ time algorithm for paths embedded in a metric space~\cite{DBLP:conf/wads/wang};\footnote{More precisely, $c$ is a metric function and $\delta(u,v)=c(u,v)$, for every $(u,v) \in E(G)$.}
\item a $(1+\varepsilon)$-approximation algorithm that solves the problem in $O(n+1/\varepsilon^{3})$ for paths embedded in the Euclidean (constant) $k$-dimensional space~\cite{DBLP:conf/icalp/GrosseGKSS15}.
\end{itemize} 
In this paper we improve upon (almost) all these results. More precisely:
\begin{itemize}
\item we design an $O(n^2)$ time and space algorithm that solves $\doap$. We observe that the time complexity of our algorithm is optimal;
\item we develop an $O(n \log n)$ time and $O(n)$ space algorithm that solves $\doap$ for trees embedded in a metric space;
\item we provide a $(1+\varepsilon)$-approximation algorithm, running in $O\left(n+\frac{1}{\varepsilon}\log\frac{1}{\varepsilon}\right)$ time and using $O(n+1/\varepsilon)$ space, that solves $\doap$ for trees embedded in a metric space. 
\end{itemize}
Our approaches are similar in spirit to the ones already used in~\cite{DBLP:conf/icalp/GrosseGKSS15,DBLP:journals/corr/GrosseGKSS16,DBLP:conf/wads/wang}, but we need many new key observations and novel algorithmic techniques to extend the results to trees. Our results leave open the problem of solving $\doap$ in $O(n^2)$ time and truly subquadratic space for general instances, and in $o(n \log n)$ time for trees embedded in a metric space. 

\subparagraph*{Other related work.}
The variant of $\doap$ in which we want to minimize the {\em continuous diameter}, i.e., the diameter measured with respect to all the points of a tree (not only its vertices), has been also addressed. Oh and Ahn~\cite{DBLP:conf/isaac/0001A16a} designed an $O(n^2 \log^3 n)$ time and $O(n)$ space algorithm. De Carufel et al.~\cite{DBLP:conf/swat/CarufelMS16} designed an $O(n)$ time algorithm for paths embedded in the Euclidead plane. Subsequently, De Carufel et al.~\cite{DBLP:conf/wads/CarufelGSS17}  extended the results to trees embedded in the Euclidean plane by designing an $O(n \log n)$ time algorithm.
 
Several generalizations of $\doap$ in which the graph (not necessarily a tree) can be augmented with the addition of $k$ edges have also been studied. In the more general setting, the problem is NP-hard~\cite{DBLP:journals/jgt/SchooneBL87}, not approximable within logarithmic factors~\cite{DBLP:journals/tcs/BiloGP12},  and some of its variants -- parameterized w.r.t. the overall cost of added shortcuts and resulting diameter -- are even W$[2]$-hard~\cite{DBLP:journals/algorithmica/FratiGGM15, DBLP:journals/dam/GaoHN13}. Therefore, several approximation algorithms have been developed for all these variations~\cite{DBLP:journals/tcs/BiloGP12, DBLP:journals/algorithmica/ChepoiV02, DBLP:conf/swat/DemaineZ10, DBLP:journals/algorithmica/FratiGGM15, LI1992303}. Finally, upper and lower bounds on the values of the diameters of the augmented graphs have also been investigated in~\cite{DBLP:journals/jgt/AlonGR00, DBLP:journals/jgt/ChungG84, DBLP:journals/jgt/Ishii13}.

\subparagraph*{Our approaches.}
Gro{\ss}e et al.~\cite{DBLP:conf/icalp/GrosseGKSS15} were the first to attack $\doap$ for paths embedded in a metric space. They gave an $O(n \log n)$ time algorithm for the corresponding {\em search version} of the problem: 
\begin{quote}
\textit{For a given value $\lambda > 0$, either compute a shortcut whose addition to the path induces a graph of diameter at most $\lambda$, or return $\perp$ if such a shortcut does not exist.}
\end{quote}
Then, by implementing their algorithm also in a parallel fashion and applying Megiddo's parametric-search paradigm~\cite{DBLP:journals/jacm/Megiddo83}, they solved $\doap$ for paths embedded in a metric space in $O(n \log^3 n)$ time. Lately, Wang~\cite{DBLP:conf/wads/wang} improved upon this result in two ways. First, he solved the search version of the problem in linear time. Second, he developed an ad-hoc algorithm that, using the algorithm for the search version of the problem black-box together with sorted-matrix searching techniques and range-minima data structure, is able to: (i) reduce the size of the solution-search-space from $\binom{n}{2}$ to $n$ in $O(n \log n)$ and (ii) evaluate the quality of all the leftover solutions in $O(n)$ time.

Our approach for $\doap$ instances embedded in a metric space is close in spirit to the approach used by Wang. In fact, we develop an algorithm that solves the search version of $\doap$ in linear time and we use such an algorithm black-box to solve $\doap$ in $O(n \log n)$ time and linear space by first reducing the size of the solution-search-space from $\binom{n}{2}$ to $n$ and then by evaluating the quality of the leftover solutions in $O(n \log n)$ time. However, differently from Wang's approach, we use  Hershberger data structure for computing the {\em upper envelope} of a set of linear functions~\cite{DBLP:journals/ipl/Hershberger89} rather than a range-minima data structure. Furthermore, there are several issues we have to deal with due to the much more complex topology of  trees. We solve some of these issues using a lemma proved in~\cite{DBLP:journals/corr/GrosseGKSS16} about the existence of an optimal shortcut whose endvertices both belong to a diametral path of the tree. This allows us to reduce our $\doap$ instance to a node-weighted path instance of a similar problem, that we call $\wdoap$, in which the distance between two vertices is measured by adding the weights of the two considered vertices to the length of a shortest path between them, and the diameter is defined accordingly. However, it is not possible to use the algorithms presented in~\cite{DBLP:conf/icalp/GrosseGKSS15, DBLP:conf/wads/wang} black-box to solve $\wdoap$. Therefore we need to design an ad-hoc algorithm whose correctness strongly relies on the structural properties of diametral paths and properties satisfied by node weights. Furthermore, most of the easy observations that can be done for paths become non-trivial lemmas that need formal proofs for trees.

Our time-optimal algorithm that solves $\doap$ for instances with general weights is based on the following important observations. We reduce, in $O(n^2)$ time, a $\doap$  instance to another $\doap$ instance in which the function $c$ is {\em graph-metric}, i.e., $c$ is an almost metric function that satisfies a weaker version of the triangle inequality. Since our $O(n \log n)$ time algorithm for $\doap$ instances embedded in a metric space also works for graph-metric spaces, we can use this algorithm black-box to solve the reduced $\doap$ instance in $O(n \log n)$ time, thus solving the original $\doap$ instance in $O(n^2)$ time.

Finally, the $(1+\varepsilon)$-approximation algorithm for trees embedded in a metric space is obtained by  proving that the diameter of the tree is at most three times the diameter, say $D^*$, of an optimal solution. This allows us to partition the vertices along a diametral path into $O(1/\varepsilon)$ sets such that the distance between any two vertices of the same set is at most $O(\varepsilon D^*)$. We choose a suitable {\em representative} vertex for each of the $O(1/\varepsilon)$ sets and use our $O(n \log n)$ time algorithm to find an optimal shortcut in the corresponding $\wdoap$ instance restricted to the set of representative vertices. Since the representative vertices are $O(1/\varepsilon)$, the optimal shortcut in the restricted $\wdoap$ instance can be found in $O(\varepsilon^{-1}\log \varepsilon^{-1})$ time. Furthermore, because of the choice of the representative vertex, we can show that the shortcut returned is a $(1+\varepsilon)$-approximate solution for the (unrestricted) $\wdoap$ instance of our problem, i.e., a $(1+\varepsilon)$-approximate solution for our original $\doap$ instance.


\subparagraph*{Paper organization.}
The paper is organized as follows: in Section~\ref{section:preliminaries} we present some preliminary results among which the reduction from general instances to (graph)-metric instances; in Section~\ref{section:reduction_to_paths} we describe the reduction from $\doap$ to $\wdoap$ together with further simplifications; in Section~\ref{section:search_problem} we design an algorithm that solves a search version of $\wdoap$ in linear time; in Section~\ref{section:metric_trees} we develop an algorithm that solves $\doap$ for trees embedded in a metric space; in Section~\ref{section:general_instances} we describe the $O(n^2)$ time algorithm that solves $\doap$; in Section~\ref{section:approximation_algorithm} we design the linear time approximation algorithm that finds a $(1+\varepsilon)$-approximate solution for instances of $\doap$ embedded in a metric space. 

\section{Preliminaries}\label{section:preliminaries}

To simplify the notation, we drop the subscript from $d_T(\cdot,\cdot)$ whenever $T$ is clear from the contest and we denote $d_{T+(u,v)}(\cdot,\cdot)$ by $d_{u,v}(\cdot,\cdot)$. The diameter of a graph $G$ is denoted by $\diam(G)$. A {\em diametral path} of $G$ is a shortest path in $G$ of length equal to $\diam(G)$. We say that $c$ is a {\em graph-metric w.r.t. $G$}, or simply a {\em graph-metric} when $G$ is clear from the contest, if, for every three distinct vertices $u, v$, and $z$ of $G$, we have that
\[
c(u,v) \leq c(u,z)+d(z,v). \tag{{\em graph-triangle inequality}}
\]
We observe that a metric cost function is also graph-metric, but the opposite does not hold in general (see Figure \ref{fig:graph_metric_function_example}). 
\begin{figure}[ht]
	\centering
	\includegraphics[scale=0.8]{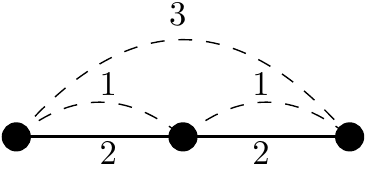}
	\caption{An example of a graph-metric function. The graph (a path in this specific example) is given by the two solid edges of weight $2$ each. The shortcuts are dashed. The example shows a graph-metric function that does not satisfy the triangle inequality.}
    \label{fig:graph_metric_function_example}
\end{figure}
The {\em graph-metric closure} induced by $c$ is a function $\bar c$ such that, for every two vertices $u$ and $v$ of $G$, $\bar c(u,v)=\min\big\{d_G(u,u')+c(u',v')+d_G(v',v) \mid u', v' \in V(G)\big\}$.
The following lemma shows that we can restrict $\doap$ to input instances where $c$ is graph-metric. We observe that the reduction holds for any graph and not only for trees.
\begin{lemma}\label{lemma:reduction_to_tree_metric_instances}
Solving the instance $\langle G, \delta, c \rangle$ of $\doap$ is equivalent to solving the instance $\langle G, \delta, \bar c \rangle$ of $\doap$, where $\bar c$ is the graph-metric closure induced by $c$.
\end{lemma}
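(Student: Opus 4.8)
The plan is to prove the two directions of a single inequality relating the achievable diameters under $c$ and under $\bar c$, and then combine them to obtain both the equality of optimal diameters and an explicit correspondence between optimal shortcuts. Throughout I will rely on the fact that, because the augmented graph has only one extra edge, a shortest path between any two vertices $x,y$ in $G+(u,v)$ uses the shortcut at most once; hence for a shortcut of weight $w$ we have
\[
d_{u,v}(x,y)=\min\bigl\{d_G(x,y),\, d_G(x,u)+w+d_G(v,y),\, d_G(x,v)+w+d_G(u,y)\bigr\}.
\]
This formula is the only structural ingredient I need, and it is valid for an arbitrary graph $G$, which is precisely why the reduction is stated for any graph. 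I will write $D^\star_{c}$ and $D^\star_{\bar c}$ for the optimal diameters of the two instances, and $d^{c}_{u,v}$, $d^{\bar c}_{u,v}$ for the corresponding augmented distances.

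First I would establish the easy direction. Taking $u'=u$ and $v'=v$ in the definition of the graph-metric closure shows $\bar c(u,v)\le c(u,v)$ for every pair. Since decreasing the weight of the added shortcut can only decrease each of the three terms in the displayed formula, we get $d^{\bar c}_{u,v}(x,y)\le d^{c}_{u,v}(x,y)$ for all $x,y$, and therefore $\diam(G+(u,v))$ computed under $\bar c$ is at most the one computed under $c$, for every $(u,v)$. Minimizing over all shortcuts yields $D^\star_{\bar c}\le D^\star_{c}$.

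The reverse direction is where the real work lies, and I expect it to be the main obstacle, since it requires turning a $\bar c$-shortcut back into a genuine $c$-shortcut. Let $(u',v')$ attain the minimum in $\bar c(u,v)=d_G(u,u')+c(u',v')+d_G(v',v)$. I would substitute this expression into each path-through-the-shortcut term of the formula for $d^{\bar c}_{u,v}(x,y)$ and apply the triangle inequality for the graph distance $d_G$ twice, for instance
\[
d_G(x,u)+\bar c(u,v)+d_G(v,y)\ge d_G(x,u')+c(u',v')+d_G(v',y),
\]
and symmetrically for the other orientation of the shortcut. The direct term $d_G(x,y)$ is common to both instances, so a term-by-term comparison of the two minima gives $d^{\bar c}_{u,v}(x,y)\ge d^{c}_{u',v'}(x,y)$ for all $x,y$, hence the diameter of $G+(u',v')$ under $c$ is at most that of $G+(u,v)$ under $\bar c$. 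Applying this to an optimal $\bar c$-shortcut shows $D^\star_{c}\le D^\star_{\bar c}$, so the two optimal diameters coincide.

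Finally I would spell out the algorithmic equivalence. If $(u,v)$ is optimal for $\bar c$, the minimizing pair $(u',v')$ from the definition of $\bar c(u,v)$ is optimal for $c$; conversely, any $c$-optimal shortcut is already $\bar c$-optimal by the easy direction. Thus solving either instance immediately yields a solution of the same diameter for the other, provided one records the witnesses $(u',v')$ while evaluating $\bar c$. Beyond this bookkeeping, the only point requiring care is to handle both orientations of the shortcut consistently in the term-by-term comparison, which is exactly what the symmetric pair of triangle-inequality applications achieves.
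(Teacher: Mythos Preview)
Your proof is correct and follows essentially the same approach as the paper: both directions hinge on $\bar c(u,v)\le c(u,v)$ for the easy inequality and on replacing the $\bar c$-shortcut $(u,v)$ by the witnessing $c$-shortcut $(u',v')$ for the reverse one. The only difference is presentational: the paper phrases the hard direction as a path-replacement argument (``bypass the edge $(u,v)$ with the detour through $(u',v')$''), whereas you unfold the explicit three-term formula for $d_{u,v}(x,y)$ and do the term-by-term triangle-inequality comparison; these are the same argument at different levels of formality.
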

\begin{proof}
We prove the claim by showing that for every two vertices $u$ and $v$ in $G$, there exists two vertices $u'$ and $v'$ in $G$ such that the diameter of $G+(u',v')$ measured w.r.t. $\bar c$ (resp., $c$) is at most the diameter of $G+(u,v)$ measured w.r.t. $c$ (resp. $\bar c$). 

Let $u, v$ be any two vertices of $G$. Since $\bar c(u,v) \leq c(u,v)$, the diameter of $G+(u,v)$ measured w.r.t. $\bar c$ is at most the diameter of the same graph measured w.r.t. $c$ (i.e., $u'=u$ and $v'=v$). To prove the converse for suitable vertices $u'$ and $v'$, let $u'$ and $v'$ be two vertices such that $\bar c(u,v)=d(u,u')+c(u',v')+d(v',v)$.
Every path in $G+(u,v)$ passing through $(u,v)$ can be replaced in $G+(u', v')$ by the same path where the edge $(u,v)$ is bypassed with the detour passing through $(u',v')$. Since the cost of this detour is exactly $\bar c(u,v)$, the diameter of $G+(u',v')$ measured w.r.t. $c$ is less than or equal to the diameter of $G+(u,v)$ measured w.r.t. $\bar c$. This completes the proof.
\end{proof}
Next lemma shows the existence of an optimal shortcut whose endvertices are both on a diametral path of $T$ for the case in which $c$ is a graph-metric. 
\begin{lemma}\label{lemma:equivalence_to_metric_instance}
Let $\langle T, \delta, c\rangle$ be an instance of \doap, where $c$ is a graph-metric, and let $P=(v_1,\dots,v_N)$ be a diametral path of $T$. There always exists an optimal shortcut $(u^*,v^*)$ such that $u^*,v^* \in V(P)$.
\end{lemma}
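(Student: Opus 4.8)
The plan is to start from an arbitrary optimal shortcut $(u,v)$ and \emph{project} each endpoint onto $P$, arguing that the projection never increases the diameter. For a vertex $u$, let $a$ be the vertex of $P$ closest to $u$ (its \emph{attachment vertex}); since $T$ is a tree, $a$ is the unique vertex of $P$ through which every path from $u$ to $V(P)$ passes, and deleting $a$ isolates $u$ inside a pendant subtree $T_a$. I would move the endpoints one at a time, so it suffices to prove that, for a fixed endpoint $t$ and $a$ the attachment vertex of $u$, one has $\diam(T+(a,t))\le\diam(T+(u,t))=:D$; applying this first to $u$ and then to $v$ produces an optimal shortcut with both endpoints on $P$. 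The single computation I would do up front is the key inequality $c(a,t)\le c(u,t)+d(u,a)$, which follows from the graph-triangle inequality applied to the pair $(t,a)$ with $z=u$, using the symmetry of $c$. Informally it says that the projected shortcut, once lengthened by the detour $d(u,a)$ back to the old endpoint, is no costlier than the original.

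With this in hand I would split the analysis according to where the two endpoints $x,y$ of a candidate diametral pair of $T+(a,t)$ lie relative to $T_a$. If neither lies in $T_a$, then $d(x,u)=d(x,a)+d(a,u)$ and $d(y,u)=d(y,a)+d(a,u)$, and combining this with the key inequality gives the clean pointwise bound $d_{a,t}(x,y)\le d_{u,t}(x,y)\le D$ for both orientations of the shortcut. If both $x$ and $y$ lie in $T_a$, then a shortcut anchored at $u$ (or at $a$) can leave $T_a$ only through $a$ and must return through $a$, so it cannot shorten the $x$–$y$ connection; hence $d_{a,t}(x,y)=d(x,y)=d_{u,t}(x,y)\le D$.

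The main obstacle is the straddling case, say $x\in T_a$ and $y\notin T_a$, where the pointwise comparison genuinely fails: moving the anchor from $u$ to $a$ can \emph{lengthen} $x$'s access to the shortcut when $u$ lies between $a$ and $x$, with no offsetting saving. Here I would exploit that $P$ is \emph{diametral}. Writing $L=\diam(T)=d(v_1,v_N)$, the inequalities $d(x,v_1),d(x,v_N)\le L$ together with $d(x,v_i)=d(x,a)+d(a,v_i)$ force $d(x,a)\le\min\{d(a,v_1),d(a,v_N)\}$; that is, a pendant vertex is never farther from its attachment vertex than the nearer endpoint of $P$. Since $x$ reaches $y$ only through $a$, we have $d_{a,t}(x,y)=d(x,a)+d_{a,t}(a,y)$. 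Choosing the endpoint $z\in\{v_1,v_N\}$ whose tree path to $y$ passes through $a$, I obtain $d(x,a)\le d(a,z)$ and $d_{a,t}(z,y)=d(a,z)+d_{a,t}(a,y)$, whence $d_{a,t}(x,y)\le d_{a,t}(z,y)$. As $z\in V(P)$ lies outside $T_a$, the pair $(z,y)$ falls back into the first (non-pendant) case and is therefore bounded by $D$, closing the hard case.

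It remains to dispose of the degenerate situation in which the fixed endpoint $t$ also lies in $T_a$, which is exactly when $u$ and $v$ attach to the same vertex of $P$ and the projected shortcut would collapse to a loop. I would handle this by a dichotomy: if the optimal diameter equals $\diam(T)$, then no shortcut improves on $\diam(T)$ and any shortcut with both endpoints on $P$ is already optimal, so the claim holds trivially; otherwise the optimal shortcut strictly reduces $d(v_1,v_N)$, which is impossible for a shortcut with both endpoints inside a single pendant subtree (such a path enters and leaves $T_a$ through $a$ twice). Hence in the nontrivial case $u$ and $v$ have distinct attachment vertices, guaranteeing $t\notin T_a$ at each of the two projection steps, and the argument above applies.
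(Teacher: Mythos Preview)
Your overall strategy—project each endpoint of an optimal shortcut onto its attachment vertex on $P$, one at a time—is close in spirit to the paper's proof, which instead projects both endpoints simultaneously to $(v_i,v_j)$ and then does a three-case analysis on the positions of the attachment vertices of the diametral pair. The key graph-triangle inequality $c(a,t)\le c(u,t)+d(u,a)$ is exactly the ingredient the paper uses as well. However, your single-projection argument has a genuine gap in Case~2 and a subtler one in Case~3.

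In Case~2 you assert that when $x,y\in T_a$ the shortcut $(u,t)$ ``can leave $T_a$ only through $a$'' and hence $d_{u,t}(x,y)=d(x,y)$. This is false: the shortcut edge $(u,t)$ itself exits $T_a$ directly from $u$, not through $a$. Concretely, take the path $a\text{--}y\text{--}u\text{--}x$ inside $T_a$ with weights $1,100,1$, a vertex $t\notin T_a$ with $d(t,a)=1$, and $c(u,t)=1$; then the path $x\to u\to t\to a\to y$ has length $4$ while $d(x,y)=101$, so $d_{u,t}(x,y)=4\neq d(x,y)$. What you actually need is only $d(x,y)\le D$, and that does hold, but it requires the diametral-path argument from your Case~3 (replace $x$ by an endpoint of $P$), not the claimed equality.

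In Case~3 you write $d_{a,t}(z,y)=d(z,a)+d_{a,t}(a,y)$. Since $a$ is a cut vertex of $T+(a,t)$ separating $z$ from $y$, you correctly get $d_{a,t}(z,y)=d_{a,t}(z,a)+d_{a,t}(a,y)$, but $d_{a,t}(z,a)$ can be strictly smaller than $d(z,a)$ whenever $t$ lies in the same component of $T-a$ as $z$ (then $d(z,t)+c(a,t)$ may undercut $d(z,a)$). In that situation your chain $d_{a,t}(x,y)\le d(z,a)+d_{a,t}(a,y)=d_{a,t}(z,y)$ breaks, because the middle quantity can exceed $d_{a,t}(z,y)$. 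The paper sidesteps this by projecting both endpoints at once and comparing $d_{v_i,v_j}(a,b)$ directly to $d_{u,v}(v_1,b)$ (not to $d_{v_i,v_j}(v_1,b)$); with both shortcut endpoints already on $P$, the third routing option that causes trouble here simply does not arise. Your one-at-a-time projection can be repaired, but it needs a more careful comparison with $d_{u,t}(z,y)$ rather than with $d_{a,t}(z,y)$, and a separate handling of the sub-case where $t$ lies between $z$ and $a$.
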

\begin{proof}
First of all, observe that $\diam(T) = d(v_1,v_N)$. Let $u, v \in V(T)$ be two vertices such that $\diam\big(T+(u,v)\big)$ is minimum. Let $v_i$ be the last vertex of $P$ encountered during the traversal of the path from $v_1$ to $u$ in $T$, and let $v_j$ be the last vertex of $P$ encountered during the traversal of the path from $v_1$ to $v$ in $T$. W.l.o.g., we assume that $i \leq j$. We prove the claim by showing that $\diam\big(T+(v_i,v_j)\big)\leq \diam\big(T+(u,v)\big)$.

\noindent We assume that $d_{u,v}(v_1,v_N) < d(v_1,v_N)$ as otherwise the claim would trivially hold since $d_{v_i,v_j}(u,v) \leq d(v_1,v_N)$. This implies that 
\begin{equation}\label{eq:1}
d_{u,v}(v_1,v_N)=d(v_1,v_i)+d(v_i,u)+c(u,v)+d(v,v_j)+d(v_j,v_N)
\end{equation}
as well as that $i \neq j$. Therefore, $i < j$.

We prove the claim by showing that $d_{v_i,v_j}(a,b)\leq \diam\big(T+(u,v)\big)$, for every two vertices $a$ and $b$ of $T$. Let $a$ and $b$ be any two fixed vertices of $T$. Let $v_k$ be the last vertex of $P$ that is encountered during a traversal of the path in $T$ from $v_1$ to $a$. Similarly, let $v_h$ be the last vertex of $P$ that is encountered during the traversal of the path in $T$ from $v_1$ to $b$. W.l.o.g., we assume that $k \leq h$. We can rule out the case in which $d_{u,v}(a,b)=d(a,b)$ since this would imply $d_{v_i,v_j}(a,b) \leq d(a,b) =d_{u,v}(a,b)\leq \diam\big(T+(u,v)\big)$ and thus the claim. Therefore, in the following we assume that $d_{u,v}(a,b) < d(a,b)$. As a consequence, we have that $k < h$ since $k=h$ implies $d_{v_i,v_j}(a,b)=d(a,b)$. Moreover, $d_{u,v}(a,b) = d(a,u)+c(u,v)+d(v,b)$.
We break the proof into the following three cases: 
\begin{itemize}
\item $k \neq i$ and $h \neq j$;
\item both $k \leq i$ and $h \geq j$ hold;
\item either $k = i$ and $h < j$ holds or $k > i$ and $h = j$ holds.
\end{itemize}

\noindent We consider the first case in which $k \neq i$ and $h \neq j$ (see Figure \ref{fig:lemma_optimal_shortcut}, Case 1). Since $c(v_i,v_j) \leq d(v_i,u)+c(u,v)+d(v,v_j)$, we have that
\begin{align*}
d_{v_i,v_j}(a,b)	& \leq d(a,v_i)+c(v_i,v_j)+d(v_j,b)\\
						& \leq d(a,v_i)+d(v_i,u)+c(u,v)+d(v,v_j)+d(v_j,b) \\
						& \leq d(a,u)+c(u,v)+d(v,b)\\
						& = d_{u,v}(a,b) \leq \diam\big(T+(u,v)\big).
\end{align*}
\begin{figure}[ht]
	\centering
	\includegraphics[scale=0.8]{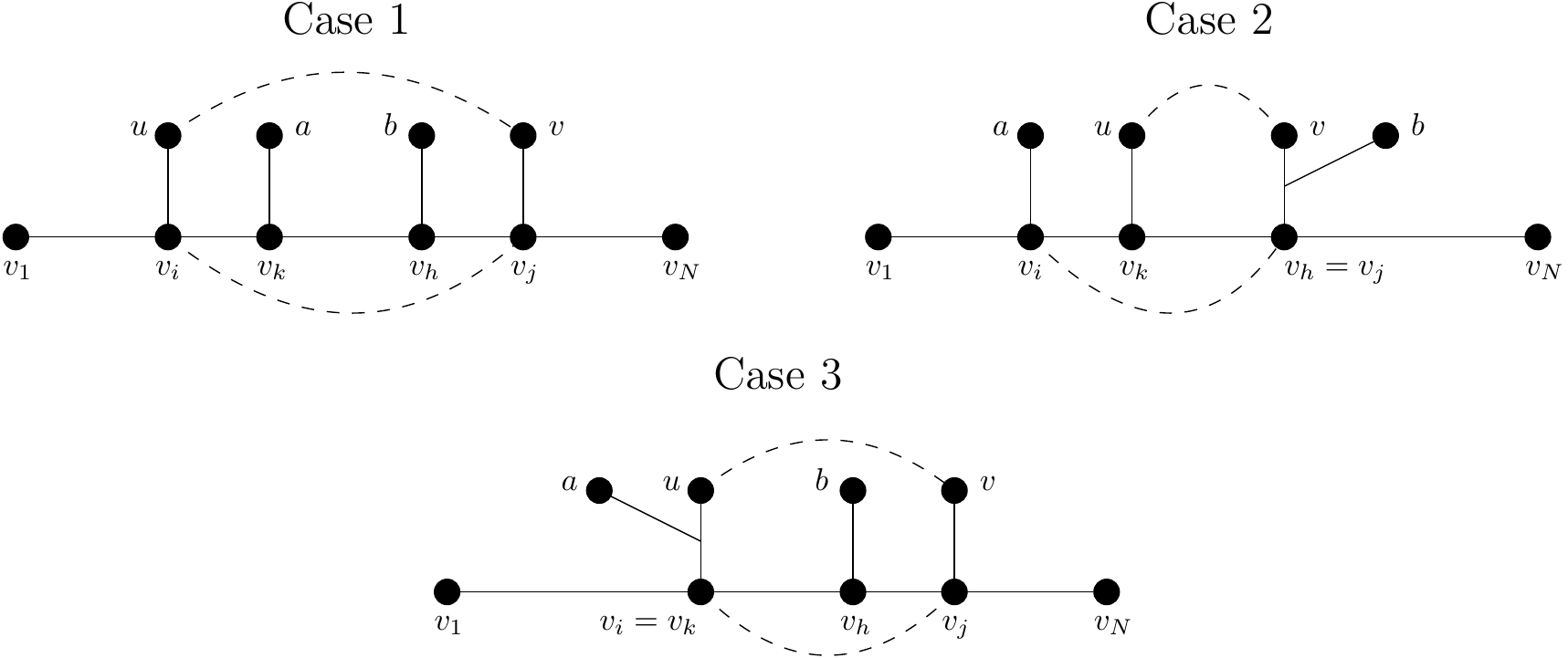}
	\caption{The three-case analysis in the proof of Lemma \ref{lemma:equivalence_to_metric_instance}.}
    \label{fig:lemma_optimal_shortcut}
\end{figure}

We consider the second case in which both $k \leq i$ and $h \geq j$ (see Figure \ref{fig:lemma_optimal_shortcut}, Case 2). First of all, observe that $d(a,v_i) \leq d(v_1,v_i)$ as well as $d(v_j,b) \leq d(v_j,v_N)$. Therefore, since $c(v_i,v_j) \leq d(v_i,u)+c(u,v)+d(v,v_j)$, using (\ref{eq:1}) in the last equality that follows, we obtain
\begin{align*}
d_{v_i,v_j}(a,b) 	& \leq d(a,v_i)+c(v_i,v_j)+d(v_j,b)\\
						& \leq d(v_1,v_i) + d(v_i,u) + c(u,v) + d(v,v_j) + d(v_j,v_N)\\
						& = d_{u,v}(v_1,v_N) \leq \diam\big(T+(u,v)\big).
\end{align*}
We consider the last case in which either $k = i$ and $h < j$ holds or $k > i$ and $h = j$ holds (see Figure \ref{fig:lemma_optimal_shortcut}, Case 3). W.l.o.g., we assume that $k = i$ and $h < j$ as the proof for the (symmetric) case in which $k > i$ and $h = j$ is similar. Since $d(a,v_i)\leq d(v_1,v_i)$ we have that
\begin{align*}
d_{v_i,v_j}(a,b)	& = d(a,v_i)+\min\big\{d(v_i,b),c(v_i,v_j)+d(v_j,b)\big\}\\
						& \leq d(v_1,v_i) + \min\big\{d(v_i,b),d(v_i,u)+c(u,v)+d(v,v_j)+d(v_j,b)\big\}\\
						& = d_{u,v}(v_1,b) \leq \diam\big(T+(u,v)\big).
\end{align*}
This completes the proof.
\end{proof}
\section{Reduction from trees to node-weighted paths}\label{section:reduction_to_paths}

In this section we show that a $\doap$ instance embedded in a graph-metric space can be reduced in linear time to a node-weighted instance of a similar problem. The {\em Node-Weighted-Diameter-Optimal Augmentation Problem} ($\wdoap$ for short) is defined as follows:
\begin{description}
	\item[Input:] A path $P=(v_1,\dots,v_N)$, with a weight $\delta(v_i,v_{i+1})>0$ associated with each edge $(v_i,v_{i+1})$ of $P$, a weight $w(v_i)$ associated with each vertex $v_i$ such that $0 \leq w(v_i) \leq \min\{d(v_1,v_i),d(v_i,v_N)\}$, and an oracle that is able to report the weight $c(v_i,v_j)$ of a queried shortcut in $O(1)$ time, where $c$ is a graph-metric;
		
	\item[Output:] Two indices $i^*$ and $j^*$, with $1 \leq i^* < j^* \leq N$, that minimize the function
	$$
		D(i,j):=\max_{1\leq k < h \leq N}\big\{w(v_k) + d_{v_i,v_j}(v_k,v_h) + w(v_h)\big\}.
	$$
\end{description}
We observe that $w(v_1)=w(v_N)=0$.
Let $\langle T, \delta, c\rangle$ be a $\doap$ instance, where $c$ is a graph-metric. Let $P=(v_1,\dots,v_N)$ be a diametral path of $T$, $T_i$ the tree containing $v_i$ in the forest obtained by removing the edges of $P$ from $T$, and $w(v_i):=\max_{v \in V(T_i)}d(v_i,v)$. We say that $\langle P, \delta, w, c\rangle$ is the $\wdoap$ instance induced by $\langle T, \delta, c\rangle$ and $P$. The following lemma holds.

\begin{lemma}\label{lemma:discard_diameter_appendedrees}
For every $1 \leq k \leq N$, $D(i,j) \geq \diam(T_k)$.
\end{lemma}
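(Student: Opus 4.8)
The plan is to prove the stronger bound $D(i,j) \ge 2\,w(v_k) \ge \diam(T_k)$ for every internal index $1 < k < N$; the boundary indices need no work, since $w(v_1)=w(v_N)=0$ forces $T_1=\{v_1\}$ and $T_N=\{v_N\}$, whence $\diam(T_1)=\diam(T_N)=0\le D(i,j)$. The second inequality $\diam(T_k)\le 2\,w(v_k)$ is immediate from the triangle inequality: the unique $a$--$b$ path of any two vertices $a,b\in V(T_k)$ stays inside the subtree $T_k$, so $d(a,b)\le d(a,v_k)+d(v_k,b)\le 2\max_{v\in V(T_k)}d(v_k,v)=2\,w(v_k)$.

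To obtain the first inequality I would establish the key sub-claim $\max\{d_{v_i,v_j}(v_1,v_k),\,d_{v_i,v_j}(v_k,v_N)\}\ge w(v_k)$. Granting it, if the maximum is realized by $d_{v_i,v_j}(v_k,v_N)$ I instantiate the defining maximum of $D(i,j)$ at the (legal, since $k<N$) pair $(k,N)$, getting $D(i,j)\ge w(v_k)+d_{v_i,v_j}(v_k,v_N)+w(v_N)=w(v_k)+d_{v_i,v_j}(v_k,v_N)\ge 2\,w(v_k)$; symmetrically, if the first term wins I use the pair $(1,k)$ together with $w(v_1)=0$.

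For the sub-claim I would split on the position of $v_k$ relative to the shortcut endpoints (writing $i<j$). If $k\le i$ then no shortest path from $v_1$ to $v_k$ can profit from a detour through $v_i,v_j$ (both lie weakly to the right of $v_k$), so $d_{v_i,v_j}(v_1,v_k)=d(v_1,v_k)\ge w(v_k)$ by the node-weight constraint $w(v_k)\le d(v_1,v_k)$; the case $k\ge j$ is symmetric, using $d_{v_i,v_j}(v_k,v_N)=d(v_k,v_N)\ge w(v_k)$. The only genuine work is the middle case $i<k<j$, which I would settle by contradiction: assuming both distances are $<w(v_k)$ forces both corresponding shortest paths to use the shortcut, because their direct alternatives $d(v_1,v_k)$ and $d(v_k,v_N)$ are each $\ge w(v_k)$; hence $d_{v_i,v_j}(v_1,v_k)=d(v_1,v_i)+c(v_i,v_j)+d(v_j,v_k)$ and $d_{v_i,v_j}(v_k,v_N)=d(v_k,v_i)+c(v_i,v_j)+d(v_j,v_N)$. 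Summing these, using $d(v_j,v_k)+d(v_k,v_i)=d(v_i,v_j)$ (as $v_k$ lies between $v_i$ and $v_j$) and $d(v_1,v_i)+d(v_i,v_j)+d(v_j,v_N)=d(v_1,v_N)$, yields $2\,w(v_k)>d(v_1,v_N)+2\,c(v_i,v_j)\ge d(v_1,v_N)$, contradicting $2\,w(v_k)\le d(v_1,v_k)+d(v_k,v_N)=d(v_1,v_N)$, where this last step is exactly the constraint $w(v_k)\le\min\{d(v_1,v_k),d(v_k,v_N)\}$.

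I expect the middle case to be the main obstacle, and within it the one subtlety worth double-checking is that among the two ways of threading the shortcut, only the non-backtracking orientation is relevant: the alternative route that enters the shortcut at the far endpoint is never shorter than the direct path (its excess over the direct path equals $d(v_i,v_j)+c(v_i,v_j)\ge 0$), so it cannot be the unique minimizer and may be discarded, which is what licenses the two displayed equalities. Confirming this, and checking that the degenerate endpoint cases $k=i$, $k=j$ (and $i=j$) fall into the easy branches, are the routine verifications that complete the argument.
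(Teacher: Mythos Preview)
Your proposal is correct and follows essentially the same approach as the paper: both arguments rest on the two facts that $\diam(T_k)\le 2\,w(v_k)$ and that at least one of $d_{v_i,v_j}(v_1,v_k)$, $d_{v_i,v_j}(v_k,v_N)$ is bounded below by $w(v_k)$ (equivalently, at least one of these shortest paths avoids the shortcut), after which one plugs the witnessing pair into the definition of $D(i,j)$. Your write-up is in fact more explicit than the paper's—the case split on the position of $k$, the summing-to-contradiction in the middle case, and the orientation remark are all spelled out, whereas the paper compresses this into the single observation that not both shortest paths can use $(i,j)$.
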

\begin{proof}
Since $P$ is a diametral path of $T$, $d(v_1,v_k),d(v_k,v_N) \geq \diam(T_k)$. Furthermore, if $a$ and $b$ are the endvertices of a diametral path in $T_k$, then $d(a,v_k)+d(b,v_k)\geq \diam(T_k)$, i.e., $w(v_k) \geq \max\big\{d(a,v_k),d(b,v_k)\big\} \geq \diam(T_k)/2$. Observe that either the shortest path between $v_k$ and $v_1$ or the one between $v_k$ to $v_N$ does not pass through $(i,j)$ in $P+(i,j)$. W.l.o.g., assume that the shortest path between $v_k$ and $v_1$ in $P+(i,j)$ does not pass through $(i,j)$. As a consequence, the shortest path between $v$ and $v_1$ in $P+(i,j)$ does not pass through $(i,j)$. Furthermore, the cost of this path is at least $d(v_1,v_k)+d(v_k,v) \geq 2d(v_k,v) \geq \diam(T_k)$. The claim follows.
\end{proof}
The proof of the following lemma makes use of Lemma \ref{lemma:discard_diameter_appendedrees}.
\begin{lemma}\label{lemma:Dij}
The $\wdoap$ instance $\langle P, \delta, w, c\rangle$  induced by $\langle T, \delta, c\rangle$ and $P$ can be computed in $O(n)$ time. Moreover,  $\diam\big(T+(v_i,v_j)\big)=D(i,j)$, for every $1 \leq i < j \leq N$.
\end{lemma}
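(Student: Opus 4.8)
The plan is to handle the two assertions in turn. For the running-time bound, I would first compute a diametral path $P=(v_1,\dots,v_N)$ of $T$ in $O(n)$ time by the standard double-sweep method: a farthest-vertex search from an arbitrary vertex finds one endpoint $a$ of a diametral path, and a second search from $a$ finds the other endpoint and recovers the tree path $P$; in a tree this path is diametral and each search runs in $O(n)$. Deleting the edges of $P$ roots each hanging tree $T_i$ at its unique path vertex $v_i$, so every weight $w(v_i)=\max_{v\in V(T_i)}d(v_i,v)$ is just the height of the corresponding rooted subtree, and all of them are produced by a single postorder traversal in $O(n)$ time; the edge weights $\delta$ are inherited directly from $T$.

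For the identity $\diam(T+(v_i,v_j))=D(i,j)$ I would prove two inequalities, relying on two structural facts. First, since $T$ is a tree whose only added edge joins the path vertices $v_i,v_j$, any shortest path between two path vertices $v_k,v_h$ in $T+(v_i,v_j)$ coincides with a shortest path in $P+(v_i,v_j)$, because a detour using the shortcut runs along $P$ from $v_k$ to $v_i$ and from $v_j$ to $v_h$. Second, the only exit from a hanging subtree $T_k$ is its attachment vertex $v_k$, so $d(a,v_k)\le w(v_k)$ for every $a\in V(T_k)$, with equality attained by the farthest vertex of $T_k$ from $v_k$. For the lower bound I would take indices $k<h$ realizing the maximum in $D(i,j)$ and choose $a\in V(T_k)$, $b\in V(T_h)$ with $d(v_k,a)=w(v_k)$ and $d(v_h,b)=w(v_h)$; as $a$ and $b$ lie in different subtrees, every $a$--$b$ path must pass through both $v_k$ and $v_h$, giving $d_{v_i,v_j}(a,b)=w(v_k)+d_{v_i,v_j}(v_k,v_h)+w(v_h)=D(i,j)$, and hence $\diam(T+(v_i,v_j))\ge D(i,j)$.

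For the upper bound I would fix arbitrary vertices $a,b\in V(T)$ and let $v_k,v_h$ be the attachment vertices of the subtrees containing them. If $k\ne h$, then every $a$--$b$ path crosses $v_k$ and $v_h$, and bounding the two subtree segments by $w(v_k)$ and $w(v_h)$ gives $d_{v_i,v_j}(a,b)\le w(v_k)+d_{v_i,v_j}(v_k,v_h)+w(v_h)\le D(i,j)$. The delicate case is $k=h$, where $a$ and $b$ lie in the same subtree $T_k$: the shortcut cannot shorten their distance, so $d_{v_i,v_j}(a,b)=d_{T_k}(a,b)\le\diam(T_k)$, and I would invoke Lemma~\ref{lemma:discard_diameter_appendedrees} to bound $\diam(T_k)\le D(i,j)$. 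I expect this same-subtree case to be the main obstacle, since it is the only situation in which the augmented diameter is controlled not by a path-to-path distance captured directly by $D(i,j)$, but by the internal diameter of a subtree hanging off $P$, which is precisely the quantity that Lemma~\ref{lemma:discard_diameter_appendedrees} was set up to dominate.
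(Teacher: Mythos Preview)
Your proposal is correct and follows essentially the same approach as the paper's proof: both establish $D(i,j)\le\diam(T+(v_i,v_j))$ by picking farthest vertices in $T_k,T_h$ to realize the maximum term, and both establish the reverse inequality by routing arbitrary $a,b$ through their attachment vertices $v_k,v_h$ when $k\ne h$ and invoking Lemma~\ref{lemma:discard_diameter_appendedrees} when $k=h$. You are slightly more explicit about the $O(n)$ computation (double sweep, postorder heights) and about why $d_{T+e}(v_k,v_h)=d_{P+e}(v_k,v_h)$, but the substance is identical.
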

\begin{proof}
Concerning the time required to compute the $\wdoap$ instance $\langle P, \delta, w, c\rangle$, we can easily compute a diametral path $P=(v_1,\dots,v_N)$ and all the values $w(v_i)$, for every $i=1,\dots,N$, in $O(n)$ time. Therefore, the reduction takes $O(n)$ time.

Let $e=(v_i,v_j)$. We prove that $D(i,j) \leq \diam(T+e)$. Let $v_k$ and $v_h$, with $k < h$, be any two vertices of $P$. Let $u$ be a vertex of $T_k$ such that $d(u,v_k)=w(v_k)$. Similarly, let $v$ be a vertex of $T_h$ such that $d(v,v_h) = w(v_h)$. We have that $w(v_k)+d_{P+e}(v_k,v_h) + w(v_h) = d(u,v_k) + d_{T+e}(v_k, v_h) +  d(v_h,v) = d_{T+e}(u,v) \leq \diam(T + e)$. Therefore, $D(i,j) \leq \diam\big(T+e)$.

Now, we prove that $\diam(T+e) \leq D(i,j)$.
Let $u,v$ be any two vertices of $T$ such that $u \in V(T_k)$ and $v \in V(T_h)$, with $k \leq h$. Using Lemma~\ref{lemma:discard_diameter_appendedrees}, if $k=h$, then $\diam(T+e) \leq D(i,j)$. If $k < h$, then
\begin{align*}
d_{T+e}(u,v)	& \leq d(u, v_k) + d_{T+e}(v_k, v_h) + d(v_h, v)  \leq w(v_k) + d_{T+e}(v_k,v_h) + w(v_h)\\
				& = w(v_k)+d_{P+e}(v_k,v_h)+w(v_h) \leq D(i,j).
\end{align*}
This completes the proof.
\end{proof}

\subsection{Further simplifications}

In the rest of the paper, we show how to solve $\wdoap$ in $O(N\log N)$ time and linear space.
To avoid heavy notation, from now on we denote a vertex $v_i$ by using its associated index $i$. All the lemmas contained in this subsection are non-trivial generalizations of observations made in~\cite{DBLP:conf/icalp/GrosseGKSS15} for paths. We start proving a useful lemma.
\begin{lemma}\label{lemma:diameter_node_weighted_path}
Let $i,j$ be two indices such that $1\leq i < j \leq N$. Let $I=\{1\}\cup \{k \mid i < k \leq N\}$ and let $J=\{N\}\cup \{h \mid 1 \leq h < j\}$. We have that
$$
D(i,j) = \max_{k \in I, h \in J, k < h} \big\{ w(k) + d_{i,j}(k,h) + w(h) \big\}.
$$
\end{lemma}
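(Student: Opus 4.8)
# Proof Proposal for Lemma~\ref{lemma:diameter_node_weighted_path}

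The plan is to show that the maximum defining $D(i,j)$ is already attained by a pair $(k,h)$ lying in the restricted index sets $I \times J$, so that restricting the maximization does not decrease its value (the $\leq$ direction is immediate since we are taking a maximum over a subset). Concretely, I would start from an arbitrary pair $(k,h)$ with $k < h$ achieving (or approaching) the full maximum and argue that I can replace it by a pair in $I \times J$ without decreasing the objective $w(k) + d_{i,j}(k,h) + w(h)$. The key structural fact to exploit is that $I$ consists of index~$1$ together with all indices strictly greater than $i$, and symmetrically $J$ consists of index~$N$ together with all indices strictly less than $j$; so a pair fails to lie in $I \times J$ only when $k \leq i$ with $k \neq 1$, or $h \geq j$ with $h \neq N$.

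The main work is therefore a replacement argument handling the ``bad'' indices. Suppose $1 < k \leq i$. I would argue that replacing $k$ by the index $1$ can only increase the objective: intuitively, pushing the left endpoint all the way to $v_1$ lengthens the relevant distance. The formal justification should combine two ingredients. First, the node-weight constraint $w(k) \leq \min\{d(v_1,v_k), d(v_k,v_N)\}$ from the $\wdoap$ input, which lets me bound $w(k)$ by a genuine path distance. Second, the observation that for $k \leq i$ the shortcut $(i,j)$ does not help shorten the connection from $v_k$ leftward, so the contribution of $v_k$ can be absorbed into the distance from $v_1$. I would show $w(k) + d_{i,j}(k,h) \leq d_{i,j}(1,h) = w(1) + d_{i,j}(1,h)$, using $w(1)=0$ and the fact that the segment of $P$ from $v_1$ to $v_k$ contributes length $d(v_1,v_k) \geq w(k)$ which adds onto the $v_k$-to-$v_h$ distance. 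The symmetric case $j \leq h < N$ is handled by the mirror-image argument, replacing $h$ with $N$ and using $w(h) \leq d(v_h,v_N)$ together with $w(N)=0$.

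The hard part will be verifying that the replacement genuinely preserves the relation $k < h$ and does not accidentally push both endpoints to the same index, and more importantly handling the interaction when \emph{both} indices are bad simultaneously (say $1 < k \leq i$ and $j \leq h < N$); here I must check that replacing $k \to 1$ and $h \to N$ in sequence keeps each inequality valid, which should follow because the two replacements act on independent ``ends'' of the path and because $1 < N$ guarantees the ordering survives. A delicate point is that when the shortcut $(i,j)$ \emph{is} used on the $v_k$-to-$v_h$ shortest path, replacing $k$ by $1$ changes the distance by exactly the added prefix length $d(v_1,v_i) + d(v_i,v_k)$ versus $d(v_i,v_k)$, and I need the node-weight bound to dominate the difference correctly; writing out $d_{i,j}(1,h)$ explicitly as the minimum of the direct path and the shortcut-using path, and comparing termwise with $w(k) + d_{i,j}(k,h)$, should close the gap. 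Once every bad pair is rewritten into $I \times J$ without loss, the two maxima coincide and the lemma follows.
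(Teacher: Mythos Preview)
Your proposal is correct and follows essentially the same approach as the paper's proof: both argue that the trivial direction is $\alpha \le D(i,j)$ (maximum over a subset), and for the reverse direction both replace a ``bad'' left index $1<k\le i$ by $1$ using $w(k)\le d(1,k)$ together with $d_{i,j}(1,h)=d(1,k)+d_{i,j}(k,h)$, and symmetrically replace a bad right index by $N$. Your worry about the ``delicate point'' is unnecessary: for $k\le i$ every path from $v_1$ in $P+(i,j)$ must traverse $v_k$, so $d_{i,j}(1,h)=d(1,k)+d_{i,j}(k,h)$ holds as an equality regardless of whether the shortcut is used, and the termwise comparison goes through cleanly; the both-bad case is then handled, as you say, by applying the two replacements in sequence (ending at the pair $(1,N)\in I\times J$).
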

\begin{proof}
Let $\alpha=\max_{k \in I, h \in J, k < h} \big\{ w(k) + d_{i,j}(k,h) + w(r) \big\}$. Clearly, $D(i,j) \geq \alpha$. Now we show that $D(i,j) \leq \alpha$. Let $1 \leq k^* < h^* \leq N$ be such that $D(i,j)=w(k^*)+d_{i,j}(k^*,h^*)+w(h^*)$.
For any $k$, with $2 \leq k \leq i$, and for any $h$, with $k < h \leq N$, $w(k) + d_{i,j}(k,h) + w(h) \leq d(1,k) + d_{i,j}(k,h) + w(h) = d_{i,j}(1,h) + w(h) \leq \alpha$. Therefore, either $k^*=1$ or $k^* > i$, i.e., $k^* \in I$.
Similarly, for any $h$, with $j \leq h \leq N-1$, and for any $k$, with $1 \leq k < h$, $w(k) + d_{i,j}(k,h) + w(h) \leq w(k) + d_{i,j}(k,h) + d(h,N) = w(k) + d_{i,j}(k,N)+w(N)\leq \alpha$. Therefore, either $h^*=N$ or $h^* < j$, i.e., $h^* \in J$. The claim follows.
\end{proof}
As we will see in a short, Lemma~\ref{lemma:diameter_node_weighted_path} allows us to decompose the function $D(i, j)$ into four monotone parts. First of all, for every $i=1,\dots,N$, we define 
$$
\omega(i) := \max\big\{w(j)-d(i,j) \mid 1 \leq j \leq N\big\}.
$$
Observe that, for every $1 \leq i \leq j \leq N$,
\[
\omega(i) \leq \omega(j)+d(i,j). \tag{{\em node-triangle inequality}}
\]
 Furthermore, $\omega(i) \geq w(i)$, for every $1 \leq i \leq N$, which implies $\omega(1)=\omega(N)=0$. The following lemma establishes the time complexity needed to compute all the values $\omega(i)$.
\begin{lemma}\label{lemma:computation_omega}
All the values $\omega(i)$, with $1 \leq i \leq N$, can be computed in $O(N)$ time.
\end{lemma}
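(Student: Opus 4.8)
The plan is to compute all values $\omega(i)$ in a single linear-time sweep that propagates the maximand $w(j)-d(i,j)$ incrementally as $i$ varies, exploiting the path structure. The quantity $\omega(i)=\max_j\{w(j)-d(i,j)\}$ is, up to the $\pm d$ bookkeeping dictated by the path geometry, a one-dimensional max-plus convolution against the distance metric on the path, and on a path the distance decomposes additively. Concretely, I would split the maximization over $j$ into the contribution from the left ($j\le i$) and from the right ($j\ge i$), defining
\[
\omega^{-}(i):=\max_{1\le j\le i}\{w(j)-d(i,j)\},\qquad
\omega^{+}(i):=\max_{i\le j\le N}\{w(j)-d(i,j)\},
\]
so that $\omega(i)=\max\{\omega^{-}(i),\omega^{+}(i)\}$.

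The key observation is that each one-sided quantity satisfies a simple recurrence along the path. Since for $j\le i$ we have $d(i,j)=d(i-1,j)+\delta(i-1,i)$, it follows that
\[
\omega^{-}(i)=\max\bigl\{w(i),\,\omega^{-}(i-1)-\delta(i-1,i)\bigr\}.
\]
First I would precompute the prefix distances $d(1,i)$ in $O(N)$ time (one pass accumulating the edge weights $\delta$); then a single left-to-right scan applying the recurrence above computes every $\omega^{-}(i)$ in $O(1)$ amortized time per index, hence $O(N)$ overall. A symmetric right-to-left scan, using $d(i,j)=d(i+1,j)+\delta(i,i+1)$ for $j\ge i$, computes $\omega^{+}(i)=\max\{w(i),\,\omega^{+}(i+1)-\delta(i,i+1)\}$ in $O(N)$ time. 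Taking the pointwise maximum yields all $\omega(i)$, with base cases $\omega^{-}(1)=w(1)$ and $\omega^{+}(N)=w(N)$.

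The main thing to verify carefully is the correctness of the two recurrences, i.e., that the index achieving the one-sided maximum at $i$ either equals $i$ or is inherited from the maximizer at the neighbouring index after subtracting the intervening edge weight. This is where the additive decomposition of path distances does the real work: for $j<i$ the term $w(j)-d(i,j)$ equals $\bigl(w(j)-d(i-1,j)\bigr)-\delta(i-1,i)$, so the maximum over $j<i$ is exactly $\omega^{-}(i-1)-\delta(i-1,i)$, and adjoining the new index $j=i$ (contributing $w(i)$) gives the stated recurrence. Once this is established the running-time claim is immediate. I do not expect any serious obstacle here; the only subtlety is making sure the distances are accessed in $O(1)$ after the prefix-sum precomputation, so that no single step costs more than constant time.
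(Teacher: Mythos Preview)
Your proposal is correct and follows essentially the same approach as the paper: both exploit the additive decomposition of path distances to reduce the computation of all $\omega(i)$ to two linear sweeps with constant-time recurrences. The only cosmetic difference is that the paper first computes the right-side maxima $\omega'(i)=\omega^{+}(i)$ and then runs a left-to-right sweep combining them into $\omega(i)$, whereas you compute $\omega^{-}$ and $\omega^{+}$ independently and take their pointwise maximum; the underlying idea and running time are the same.
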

\begin{proof}
Observe that, for every $1 \leq i \leq N$, all the values
$$
\omega'(i)=\max\big\{w(j)-d(i,j) \mid i \leq j \leq N\big\}
$$
can be computed in $O(N)$ time by scanning all the vertices of $P$ from $N$ downto 1. Indeed, $\omega'(N)=w(N)$ and, for every $1 \leq i < N$, $\omega'(i)=\max\{w(i),\omega'(i+1)-\delta(i,i+1)\}$.
Observe that, for every $1 \leq i \leq N$, 
$$
\omega(i) = \max\{\omega'(j)-d(j,i) \mid 1 \leq j \leq i\}.
$$
As a consequence, all the values $\omega(i)$ can be computed from the values $\omega'(\cdot)$ by scanning all the vertices of $P$ from $1$ to $N$. Indeed, $\omega(1)=\omega'(1)$ and, for every $1 < i \leq N$, $\omega(i)=\max\{\omega'(i),\omega'(i-1)-\delta(i-1,i)\}$.
\end{proof}
For the rest of this section, unless stated otherwise, $i$ and $j$ are such that $1 \leq i < j \leq N$. The four functions used to decompose $D(i, j)$ are the following (see also Figure \ref{fig:functions})
\begin{figure}[ t]
	\centering
	\includegraphics[scale=0.8]{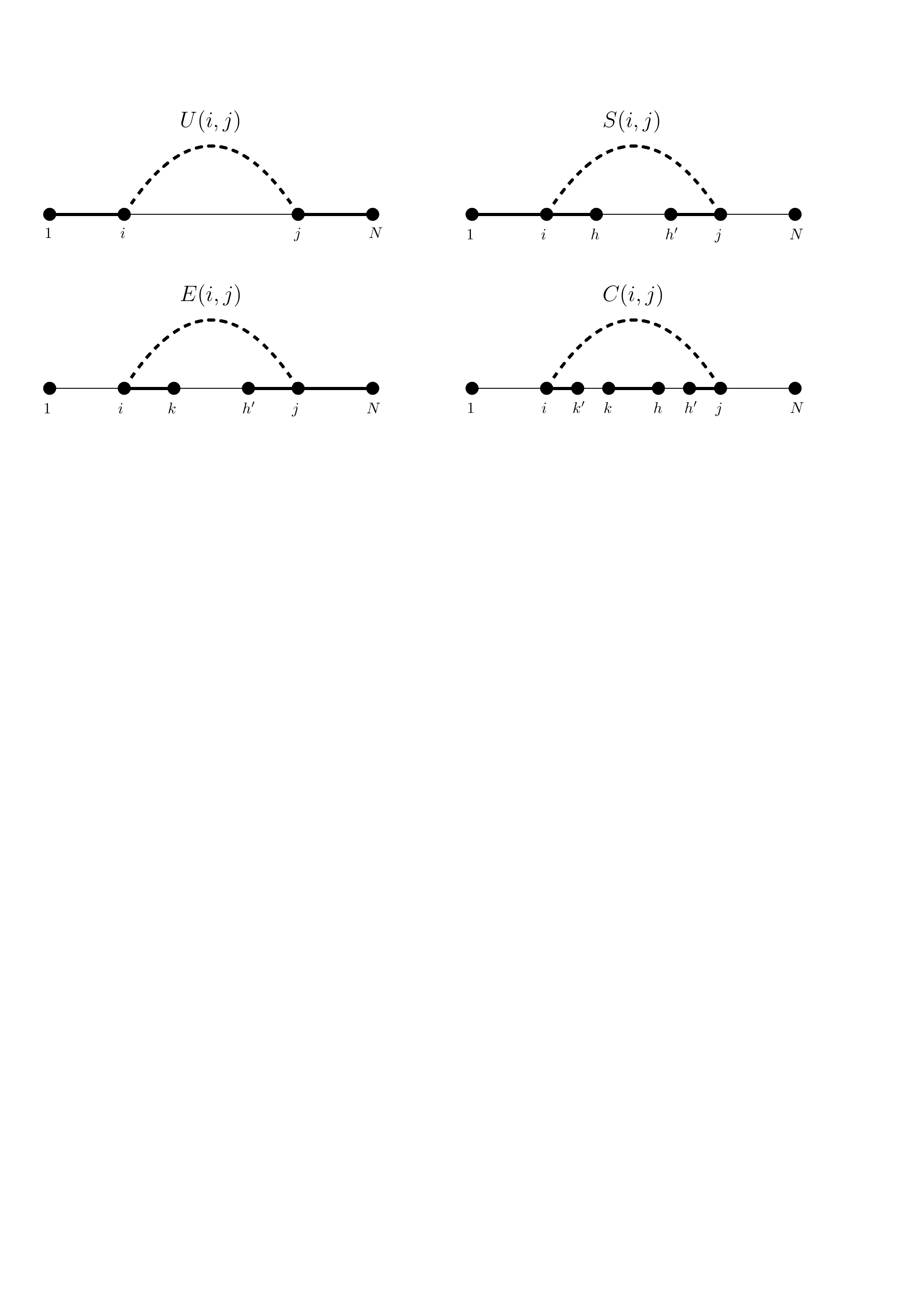}
	\caption{The four functions used to decompose $D(i,j)$. Node weights are omitted and shortest paths are highlighted in bold. $U(i,j)=d(1,i)+c(i,j)+d(j,N)$. $S(i,j)$ is the maximum among all the (node-weighed) distances between $1$ and all the vertices of the cycle. In our example the distance from $1$ to $h$ is $d(1,h)+\omega(h)$, while the distance from $1$ to $h'$ is $d(1,i)+c(i,j)+d(j,h')+\omega(h')$. $E(i,j)$ is the maximum among all the distances between $N$ and all the vertices of the cycle. In our example the distance from $N$ to $k$ is $d(k,N)+\omega(k)$, while the distance from $N$ to $k'$ is $d(j,N)+c(i,j)+d(i,k')+\omega(k')$. Finally, $C(i,j)$ is the maximum among all the distances between pair of vertices in the cycle. In our example the distance from $k$ to $h$ is $\omega(k)+d(k,h)+\omega(h)$, while the distance from $k'$ to $h'$ is $\omega(k')+d(i,k)+c(i,j)+d(j,h')+\omega(h')$.}
    \label{fig:functions}
\end{figure}

$$
U(i, j) := d(1, i) + c(i, j) + d(j, N);
$$
$$
S(i, j) :=  \max_{i \leq h < j} \Big(\omega(h) +\min\big\{d(1, h), d(1, i) + c(i, j) + d(h, j)\big\}\Big);
$$
$$
E(i, j) := \max_{i < k \leq j} \Big(\omega(k) +\min\big\{d(k, N), d(j, N) + c(i, j) + d(i, k)\big\}\Big);
$$
$$
C(i, j) := \max_{i < k < h < j} \Big(\omega(k) + \min\big\{d(k, h), d(i, k) + c(i, j) + d(h, j)\big\} + \omega(h)\Big).
$$
Using both the graph-triangle inequality and the node-triangle inequality, we can observe that all the four functions are monotonic. More precisely:
\begin{itemize}
\item $U(i,j+1) \leq U(i, j) \leq U(i+1, j)$;
\item $S(i-1, j) \leq S(i, j) \leq S(i, j+1)$;
\item $E(i,j+1) \leq E(i, j) \leq E(i-1, j)$;
\item $C(i+1, j) \leq C(i, j) \leq C(i, j + 1)$.
\end{itemize}
Moreover, we can prove the following lemma.
\begin{lemma}\label{lemma:Dij_splitted}
$D(i, j) = \max\big\{U(i, j), S(i, j), E(i, j), C(i, j)\big\}$.
\end{lemma}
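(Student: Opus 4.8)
The plan is to establish the two inequalities $D(i,j)\le\max\{U,S,E,C\}$ and $D(i,j)\ge\max\{U,S,E,C\}$ separately, starting from the restricted expression for $D(i,j)$ given by Lemma~\ref{lemma:diameter_node_weighted_path}, so that only indices $k\in I$ and $h\in J$ need to be considered. The underlying picture is that the shortcut turns the sub-path $i,i+1,\dots,j$ into a cycle, and the four functions account for the four families of admissible pairs it creates: the pair $(1,N)$ routed through the shortcut (captured by $U$), a pair from the left end $1$ to a cycle vertex ($S$), a pair from a cycle vertex to the right end $N$ ($E$), and a pair internal to the cycle ($C$); the relaxed weight $\omega\ge w$ is what lets a single cycle vertex stand in for a whole family of original vertices. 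I will repeatedly use that, for $k<h$, the distance $d_{i,j}(k,h)$ equals the minimum of the direct length $d(k,h)$ and the length of the unique detour through the shortcut, since a shortest path in $P+(i,j)$ traverses the shortcut at most once; these are exactly the $\min\{\cdot,\cdot\}$ terms appearing in $S$, $E$ and $C$.

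For $D(i,j)\le\max\{U,S,E,C\}$ I would fix a maximizing pair $(k^*,h^*)$ with $k^*\in I$, $h^*\in J$ (Lemma~\ref{lemma:diameter_node_weighted_path}), bound $w\le\omega$, and evaluate $d_{i,j}(k^*,h^*)$ by the formula above, splitting into four cases: $(k^*,h^*)=(1,N)$ gives $d_{i,j}(1,N)\le U$; $k^*=1$ with $h^*<j$ is dominated by $S$; $k^*>i$ with $h^*=N$ is dominated by $E$; and $i<k^*<h^*<j$ is dominated by $C$. The only point needing care is that $h^*$ (resp.\ $k^*$) may sit in a \emph{tail}, i.e.\ $h^*\le i$ (resp.\ $k^*\ge j$), rather than on the cycle. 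There I would invoke the node-triangle inequality, which in fact holds for every pair of indices, in the form $\omega(h^*)\le\omega(i)+d(i,h^*)$ (resp.\ $\omega(k^*)\le\omega(j)+d(j,k^*)$), together with $d(1,h^*)+d(h^*,i)=d(1,i)$ (resp.\ $d(j,k^*)+d(k^*,N)=d(j,N)$), to collapse the tail vertex onto the endpoint $i$ (resp.\ $j$); the resulting bound is precisely the value of $S$ at $h=i$ (resp.\ of $E$ at $k=j$).

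For the reverse inequality I would show $U,S,E,C\le D(i,j)$ individually. Since $U(i,j)$ is merely the length of the $1$-to-$N$ path through the shortcut, and since we may assume w.l.o.g.\ that $c(i,j)\le d(i,j)$ (a heavier shortcut is never used and capping it preserves the graph-metric property while leaving every $d_{i,j}$ unchanged), we get $U(i,j)\le d(1,N)$ and hence $U(i,j)=d_{i,j}(1,N)\le D(i,j)$. For $S$ I would take the maximizing cycle vertex $h$ and a witness $h'$ realizing $\omega(h)=w(h')-d(h,h')$; the triangle inequality in $P+(i,j)$ gives $d_{i,j}(1,h')\ge d_{i,j}(1,h)-d(h,h')$, whence $\omega(h)+d_{i,j}(1,h)\le w(1)+d_{i,j}(1,h')+w(h')\le D(i,j)$, and taking the maximum yields $S\le D$; the argument for $E$ is symmetric. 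For $C$ I would use witnesses $k',h'$ for both $\omega(k)$ and $\omega(h)$, obtaining $\omega(k)+d_{i,j}(k,h)+\omega(h)\le w(k')+d_{i,j}(k',h')+w(h')\le D(i,j)$ whenever $k'\neq h'$; in the degenerate case $k'=h'=z$ the bound reduces to $2w(z)$, which I would dominate by $D(i,j)$ exactly as in Lemma~\ref{lemma:discard_diameter_appendedrees}, using that one of the paths from $z$ to $1$ or to $N$ avoids the shortcut and that $w(z)\le\min\{d(1,z),d(z,N)\}$.

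The step I expect to be the main obstacle is the bookkeeping around $\omega$: in the forward direction I must absorb every tail vertex into a cycle-endpoint term without overshooting, which is where the node-triangle inequality is indispensable, and in the backward direction I must produce the correct witness vertices and verify that each relaxed $\omega$-term is realized by a genuine admissible pair of $D$ --- with the coincident-witness case in $C$ being the one place a separate argument is required. A secondary but essential subtlety is that $U$ is a raw \emph{path length} rather than a distance, so the equality truly depends on the normalization $c(i,j)\le d(i,j)$; without it one would only obtain $\min\{d(1,N),U(i,j)\}=d_{i,j}(1,N)\le D(i,j)$, and the stated equality with $\max$ could fail.
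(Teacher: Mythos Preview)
Your proposal is correct and follows essentially the same two-sided argument as the paper: restrict to $k\in I$, $h\in J$ via Lemma~\ref{lemma:diameter_node_weighted_path} and case-split for $D\le\max$, then use witnesses $k',h'$ realizing $\omega(k),\omega(h)$ to pull each of $U,S,E,C$ back to a genuine $w$-weighted pair for $\max\le D$. Your treatment is in fact more careful than the paper's in two places: you handle the tail cases $h^*\le i$ and $k^*\ge j$ explicitly via the node-triangle inequality (the paper's corresponding paragraph contains apparent typos), and you correctly flag that $U(i,j)$ is a path length rather than a distance, so the equality as stated requires the harmless normalization $c(i,j)\le d(i,j)$---a point the paper leaves implicit.
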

\begin{proof}
Let $k$ and $h$, with $1 \leq k < h \leq N$, two indices such that $D(i,j)=w(k)+d_{i,j}(k,h)+w(h)$. Using Lemma~\ref{lemma:diameter_node_weighted_path}, we have that either $k=1$ or $i < k$ as well as either $h = N$ or $h < j$.
Clearly, $D(i,j) \leq U(i,j)$ when $k=1$ and $h = N$, $D(i,j) \leq S(i,j)$ when $k=1$ and $h < j$, $D(i,j) \leq E(i,j)$ when $k > i$ and $h = N$, and $D(i,j) \leq C(i,j)$ when $i < k < h < j$. Therefore, it remains to prove that $D(i,j) \leq \max\big\{U(i, j), S(i, j), E(i, j), C(i, j)\big\}$ when $j > k$ or when $h < i$. We prove the claim for the case $j > k$ as the proof for the case $h < j$ is similar. Using the node-triangle inequality, we have that
$w(k)+d_{i,j}(k,h)+w(h)=w(k)+d(k,h)+w(h)\leq w(j)+d(j,k)+d(k,h)+d(h,N) \leq w(j)+d(j,N) \leq \omega(i)+ d(j,N) \leq E(i,j)$.

It remains to show that $D(i,j) \geq U(i,j), S(i,j), E(i,j), C(i,j)$. Let $1 \leq k < h \leq N$ be two vertices such that the value $\omega(h) + d_{i,j}(k,h) + \omega(k)$ is maximized. Let $k'$, with $1 \leq k' \leq N$, be such that $\omega(k) = w(k') - d_{P}(k, k')$. Using the triangle inequality, we have that $d_{i,j}(k,h) \leq d_{i,j}(k', h) + d(k, k')$, from which we derive that $d_{i,j}(k,h) - d(k, k') \leq d_{i,j}(k', h)$. Let $h'$, with $1 \leq h' \leq N$, be such that $\omega(h) = w(h') - d_{P}(h, h')$. Using the triangle inequality, we have that $d_{i,j}(k',h) \leq d_{i,j}(k', h') + d(h, h')$, from which we derive that $d_{i,j}(k',h) - d(h, h') \leq d_{i,j}(k', h')$. Therefore, by definition of $U(i,j), S(i,j), E(i,j)$, and $C(i,k)$, using Lemma~\ref{lemma:discard_diameter_appendedrees} and Lemma~\ref{lemma:Dij} in the last inequality of the following chain, we have that
\begin{align*}
\max\big\{U(i,j),S(i,j),E(i,j),C(i,j)\big\} & \leq \omega(k) + d_{i,j} (k,h) + \omega(h) \\
											& = w(k') - d(k, k') + d_{i,j}(k,h) + \omega(h)\\
 											& \leq w(k') + d_{i,j}(k', h) + \omega(h) \\
 											& = w(k') + d_{i,j}(k', h) -d(h, h') + w(h') \\
											& \leq w(k') + d_{i,j}(k, h) + w(h') \leq D(i,j).
\end{align*}
This completes the proof.
\end{proof}
The following lemma allows us to efficiently compute the values $U(i,j), S(i,j)$, and $E(i,j)$.
\begin{lemma}\label{lemma:time_complexity_real_functions}
After a $O(N)$-time precomputation phase, for every $1 \leq i < j \leq N$, $U(i,j)$ can be computed in $O(1)$ time, while both $S(i,j)$ and $E(i,j)$ can be computed in $O(\log N)$ time.
\end{lemma}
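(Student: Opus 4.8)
The plan is to reduce each evaluation to a constant number of prefix-sum lookups and static range-maximum queries, exploiting that every path distance can be written through the prefix sums $\pi(i):=d(1,i)=\sum_{t=1}^{i-1}\delta(v_t,v_{t+1})$, which are computable in $O(N)$ time and are strictly increasing since the edge weights are positive. Indeed, for $h\le j$ one has $d(h,j)=\pi(j)-\pi(h)$, $d(j,N)=\pi(N)-\pi(j)$, and $d(i,k)=\pi(k)-\pi(i)$ for $i\le k$. Combined with oracle access to $c(i,j)$, this immediately gives $U(i,j)=\pi(i)+c(i,j)+\pi(N)-\pi(j)$ in $O(1)$ time after the $O(N)$ prefix-sum precomputation.

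For $S(i,j)$ the idea is to \emph{decouple} the $h$-dependence from the $(i,j)$-dependence inside the minimum. Substituting the prefix-sum expressions, the two competing terms become $\omega(h)+d(1,h)=\omega(h)+\pi(h)$ and $\omega(h)+d(1,i)+c(i,j)+d(h,j)=\big(\omega(h)-\pi(h)\big)+\beta_{i,j}$, where $\beta_{i,j}:=\pi(i)+c(i,j)+\pi(j)$ depends only on $i,j$. Hence $S(i,j)=\max_{i\le h<j}\min\{P_+(h),\,P_-(h)+\beta_{i,j}\}$ with the \emph{static} arrays $P_+(h):=\omega(h)+\pi(h)$ and $P_-(h):=\omega(h)-\pi(h)$. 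The first term of the minimum wins exactly when $2\pi(h)\le\beta_{i,j}$; since $\pi$ is increasing, a single threshold index $h^*$ splits $[i,j-1]$ into two \emph{contiguous} subintervals, one on which the minimum equals $P_+(h)$ and one on which it equals $P_-(h)+\beta_{i,j}$. I would locate $h^*$ by binary search over $\pi$ in $O(\log N)$ time, and then return $S(i,j)$ as the larger of a range-maximum of $P_+$ over $[i,\min\{h^*,j-1\}]$ and $\beta_{i,j}$ plus a range-maximum of $P_-$ over $[\max\{i,h^*+1\},j-1]$ (empty ranges contributing $-\infty$).

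The function $E(i,j)$ is handled symmetrically: writing $d(k,N)=\pi(N)-\pi(k)$ and $d(i,k)=\pi(k)-\pi(i)$, the two terms of its minimum become $\omega(k)+d(k,N)=:Q(k)$ and $\omega(k)+\pi(k)+\gamma_{i,j}=P_+(k)+\gamma_{i,j}$ with $\gamma_{i,j}:=d(j,N)+c(i,j)-\pi(i)$ constant for fixed $i,j$, so $E(i,j)=\max_{i<k\le j}\min\{Q(k),\,P_+(k)+\gamma_{i,j}\}$. The crossover condition $\pi(N)-2\pi(k)\le\gamma_{i,j}$ is again monotone in $k$, giving a threshold $k^*$ found by binary search; $E(i,j)$ is then the larger of a range-maximum of $Q$ over the high-index part of $[i+1,j]$ and $\gamma_{i,j}$ plus a range-maximum of $P_+$ over its low-index part. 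It therefore suffices to build, in the precomputation phase, range-maximum structures over the three static arrays $P_+,P_-,Q$; a segment tree (or any $O(N)$-build, $O(\log N)$-query range-max structure) keeps preprocessing within $O(N)$ while answering each query in $O(\log N)$, matching the cost already spent on the binary search.

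The main obstacle is the decoupling step: one must verify that, once distances are replaced by prefix-sum differences, every term depending on the query indices $i,j$ factors out as an additive constant, leaving arrays in $h$ (resp. $k$) that are independent of the query. The second point requiring care is that the two branches of each minimum occupy contiguous index ranges — this is exactly what the monotonicity of $\pi$ (equivalently, positivity of the edge weights) guarantees — so that each branch is resolved by one range-maximum query; intersecting these with the admissible intervals $[i,j-1]$ and $[i+1,j]$ and handling the boundary and empty-range cases is then routine.
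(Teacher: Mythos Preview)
Your argument is correct, but it takes a different route from the paper's after the threshold step. Both proofs precompute prefix sums and locate by binary search the unique index where the two branches of the inner $\min$ cross. The divergence is in how the two resulting subinterval maxima are evaluated. You treat $P_+(h)=\omega(h)+\pi(h)$ and $P_-(h)=\omega(h)-\pi(h)$ as arbitrary arrays and answer each side with a static range-maximum query, which requires building an auxiliary range-max structure in the precomputation phase. The paper instead exploits the \emph{node-triangle inequality} $\omega(i)\le\omega(j)+d(i,j)$: this forces $P_+$ to be non-decreasing and $P_-$ to be non-increasing, so the maximum of $P_+$ over $[i,\bar h]$ is simply $P_+(\bar h)$ and the maximum of $P_-$ over $[\bar h+1,j-1]$ is $P_-(\bar h+1)$, with no data structure beyond the prefix sums. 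Your approach is thus more general (it would work for arbitrary node weights $\omega$, not just those satisfying the node-triangle inequality), while the paper's is more elementary and avoids the extra range-max machinery; both meet the stated $O(N)$-preprocessing / $O(\log N)$-query bounds.
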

\begin{proof}
We precompute all the distances $d(1,k)$ and $d(k,N)$, for every $1 \leq k \leq N$, in $O(N)$ time. This allows us to compute $U(i,j)$ in $O(1)$ time.

In the following we prove the claim only for $S(i,j)$, as the proof for $E(i,j)$ is similar. Let $i \leq \bar h \leq j$ be the maximum index such that $d(1,\bar h) \leq d(1,i)+c(i,j)+d(j,\bar h)$. Observe that $\bar h$ can be computed in $O(\log N)$ time using a binary search. We show that once the value of $\bar h$ is known, the value $S(i,j)$ can be computed in $O(1)$ time.
Using the node-triangle inequality, for every $i \leq h \leq \bar h$, we have that 
\begin{align*}
d_{i,j}(1,h)+\omega(h)	& = d(1,h) + \omega(h) \leq d(1,h) + d(h,\bar h) + \omega(\bar h) \\
									& = d(1,\bar h) + \omega(\bar h)  = d_{i,j}(1,\bar h) + \omega(\bar h).
\end{align*}
Using the node-triangle inequality, for every $\bar h < h \leq j$, we have that 
\begin{align*}
d_{i,j}(1,h)+\omega(h) 	& = d(1,i) + c(i,j) + d(h,j) + \omega(h) \\
						& \leq d(1,i) + c(i,j) + d(h,j) + d(\bar h+1,h) + \omega(\bar h+1)\\
						& = d(1,i) + c(i,j) + d(\bar h+1,j) + \omega(\bar h+1) = d_{i,j}(1,\bar h+1) + \omega(\bar h+1).
\end{align*}
Therefore, once $\bar h$ is known,
$$
S(i,j) = \max\big\{d(1,\bar h) + \omega(\bar h), d(1,i)+c(i,j)+d(\bar h+1,j) + \omega(\bar h+1)\big\}.
$$
can be computed in constant time.
\end{proof}
\hide{
The following lemma establish the time complexity of evaluating $C(i,j)$.
\begin{lemma}\label{lemma:time_complexity_Cij}
The value $C(i,j)$ can be computed in $O(N)$ time.
\end{lemma}
\begin{proof}
For every $k$, with $i < k < j$, we compute the maximum index $h_k$, with $k < h_k < j$, such that $d(k,h_k) \leq d(i,k)+c(i,j)+d(h,j)$. Observe that $h_{k} \leq h_{k+1}$. Therefore, all the indices $h_k$, with $i < k < j$, can be computed in $O(N)$ time. For every $i < h \leq h_k$, $\omega(h)+d(k,h) \leq \omega(h_k)+d(h,h_k)+d(k,h) = \omega(h_k)+d(k,h_k)$ by the node-triangle inequality. Similarly, for every $h_k < h < j$, $\omega(h)+d(i,k)+c(i,j)+d(h,j) \leq \omega(h_{k+1})+d(h,h_{k+1})+d(i_k)+c(i,j)+d(h,j)=\omega(h_{k+1})+d(i,k)+c(i,j)+d(h_{k+1},j)$ by the node-triangle inequality. Therefore, $C(i,j)$ is either equal to $\omega(k)+d(k,h_k)+\omega(h_k)$, for some $i < k < j$, or it is equal to $\omega(k)+d(i,k)+c(i,j)+d(h_{k+1},j)+\omega(h_{k+1})$, for some $i < k < j$ with $h_{k+1}<j$.
\end{proof}
The following result that establishes the time complexity of computing the diameter of a {\em unicyclic} graph is of independent interest.\footnote{An $n$-vertex {\em unicyclic} graph is a connected graph with $n$ edges, i.e., a graph containing only 1 cycle.}
\begin{theorem}\label{theorem:diameter_unicyclic_graph}
The diameter of an edge-weighted unicyclic graph of $n$ vertices can be computed in linear time.
\end{theorem}
\begin{proof}
A unicyclic graph $G$ can be decomposed into a spanning tree $T$ and one additional shortcut $(u,v)$ belonging to the unique cycle of $T$. We choose $(u,v)$ as the costliest edge of the cycle. Let $P=(v_1,\dots,v_N)$ be a diametral path of $T$. Clearly, $\diam(T)=\diam(P)$. If $(u,v)$ does not have both its endvertices in $P$, then $\diam(G)=\diam(T)$. If $u=v_i$ and $v=v_j$, for some $1 \leq i < j \leq N$, then from Lemma~\ref{lemma:Dij}, $\diam(G)=D(i,j)$ and the corresponding instance of $\wdoap$ can be computed in $O(n)$ time. Since all the $\omega(i)$'s can be computed in $O(n)$ time (see Lemma~\ref{lemma:computation_omega}), from Lemma~\ref{lemma:time_complexity_real_functions} and Lemma~\ref{lemma:property_of_C_function}, the value $D(i,j)$ can be computed in $O(N)=O(n)$ time. The claim follows.
\end{proof}
}
\section{The linear time algorithm for the search version of \wdoap}\label{section:search_problem}

In this section we design an $O(N)$ time algorithm (see Algorithm~\ref{algorithm:search_problem}) for the following search version of $\wdoap$:
\begin{quote}
\textit{For a given $\wdoap$ instance $\langle P, \delta, \omega, c\rangle$, where $c$ is a graph-metric and $\omega$ satisfies the node-triangle inequality, and a real value $\lambda > 0$, either find two indices $1 \leq i < j \leq N$ such that $D(i,j) \leq \lambda$, or return $\perp$ if such two indices do not exist.}
\end{quote}
In the following we assume that $d(1,N) > \lambda$, as otherwise $D(i,j) \leq \lambda$ for any two indices $i$ and $j$.
For the rest of this section, unless stated otherwise, $i$ and $j$ are two fixed indices such that $1 \leq i < j \leq N$. Let $i < \mu_i \leq N$ be the minimum index, or $N+1$ if such an index does not exists, such that $U(i,\mu_i) \leq \lambda$. Our algorithm computes the index $\mu_i$, for every $1 \leq i < N$. As $U(i,j) \geq U(i,j+1)$ for every $i < j < N$, the following lemma holds.
\begin{lemma}\label{lemma:u_function_evaluation}
$U(i,j) \leq \lambda$ iff $\mu_i \leq j$ (see also Figure \ref{fig:search_problem_example}).
\end{lemma}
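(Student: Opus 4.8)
The plan is to obtain this equivalence as an immediate consequence of the monotonicity of $U$ in its second argument, namely $U(i,j+1)\leq U(i,j)$, which was recorded right after the definitions of the four functions. For a fixed $i$, this monotonicity says that the map $j\mapsto U(i,j)$ is non-increasing on $\{i+1,\dots,N\}$, so the set of indices $j$ at which $U(i,j)\leq\lambda$ is an up-closed interval, i.e.\ a suffix of $\{i+1,\dots,N\}$. By construction $\mu_i$ is exactly the left endpoint of this suffix (with the convention $\mu_i=N+1$ when the suffix is empty). The statement to prove is then just that membership of $j$ in this suffix is equivalent to $j\geq\mu_i$, which I would establish by checking the two implications separately.

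For the direction $\mu_i\leq j\Rightarrow U(i,j)\leq\lambda$, I would argue as follows. If $\mu_i\leq j$ then in particular $\mu_i\leq N$, so $\mu_i$ is a genuine index and, by its definition, satisfies $U(i,\mu_i)\leq\lambda$. Applying the monotonicity step by step, $U(i,j)\leq U(i,j-1)\leq\cdots\leq U(i,\mu_i)$ (valid precisely because $\mu_i\leq j$), I conclude $U(i,j)\leq U(i,\mu_i)\leq\lambda$.

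For the converse $U(i,j)\leq\lambda\Rightarrow\mu_i\leq j$, observe that the hypothesis exhibits at least one index in $\{i+1,\dots,N\}$, namely $j$ itself, at which $U$ does not exceed $\lambda$. Hence the defining set for $\mu_i$ is non-empty, so $\mu_i$ is the minimum of that set and in particular $\mu_i\leq j$.

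I do not expect a genuine obstacle here; the only points requiring care are the correct handling of the sentinel value $N+1$ (which forces $\mu_i\leq j$ to fail for every $j\leq N$ when no valid index exists, matching the fact that $U(i,j)>\lambda$ throughout in that case) and applying the monotonicity in the right direction. Everything else follows directly from the definition of $\mu_i$ as the smallest \emph{crossing} index together with the already-established monotonicity of $U(i,\cdot)$.
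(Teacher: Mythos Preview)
Your proposal is correct and takes essentially the same approach as the paper: the paper simply states that the lemma follows from the monotonicity $U(i,j)\geq U(i,j+1)$, and you have spelled out exactly that argument, including the two directions and the sentinel case $\mu_i=N+1$. There is nothing to add.
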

\begin{figure}[ht]
	\centering
	\includegraphics[scale=0.8]{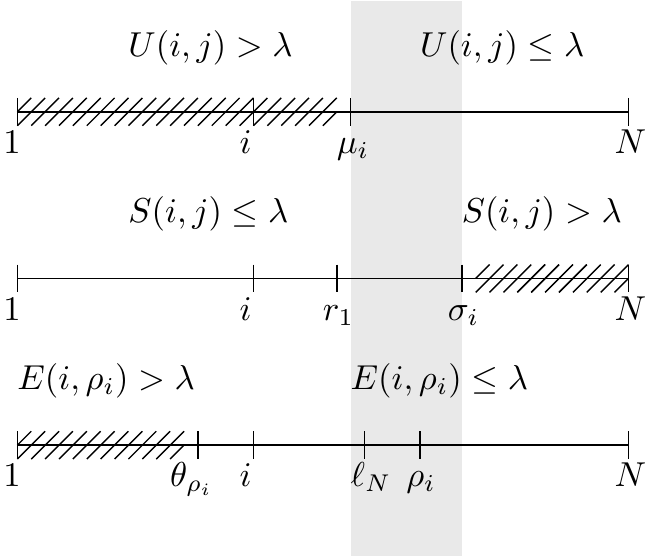}
	\caption{An example showing the properties satisfied by the functions $U(i,j), S(i,j),$ and $E(i,\rho_i)$. The example shows a case in which $\rho_i$ is defined. We observe that the search version of $\wdoap$ admits a feasible solution consisting of a shortcut adjacent to $i$ iff there exists an index $j$ belonging to the shaded area such that $C(i,j) \leq \lambda$. Furthermore, among all the possible choices, $\rho_i$ is the one that minimizes the value $C(i,j)$.}
    \label{fig:search_problem_example}
\end{figure}
Moreover, as $U(i,j) \leq U(i+1,j)$, we have that $\mu_i \leq \mu_{i+1}$. Therefore, our algorithm can compute all the indices $\mu_i$ in $O(N)$ time by scanning all the vertices of $P$ from $1$ to $N$.

\begin{algorithm}[ht]
	\SetKwInOut{Input}{Input}\SetKwInOut{Output}{Output}
	\Input{An instance $\langle P, \delta, \omega, c \rangle$ of $\wdoap$, and an integer $\lambda > 0$;}
	\Output{Two indices $i$ and $j$, with $1 \leq i < j \leq N$, such that $D(i,j) \leq \lambda$, or $\perp$ if such two indices do not exist.}
	{$r_{N} = N$; $\ell_N = 1$; $\mu_0=1$; $\sigma_{0} = N$; $\theta_N = 1$; $\rho_{0} = 1$}\;
	\For{$i$ from $N-1$ downto $1$}
	{
		$r_i = r_{i+1}$\;
		\lWhile{$r_i > i$ and $\omega(i)+d(i,r_i)+\omega(r_i) > \lambda$}{$r_i = r_i - 1$}
	}
	\lWhile{$\ell_N < N$ and $\omega(\ell_N)+d(\ell_N,N) > \lambda$}{$\ell_N = \ell_N + 1$}
	\For{$i$ from $1$ to $N$}
	{
		$\mu_i = \max\{i+1,\mu_{i-1}\}$\; 
		\lWhile{$i < N$ and $U(i,\mu_i) > \lambda$}{$\mu_i = \mu_i + 1$}
	}
	
	\For{$i$ from $1$ to $N$}
	{
		$\sigma_i = \sigma_{i-1}$\;
		\lWhile{$\sigma_i > i$ and $\bar S(i, \sigma_i) > \lambda$}{$\sigma_i = \sigma_i - 1$}
	}

	\For{$j$ from $N$ downto $1$}
	{
		$\theta_j = \theta_{j+1}$\;
		\lWhile{$\theta_j < j$ and $\bar E(\theta_j,j) > \lambda$}{$\theta_j = \theta_j + 1$}
	}
	\For{$i$ from $1$ to $N$}
	{
		{$\rho_i = \rho_{i-1}$\;}
		\lWhile{$\rho_i \leq N$ and $\theta_{\rho_i} > i$}{$\rho_i = \rho_i + 1$}
		\lIf{$\rho_i \leq N$ and $\mu_i \leq \rho_i \leq \sigma_i$}
		{$B[i] = 1$; {\bf else} $B[i] = 0$}
	}
	
	{$\Delta_{\min} = \infty$\;}	
	\For{$i$ from $1$ to $N$}
	{
		\lIf{$r_i < N$ and $\lambda-\omega(i)+d(i,r_i+1)-\omega(r_i+1) < \Delta_{\min}$}{$\Delta_{\min}=\lambda-\omega(i)+d(i,r_i+1)-\omega(r_i+1)$}
	}	
	
	\For{$i$ from $1$ to $N$}
	{
		\lIf{$B[i] = 1$ and $d(i,\rho_i)+c(i, \rho_i) < \Delta_{\min}$}{\Return $(i, \rho_i)$}
	}	
	
	\Return $\perp$\;
	\caption{The algorithm for the search version of $\wdoap$.}\label{algorithm:search_problem}
\end{algorithm}

We introduce some new notation useful to describe our algorithm. We define $r_i$ as the maximum index such that  $i < r_i \leq N$ and $\omega(i) + d(i,r_i)+\omega(r_i) \leq \lambda$. If such an index does not exist, we set $r_i=i$. Similarly, we define $\ell_N$ as the minimum index such that $1 \leq \ell_N < N$ and $\omega(\ell_N) + d(\ell_N,N) \leq \lambda$. If such an index does not exist, we set $\ell_N = N$.
Observe that if $j \leq r_i$, then, using the node-triangle inequality, $\omega(i) + d(i,j) + \omega(j) \leq \omega(i) + d(i,j) + d(j,r_i) + \omega(r_i) = \omega(i) + d(i,r_i) + \omega(r_i) \leq \lambda$. Therefore,
\begin{equation}\label{eq:r_i}
\omega(i) + d(i,j) + \omega(j) \leq \lambda \text{ iff }  j \leq r_{i}.
\end{equation}
Similarly, if $\ell_N \leq i$, then, using the node-triangle inequality, $\omega(i) + d(i,N) \leq \omega(\ell_N) + d(\ell_N, i) + d(i,N) = \omega(\ell_N) + d(\ell_N, N) \leq \lambda$. Therefore,
\begin{equation}\label{eq:ell_j}
\omega(i) + d(i,N) \leq \lambda \text{ iff } \ell_{N} \leq i.
\end{equation}
The algorithm computes all the indices $r_{i}$, with $1 \leq i < N$, and the index $\ell_N$. Since $\omega(i)\geq \omega(i+1)-d(i,i+1)$, we have that $r_{i} \leq r_{i+1}$. Therefore, all the $r_i$'s can be computed in $O(N)$ time by scanning all the vertices of $P$ in order from $1$ to $N$. Clearly, also $\ell_N$ can be computed in $O(N)$ time by scanning all the vertices of $P$ in order from $N$ downto $1$. As $d(1,N)>\lambda$, we have that $r_1 < N$ and $\ell_N > 1$. We define the following two functions
$$
\bar S(i,j) := d(1,i)+c(i,j)+d(r_{1}+1,j\big)+\omega(r_{1}+1)
$$
and
$$
\bar E(i,j) := d(j,N)+c(i,j)+d(i, \ell_{N}-1)+\omega(\ell_{N}-1).
$$
Observe that both $\bar S(i,j)$ and $\bar E(i,j)$ can be computed in constant time. Moreover, using the graph-triangle inequality, we have that 
\begin{itemize}
\item if $r_1 < j$, then $\bar S(i,j) \leq \bar S(i,j+1)$;
\item if $i < \ell_N$, then $\bar E(i,j) \leq \bar E(i+1,j)$.
\end{itemize}
As the following lemma shows, the values $\bar S(i,j)$ and $\bar E(i,j)$ can be used to understand whether $S(i,j)\leq \lambda$ and $E(i,j)\leq \lambda$, respectively.
\begin{lemma}\label{lemma:new_functions_to_evaluate_stubs}
If $U(i,j) \leq \lambda$, then:
\begin{itemize}
\item $S(i, j) \leq \lambda$ iff $i \leq r_1$ and $\bar S(i, j) \leq \lambda$;
\item $E(i, j) \leq \lambda$ iff $\ell_N \leq j$ and $\bar E(i, j) \leq \lambda$.
\end{itemize}
\end{lemma}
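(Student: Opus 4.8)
The plan is to prove the two bullets separately; since they are mirror images of each other under the reflection $k \mapsto N+1-k$ that swaps the roles of vertex $1$ and vertex $N$, of $d(1,\cdot)$ and $d(\cdot,N)$, and of $r_1$ and $\ell_N$, I will argue the statement for $S$ in full and then invoke this symmetry for $E$. Throughout I rely on three elementary facts: that $\omega(1)=0$, so by~(\ref{eq:r_i}) one has $d(1,h)+\omega(h)\le \lambda$ if and only if $h\le r_1$; that $\omega(k)\le d(k,N)$ for every $k$, which is immediate from the node-triangle inequality applied between $k$ and $N$ together with $\omega(N)=0$; and that $r_1<N$, which holds because $d(1,N)>\lambda$.

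For the direction ``$\Leftarrow$'', I assume $i\le r_1$ and $\bar S(i,j)\le\lambda$ and bound the generic term of $S(i,j)$ for each $h$ with $i\le h<j$. If $h\le r_1$, then the inner minimum is at most $d(1,h)$, so the term is at most $d(1,h)+\omega(h)\le\lambda$. If instead $h>r_1$, I discard the first branch of the minimum and combine the node-triangle inequality $\omega(h)\le\omega(r_1+1)+d(r_1+1,h)$ with the collinearity identity $d(r_1+1,h)+d(h,j)=d(r_1+1,j)$ (valid since $r_1+1\le h\le j$) to bound the term by $d(1,i)+c(i,j)+d(r_1+1,j)+\omega(r_1+1)=\bar S(i,j)\le\lambda$. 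Since every term is at most $\lambda$, we conclude $S(i,j)\le\lambda$.

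For the direction ``$\Rightarrow$'', I assume $S(i,j)\le\lambda$. Taking $h=i$ in the maximum, the shortcut branch $d(1,i)+c(i,j)+d(i,j)$ dominates $d(1,i)$, so the term equals $d(1,i)+\omega(i)$; this must be at most $\lambda$, which by~(\ref{eq:r_i}) yields $i\le r_1$. It remains to prove $\bar S(i,j)\le\lambda$, and I observe that $\bar S(i,j)$ is exactly the value of the shortcut branch of $S(i,j)$ at the index $h=r_1+1$. If $r_1+1<j$, then $r_1+1$ is a legal index in the maximum, so its term is at most $S(i,j)\le\lambda$; and since $d(1,r_1+1)+\omega(r_1+1)>\lambda$ (because $r_1+1>r_1$), the inner minimum at $r_1+1$ cannot be attained by its first branch, forcing the term to equal exactly $\bar S(i,j)$, which is therefore at most $\lambda$.

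The delicate point — and the \emph{only} place where the hypothesis $U(i,j)\le\lambda$ is genuinely used — is the boundary case $r_1+1\ge j$, in which $r_1+1$ is no longer an interior index of $S(i,j)$ and the argument above is unavailable. Here I fall back on $U$: using $\omega(r_1+1)\le d(r_1+1,N)$ and the identity $d(j,r_1+1)+d(r_1+1,N)=d(j,N)$ (legitimate because $j\le r_1+1\le N$, the upper bound coming from $r_1<N$), I obtain $d(r_1+1,j)+\omega(r_1+1)\le d(j,N)$, whence $\bar S(i,j)\le d(1,i)+c(i,j)+d(j,N)=U(i,j)\le\lambda$. This closes the $S$ case, and the $E$ case follows verbatim under the reflection noted above, with $r_1$, $1$, and the condition $i\le r_1$ replaced by $\ell_N$, $N$, and the condition $\ell_N\le j$, respectively.
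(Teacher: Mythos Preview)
Your proof is correct and follows essentially the same approach as the paper's: both establish $i\le r_1$ from the term at $h=i$, both pin down $\bar S(i,j)$ via the term at $h=r_1{+}1$ when that index is in range, and both fall back on $\bar S(i,j)\le U(i,j)$ in the boundary situation (the paper phrases this as the case $j\le r_1$, while you phrase it as $r_1{+}1\ge j$, which additionally covers the edge $r_1{+}1=j$ explicitly). Your organization by implication direction rather than by the relative position of $r_1$ and $j$ is a cosmetic difference only.
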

\begin{proof}
We prove the claim only for  $S(i,j)$ as the proof for $E(i,j)$ is similar. 
If $r_1 < i$, then $S(i,j) \geq d(1,i)+\omega(i) \geq d(1,i)+\omega(r_1+1,i)-d(r_1+1,i) = d(1,r_1+1)+\omega(r_1+1) > \lambda$. Therefore, if $S(i,j) \leq \lambda$, then $i \leq r_1$. We break the proof into two cases.

The first case occurs when $j \leq r_1 < N$. $S(i,j)\leq \lambda$ from~(\ref{eq:r_i}), while $\bar S(i,j) = d(1,i)+c(i,j)+d(r_{1}+1,j\big)+ d(r_1+1,N)-d(r_1+1,N)+\omega(r_{1}+1) \leq d(1,i)+c(i,j)+d(r_{1}+1,j\big)+d(r_1+1,N) = U(i,j) \leq \lambda$, and the claim follows.

The second case occurs when $i \leq r_1 < j$. As a consequence $r_1+1 \leq j$.
and  $\bar S(i,j) \leq S(i,j)$ by definition. Therefore, if $S(i,j) \leq \lambda$, so is $\bar S(i,j)$.
If $\bar S(i,j) \leq \lambda$, then let $h$, with $i \leq h \leq j$, be the index such that $S(i,j) = \omega(h)+ \min\big\{d(1, h), d(1,i) + c(i,j) + d(h,j)\big\}$. If $h \leq r_1$, then~(\ref{eq:r_i}) implies that $S(i,j)\leq \lambda$. If $r_1 < h < j$, then
\begin{align*}
S(i,j) 	& = d(1,i) + c(i,j) + d(h,j) + \omega(h) \\
		& = d(1,i) + c(i,j) + d(r_{1}+1,j) + \omega(h) - d(h,r_{1}+1)\\
		& \leq d(1,i) + c(i,j) + d(r_{1}+1,j) + \omega(r_{1}+1) = \bar S(i,j) \leq \lambda.
\end{align*}
The claim follows.
\end{proof}

Let $i < \sigma_i \leq N$ be the maximum index, or $i$ if such an index does not exist, such that $\bar S(i,\sigma_i) \leq \lambda$. Analogously, let $1 \leq \theta_j < j$ be the minimum index, or $j$ if such an index does not exist, such that $\bar E(\theta_j,j) \leq \lambda$. Our algorithm computes all the indices $\sigma_i$, with $1 \leq i < N$, and the indices $\theta_j$, with $1 < j \leq N$. By the graph-triangle inequality, $\bar S(i,j) \leq \bar S(i+1,j)$ as well as $\bar E(i,j)\leq \bar E(i,j-1)$. As a consequence, $\sigma_{i+1} \leq \sigma_i$ and $\theta_{j-1} \geq \theta_{j}$. Therefore, all the $\sigma_i$'s can be computed in $O(N)$ time by scanning all the vertices of $P$ in order from $1$ to $N$. Similarly, all the $\theta_j$'s can be computed in $O(N)$ time by scanning all the vertices of $P$ in order from $N$ downto $1$. The following lemma holds.
\begin{lemma}\label{lemma:new_functions_to_evaluate_stubs_bis}
If $U(i,j) \leq \lambda$, then:
\begin{itemize}
\item $S(i,j) \leq \lambda$ iff $i \leq r_1$ and $j \leq \sigma_i$ (see also Figure \ref{fig:search_problem_example}); 
\item $E(i,j) \leq \lambda$ iff $\ell_N \leq j$ and $\theta_j \leq i$ (see also Figure \ref{fig:search_problem_example}).
\end{itemize}
\end{lemma}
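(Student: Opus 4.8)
The plan is to reduce everything to Lemma~\ref{lemma:new_functions_to_evaluate_stubs}, which (under the standing assumption $U(i,j)\le\lambda$) already rewrites the test $S(i,j)\le\lambda$ as the conjunction ``$i\le r_1$ and $\bar S(i,j)\le\lambda$'', and the test $E(i,j)\le\lambda$ as ``$\ell_N\le j$ and $\bar E(i,j)\le\lambda$''. Since the conditions $i\le r_1$ and $\ell_N\le j$ appear verbatim in the two bullets to be proved, the entire statement collapses to the two equivalences
\[
\bar S(i,j)\le\lambda \iff j\le\sigma_i, \qquad \bar E(i,j)\le\lambda \iff \theta_j\le i,
\]
to be established for $1\le i<j\le N$ with $U(i,j)\le\lambda$. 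I would prove the first equivalence in full and then obtain the second by the mirror argument under the reflection that swaps vertex $1$ with vertex $N$ (which sends $\bar S\mapsto\bar E$, $r_1\mapsto\ell_N$, the maximum threshold $\sigma_i$ to the minimum threshold $\theta_j$, and the monotonicity of $\bar S$ in its second argument to that of $\bar E$ in its first).

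For the forward direction of the $\bar S$-equivalence I would just invoke the definition of $\sigma_i$: it is the \emph{maximum} index exceeding $i$ at which $\bar S(i,\cdot)$ stays $\le\lambda$. Hence any $j>i$ with $\bar S(i,j)\le\lambda$ automatically satisfies $j\le\sigma_i$, and the degenerate value $\sigma_i=i$ does not arise here because such a $j$ witnesses a valid index. The backward direction is the part carrying the real content. Assuming $i<j\le\sigma_i$ forces $\sigma_i>i$, so by definition $\bar S(i,\sigma_i)\le\lambda$. I would then split on the position of $j$ relative to $r_1$.

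If $j>r_1$, every index in $j,j+1,\dots,\sigma_i$ exceeds $r_1$, so the monotonicity $\bar S(i,\cdot)\le\bar S(i,\cdot+1)$ valid above $r_1$ chains to give $\bar S(i,j)\le\bar S(i,\sigma_i)\le\lambda$. If instead $j\le r_1$, the monotonicity is unavailable and I would fall back on the direct estimate already used in the first case of the proof of Lemma~\ref{lemma:new_functions_to_evaluate_stubs}: from $\omega(r_1+1)\le d(r_1+1,N)$ (the node-triangle inequality together with $\omega(N)=0$) and $d(j,N)=d(j,r_1+1)+d(r_1+1,N)$ one gets $\bar S(i,j)\le U(i,j)\le\lambda$. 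Combining the two sub-cases closes the backward direction. The reflected statement for $\bar E$ runs identically, with the two sub-cases being $i<\ell_N$ (monotonicity chain $\bar E(i,j)\le\bar E(i-1,j)\le\dots\le\bar E(\theta_j,j)\le\lambda$) and $i\ge\ell_N$ (the direct estimate $\bar E(i,j)\le U(i,j)\le\lambda$, using $\omega(\ell_N-1)\le d(1,\ell_N-1)$).

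The main obstacle, and the only place requiring care, is precisely this boundary sub-case in the backward direction: the monotonicity of $\bar S$ (resp.\ $\bar E$) only holds past the threshold $r_1$ (resp.\ below $\ell_N$), so the naive ``chain up to $\sigma_i$'' argument fails exactly when $j\le r_1$ (resp.\ $i\ge\ell_N$), and there one must substitute the comparison with $U(i,j)$. Beyond this, the proof is bookkeeping: one must keep the orientations of the two thresholds consistent with ``maximum $\sigma_i$'' versus ``minimum $\theta_j$'' and match them to the corresponding monotonicity directions, after which everything follows from Lemma~\ref{lemma:new_functions_to_evaluate_stubs} and the defining properties of $\sigma_i$ and $\theta_j$.
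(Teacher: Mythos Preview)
Your proposal is correct and follows essentially the same route as the paper: both reduce to Lemma~\ref{lemma:new_functions_to_evaluate_stubs} and then establish $\bar S(i,j)\le\lambda\iff j\le\sigma_i$ by splitting on whether $j\le r_1$ (where the direct estimate $\bar S(i,j)\le U(i,j)\le\lambda$ applies) or $j>r_1$ (where the monotonicity of $\bar S$ in its second argument applies). The only cosmetic difference is that the paper organizes the argument by the position of $j$ first and handles both implications inside each case, whereas you separate the forward and backward directions and case-split only in the latter.
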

\begin{proof}
We prove the claim only for  $S(i,j)$ as the proof for $E(i,j)$ is similar. Using Lemma~\ref{lemma:new_functions_to_evaluate_stubs}, we have that $S(i, j) \leq \lambda$ iff $i \leq r_1$  and $\bar S(i, j) \leq \lambda$. Therefore, to prove the claim, it is enough to show that $\bar S(i, j) \leq \lambda$ iff $j \leq \sigma_i$. If $j \leq r_1 < N$, then  $\bar S(i,j) = d(1,i)+c(i,j)+d(r_{1}+1,j\big)+ d(r_1+1,N)-d(r_1+1,N)+\omega(r_{1}+1) \leq d(1,i)+c(i,j)+d(r_{1}+1,j\big)+d(r_1+1,N) = U(i,j) \leq \lambda$, and therefore $j \leq \sigma_i$.
Otherwise, if $r_1 < j$, then by definition of $\sigma_i$ and because $\bar S(i,j) \leq \bar S(i,j+1)$, we have that $\bar S(i,j)\leq \lambda$ iff $j \leq \sigma_i$. The claim follows.
\end{proof}

Let $\rho_i$ be the minimum index, or $\perp$ if such an index does not exist, such that $\mu_i \leq \rho_i \leq \sigma_i$ and $i \geq \theta_{\rho_i}$. The algorithm computes $\rho_i$, for every $i=1,\dots,N$.
Since $\mu_i \leq \mu_{i+1}$, $\sigma_{i+1} \leq \sigma_i$, and $\theta_{j-1} \geq \theta_j$, all the indices $\rho_i$ can be computed in $O(N)$ time. The following lemma holds.
\begin{lemma}\label{lemma:property_of_C_function}
Let $\langle P,\delta, \omega, c\rangle$ be an instance of $\wdoap$, where $c$ is a graph-metric and $\omega$ satisfies the node-triangle inequality, and let $\lambda > 0$. 
There exists an index $1 \leq i < N$, such that $\rho_i$ is defined and $C(i,\rho_i) \leq \lambda$ iff the search version of $\wdoap$ on input instance $\langle P, \delta, \omega, c, \lambda \rangle$ admits a feasible solution.
\end{lemma}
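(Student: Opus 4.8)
The plan is to prove the biconditional by establishing two directions, with the feasibility of the search version characterized through Lemma~\ref{lemma:Dij_splitted}, which decomposes $D(i,j)$ into the four monotone functions $U,S,E,C$. By that lemma, a pair $(i,j)$ is a feasible solution (i.e., $D(i,j)\leq\lambda$) if and only if all four of $U(i,j),S(i,j),E(i,j),C(i,j)$ are at most $\lambda$ simultaneously. The key idea is that $\rho_i$ has been designed precisely so that, for a fixed $i$, the index $\rho_i$ is the ``best candidate'' for $j$: it is the smallest index satisfying all three constraints coming from $U$, $S$, and $E$ at once, and being smallest makes the remaining function $C$ as small as possible. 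Thus I want to show that if any feasible $(i,j)$ exists, then $(i,\rho_i)$ is also feasible, and conversely that feasibility of $(i,\rho_i)$ immediately yields a solution to the search version.

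First I would handle the easy direction: if $\rho_i$ is defined and $C(i,\rho_i)\leq\lambda$ for some $i$, then I must verify that $D(i,\rho_i)\leq\lambda$, which by Lemma~\ref{lemma:Dij_splitted} reduces to checking the other three functions. By definition of $\rho_i$ we have $\mu_i\leq\rho_i$, so Lemma~\ref{lemma:u_function_evaluation} gives $U(i,\rho_i)\leq\lambda$. Having $U(i,\rho_i)\leq\lambda$ lets me invoke Lemma~\ref{lemma:new_functions_to_evaluate_stubs_bis}: since $\rho_i\leq\sigma_i$ (and I must also argue $i\leq r_1$, which follows because $\mu_i\leq\rho_i\leq\sigma_i$ forces $\sigma_i>i$, hence $\bar S(i,\sigma_i)\leq\lambda$ is attainable only when $i\leq r_1$), I get $S(i,\rho_i)\leq\lambda$; and since $\theta_{\rho_i}\leq i$ together with $\ell_N\leq\rho_i$ (which I should check follows from $\theta_{\rho_i}<\rho_i$), Lemma~\ref{lemma:new_functions_to_evaluate_stubs_bis} gives $E(i,\rho_i)\leq\lambda$. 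Combined with the hypothesis $C(i,\rho_i)\leq\lambda$, all four functions are bounded by $\lambda$, so $D(i,\rho_i)\leq\lambda$ and $(i,\rho_i)$ is a feasible solution.

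For the converse, suppose the search version admits a feasible solution $(i,j)$ with $D(i,j)\leq\lambda$. By Lemma~\ref{lemma:Dij_splitted} all four functions are at most $\lambda$ at $(i,j)$. From $U(i,j)\leq\lambda$ and Lemma~\ref{lemma:u_function_evaluation} we get $\mu_i\leq j$; from $S(i,j)\leq\lambda$ and Lemma~\ref{lemma:new_functions_to_evaluate_stubs_bis} we get $i\leq r_1$ and $j\leq\sigma_i$; from $E(i,j)\leq\lambda$ we get $\ell_N\leq j$ and $\theta_j\leq i$. These three facts together establish that $j$ itself satisfies $\mu_i\leq j\leq\sigma_i$ and $\theta_j\leq i$, so the set of indices qualifying in the definition of $\rho_i$ is nonempty; hence $\rho_i$ is defined and, being the minimum such index, satisfies $\rho_i\leq j$. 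It then remains to show $C(i,\rho_i)\leq\lambda$. Here I would use the monotonicity $C(i,j')\leq C(i,j)$ for $j'\leq j$ — exactly the monotonicity $C(i,j-1)\leq C(i,j)$ stated just before Lemma~\ref{lemma:Dij_splitted} — to conclude $C(i,\rho_i)\leq C(i,j)\leq\lambda$.

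The main obstacle is the converse direction's use of monotonicity of $C$: I need $\rho_i\leq j$ to push the bound through, and this requires confirming that $\rho_i$ really is bounded above by the feasible $j$ rather than merely being defined. The definition of $\rho_i$ as the \emph{minimum} index in the qualifying set, together with $j$ lying in that set, delivers $\rho_i\leq j$ cleanly, so the decreasing monotonicity $C(i,\rho_i)\leq C(i,j)$ applies. A secondary subtlety is that the applications of Lemma~\ref{lemma:new_functions_to_evaluate_stubs_bis} all require the hypothesis $U(i,\cdot)\leq\lambda$; in the forward direction this is guaranteed by $\mu_i\leq\rho_i$, and in the backward direction by the feasibility of $(i,j)$ itself, so the bookkeeping of which inequalities are available at each step is where I would be most careful.
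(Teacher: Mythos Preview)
Your proposal is correct and follows essentially the same route as the paper: both directions hinge on Lemma~\ref{lemma:u_function_evaluation} and Lemma~\ref{lemma:new_functions_to_evaluate_stubs_bis} to control $U,S,E$, and the converse uses that $j$ itself qualifies in the definition of $\rho_i$, so $\rho_i\le j$ by minimality and then the monotonicity $C(i,\cdot)$ finishes. Your extra care about the side conditions $i\le r_1$ and $\ell_N\le\rho_i$ (needed to invoke Lemma~\ref{lemma:new_functions_to_evaluate_stubs_bis}) is well placed and your sketched justifications for them are sound; the paper's own proof cites the lemmas more tersely without spelling these out.
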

\begin{proof}
$(\Rightarrow)$ If $\rho_i$ is defined, then, from Lemma~\ref{lemma:u_function_evaluation} and Lemma~\ref{lemma:new_functions_to_evaluate_stubs_bis}, $U(i,\rho_i),S(i,\rho_i),E(i,\rho_i) \leq \lambda$. Therefore, if $C(i,\rho_i) \leq \lambda$, then $D(i,\rho_i) \leq \lambda$.

\noindent $(\Leftarrow)$ Let $i$ and $j$, with $1 \leq i < j \leq N$, be two indices such that $D(i,j) \leq \lambda$.
Using Lemma~\ref{lemma:u_function_evaluation} and Lemma~\ref{lemma:new_functions_to_evaluate_stubs_bis}, we have that $\mu_i \leq j \leq \sigma_i$ and $i \geq \theta_j$. Since $j$ is a canditate index for the choice of $\rho_i$, $\rho_i \leq j$ by definition. Furthermore, using Lemma~\ref{lemma:u_function_evaluation} and Lemma~\ref{lemma:new_functions_to_evaluate_stubs_bis}, we have that $U(i,\rho_i),S(i,\rho_i),E(i,\rho_i) \leq \lambda$. Finally, as $C(i,j) \leq C(i,j+1)$, $C(i,\rho_i) \leq C(i,j) \leq D(i,j) \leq \lambda$. Therefore $D(i,\rho_i) \leq \lambda$. The claim follows.
\end{proof}
In the following we show how to check whether $C(i,\rho_i) \leq \lambda$ in constant time after an $O(N)$ time preprocessing.
For every $1 \leq i < N$ such that $r_{i} < N$, the algorithm computes 
$$
\Delta(i) = \lambda -\omega(i) + d(i,r_{i}+1) -\omega(r_{i}+1).
$$
Moreover, the algorithm computes $\Delta_{\min} = \min_{1 \leq i < N} \Delta(i)$. Finally, for every $i=1,\dots,N$  for which $\rho_i$ is defined, our algorithm checks whether $d(i,\rho_i)+c(i,\rho_i) \leq \Delta_{\min}$. If there exists $i$ such that $d(i,\rho_i)+c(i,\rho_i) \leq \Delta_{\min}$, then our algorithm returns $(i,\rho_i)$. If this is not the case, then our algorithm returns $\perp$ . The following lemma proves the correctness of our algorithm.
\begin{lemma}\label{lemma:decidability_lemma}
Let $\langle P,\delta, \omega, c\rangle$ be an instance of $\wdoap$, where $c$ is a graph-metric and $\omega$ satisfies the node-triangle inequality, and let $\lambda > 0$. 
The search version of $\wdoap$ on input instance $\langle P, \delta, \omega, c, \lambda \rangle$ admits a feasible solution iff there exists an index $1 \leq i < N$, such that $\rho_i$ is defined and $d(i, \rho_i) + c(i, \rho_i) \leq \Delta_{\min}$.
\end{lemma}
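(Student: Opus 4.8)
The plan is to derive the statement by combining Lemma~\ref{lemma:property_of_C_function} with a precise description of when $C(i,\rho_i)\le\lambda$. By Lemma~\ref{lemma:property_of_C_function} the search instance is feasible if and only if there is an index $i$ with $\rho_i$ defined and $C(i,\rho_i)\le\lambda$; moreover, whenever $\rho_i$ is defined we already have $U(i,\rho_i),S(i,\rho_i),E(i,\rho_i)\le\lambda$, so by Lemma~\ref{lemma:Dij_splitted} the condition $C(i,\rho_i)\le\lambda$ is equivalent to $D(i,\rho_i)\le\lambda$. Hence it suffices to prove, for every $i$ with $\rho_i$ defined, that $C(i,\rho_i)\le\lambda$ holds if and only if $d(i,\rho_i)+c(i,\rho_i)\le\Delta_{\min}$; quantifying over $i$ then yields the stated equivalence.

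The engine for both directions is a reformulation of $C$. For a pair $i<k<h<j$, the term $\omega(k)+d(k,h)+\omega(h)$ is $\le\lambda$ exactly when $h\le r_k$ by~(\ref{eq:r_i}); for the remaining \emph{binding} pairs (those with $h>r_k$) the cycle forces the detour through the shortcut, and substituting $d(i,k)+d(h,j)=d(i,j)-d(k,h)$ rewrites the constraint $\omega(k)+d(i,k)+c(i,j)+d(h,j)+\omega(h)\le\lambda$ as $d(i,j)+c(i,j)\le\lambda-\omega(k)+d(k,h)-\omega(h)$. Using the node-triangle inequality one checks that the right-hand side is non-decreasing in $h$, so over $h>r_k$ it is minimised at $h=r_k+1$, where it equals exactly $\Delta(k)$. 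Thus $C(i,j)\le\lambda$ iff $d(i,j)+c(i,j)\le\Delta(k)$ for every internal $k$ with $r_k+1<j$. The easy direction is now immediate: since $\Delta_{\min}\le\Delta(k)$ for all $k$, the hypothesis $d(i,\rho_i)+c(i,\rho_i)\le\Delta_{\min}$ satisfies every binding pair and gives $C(i,\rho_i)\le\lambda$.

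For the converse I would argue through the \emph{globally} hardest pair. Let $k_0$ attain $\Delta(k_0)=\Delta_{\min}$ (so $r_{k_0}<N$) and consider $(k_0,r_{k_0}+1)$. By definition of $r_{k_0}$ its direct weighted length $\omega(k_0)+d(k_0,r_{k_0}+1)+\omega(r_{k_0}+1)$ exceeds $\lambda$, whereas feasibility of $(i,\rho_i)$ together with Lemma~\ref{lemma:Dij_splitted} (which identifies $D(i,\rho_i)$ with $\max_{k<h}\{\omega(k)+d_{i,\rho_i}(k,h)+\omega(h)\}$) forces $\omega(k_0)+d_{i,\rho_i}(k_0,r_{k_0}+1)+\omega(r_{k_0}+1)\le\lambda$. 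Hence the shortcut strictly shortens this pair, so $d_{i,\rho_i}(k_0,r_{k_0}+1)$ realises one of the two routes through $(i,\rho_i)$, namely $d(k_0,i)+c(i,\rho_i)+d(\rho_i,r_{k_0}+1)$ or $d(k_0,\rho_i)+c(i,\rho_i)+d(i,r_{k_0}+1)$, and in either case this route is $\le\lambda-\omega(k_0)-\omega(r_{k_0}+1)$. I then bound $c(i,\rho_i)$ from above and absorb the remaining path terms with the ordinary triangle inequality on $P$ (e.g.\ $d(i,\rho_i)\le d(i,k_0)+d(k_0,r_{k_0}+1)+d(r_{k_0}+1,\rho_i)$ for the first route, and the analogous chained bound for the second), which collapses everything to $d(i,\rho_i)+c(i,\rho_i)\le\lambda-\omega(k_0)+d(k_0,r_{k_0}+1)-\omega(r_{k_0}+1)=\Delta_{\min}$.

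The crux I expect is exactly this converse: it is a priori surprising that the \emph{global} quantity $\Delta_{\min}$, rather than the cycle-local minimum $\min\{\Delta(k):i<k,\ r_k+1<\rho_i\}$, is the right threshold. The delicate point is that $(k_0,r_{k_0}+1)$ need not lie inside the cycle $[i,\rho_i]$, so one cannot simply read $\Delta(k_0)$ off the $C$-reformulation; one must instead use that \emph{any} feasible shortcut is obliged to repair this pair, and then verify that both admissible shortest routes through $(i,\rho_i)$ give the same bound after the triangle inequality, so that no case analysis on the relative position of the cycle and the pair is actually needed. With this in hand, the two implications combine through Lemma~\ref{lemma:property_of_C_function} to yield the claimed equivalence.
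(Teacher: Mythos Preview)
Your proposal is correct and follows essentially the same route as the paper. Both directions hinge on the same computation: for the ``easy'' direction you (and the paper) show directly that $d(i,\rho_i)+c(i,\rho_i)\le\Delta_{\min}\le\Delta(k)$ forces every binding pair $(k,h)$ with $h>r_k$ in the cycle to satisfy the shortcut constraint, hence $C(i,\rho_i)\le\lambda$; for the converse you (and the paper) pick the minimiser $k_0$ of $\Delta$, observe that the direct path between $k_0$ and $r_{k_0}+1$ exceeds $\lambda$, conclude that $D(i,\rho_i)\le\lambda$ forces the shortcut route to realise the distance, and then collapse the terms via the path-triangle inequality to obtain $d(i,\rho_i)+c(i,\rho_i)\le\Delta_{\min}$. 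Your reformulation of $C(i,j)\le\lambda$ as $d(i,j)+c(i,j)\le\Delta(k)$ for every relevant $k$ is a clean way to organise the first direction that the paper carries out by an equivalent chain of equalities.

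Two tiny cosmetic points. First, where you invoke Lemma~\ref{lemma:Dij_splitted} to identify $D(i,\rho_i)$ with $\max_{k<h}\{\omega(k)+d_{i,\rho_i}(k,h)+\omega(h)\}$, that is simply the \emph{definition} of $D$ for the instance $\langle P,\delta,\omega,c\rangle$; Lemma~\ref{lemma:Dij_splitted} is the decomposition into $U,S,E,C$. Second, of the ``two routes through $(i,\rho_i)$'' you list, only $d(k_0,i)+c(i,\rho_i)+d(\rho_i,r_{k_0}+1)$ is ever a simple path here; the other is a walk that dominates it. Your triangle-inequality argument is insensitive to which route you use, so this does not affect the proof, but you can safely drop the second route.
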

\begin{proof}
($\Rightarrow$) Let $i$ and $j$ be two indices, with $1 \leq i < j \leq N$, such that $D(i,j) \leq \lambda$. Let $1 \leq i_{\min} < N$ be the index such that $\Delta(i_{\min}) = \Delta_{\min}$ and let $r_{\min}=r_{i_{\min}}+1$. Since $\rho_i$ is defined and $C(i,\rho_i) \leq  \lambda$, using Lemma~\ref{lemma:property_of_C_function}, it follows that $\omega(i_{\min})+d(i_{\min},i)+c(i,\rho_i)+d(r_{\min},\rho_i)+\omega(r_{\min}) \leq \lambda$. Therefore,
\begin{align*}
d(i, \rho_i) & + c(i, \rho_i) \leq d(i_{\min},i)  + d(i_{\min},r_{\min}) + d(r_{\min},\rho_i) + c(i,\rho_i)\\
								&  = \Delta(i_{\min})-\lambda+\omega(i_{\min})+\omega(r_{\min})+d(i_{\min},i)+c(i,\rho_i)+d(r_{\min},\rho_i)\\				
								&  \leq \Delta_{\min}-\lambda + \lambda = \Delta_{\min}.
\end{align*}
($\Leftarrow$) We show that for every two indices $k$ and $h$, with $i < k < h < \rho_i$,  $$
\omega(k)+\min\big\{d(k,h),d(1,k)+c(i,\rho_i)+d(h,\rho_i)\big\}+\omega(h) \leq \lambda.
$$
Clearly, if $h \leq r_{k}$, then, from (\ref{eq:r_i}), $\omega(k) + d(k,h) + \omega(h) \leq \lambda$. Therefore, we assume that $h > r_k$. We have that
\begin{align*}
\omega(k)	& + d(i,k)+c(i,\rho_i)+d(h,\rho_i) + \omega(h)\\
	 		& = \omega(k) + d(i,k)+c(i,\rho_i)+d(h,\rho_i) + d(r_{k}+1,h) + \omega(h) - d(r_{k}+1,h)\\
 			& \leq \omega(k) + d(i,k) + c(i,\rho_i) + d(h,\rho_i) + d(r_{k}+1,h) + \omega(r_{k}+1)\\
 			& = \omega(k) + d(i,\rho_i) + c(i,\rho_i) - d(k,r_{k}+1) + \omega(r_{k}+1)\\
	 		& = \omega(k) + d(i,\rho_i) + c(i,\rho_i) - \Delta(k) - \omega(k) - \omega(r_{k}+1) + \lambda + \omega(r_{k}+1)\\
 			& \leq \Delta_{\min} - \Delta(k) + \lambda \leq \lambda. 
\end{align*}
As a consequence, $C(i,\rho_i) \leq \lambda$. The claim now follows from Lemma~\ref{lemma:property_of_C_function}.
\end{proof}
We can conclude this section with the following theorem.
\begin{theorem}
Let $\langle P,\delta, \omega, c\rangle$ be an instance of $\wdoap$, where $c$ is a graph-metric and $\omega$ satisfies the node-triangle inequality, and let $\lambda > 0$. 
The search version of $\wdoap$ on input instance $\langle P, \delta, \omega, c, \lambda\rangle$ can be solved in $O(N)$ time and space.
\end{theorem}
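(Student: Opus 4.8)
The plan is to observe that essentially all the work has already been carried out by the preceding lemmas, so that proving the theorem amounts to verifying that Algorithm~\ref{algorithm:search_problem} faithfully implements the characterization given by Lemma~\ref{lemma:decidability_lemma} and that each of its phases runs in linear time and space. First I would argue correctness, then I would account for the running time and the memory usage.

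For correctness, I would note that the algorithm computes, in order, the arrays $r_i$, $\ell_N$, $\mu_i$, $\sigma_i$, $\theta_j$, and finally $\rho_i$, together with the flags $B[\cdot]$ marking exactly the indices $i$ for which $\rho_i$ is defined and with the quantity $\Delta_{\min}$. By Lemma~\ref{lemma:decidability_lemma}, the search instance admits a feasible solution if and only if there is an index $1 \leq i < N$ with $\rho_i$ defined and $d(i,\rho_i)+c(i,\rho_i) \leq \Delta_{\min}$. The final loop scans the indices $i$ with $B[i]=1$ and returns $(i,\rho_i)$ precisely when this inequality holds, and returns $\perp$ otherwise. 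Moreover, whenever it returns $(i,\rho_i)$, the derivation inside the proof of Lemma~\ref{lemma:decidability_lemma}, combined with Lemma~\ref{lemma:property_of_C_function}, guarantees $C(i,\rho_i)\leq\lambda$ and hence $D(i,\rho_i)\leq\lambda$, so the returned pair is indeed feasible. Thus the output is correct.

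For the running time, I would check that each phase is an amortized linear scan. Precomputing prefix sums of the edge weights of $P$ (so that every path distance $d(k,h)$ is available in $O(1)$ time) and the values $\omega(\cdot)$ costs $O(N)$, the latter by Lemma~\ref{lemma:computation_omega}; this makes every evaluation of $U$, $\bar S$, $\bar E$, and every test of the form $\omega(i)+d(i,r_i)+\omega(r_i)\leq\lambda$ run in constant time. Each array is then filled by a single monotone pointer sweep, exploiting the already established monotonicities $r_i \leq r_{i+1}$, $\mu_i \leq \mu_{i+1}$, $\sigma_{i+1} \leq \sigma_i$, and $\theta_{j-1} \geq \theta_j$, together with the dependence of $\rho_i$ on the monotone sequences $\mu$, $\sigma$, $\theta$. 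In each case the inner \texttt{while} loop only advances a pointer that never moves backward across the whole sweep, so the total number of pointer steps is $O(N)$. The computation of the $\Delta(i)$'s and of $\Delta_{\min}$, as well as the final feasibility scan, are plain $O(N)$ loops. Since every array has $O(N)$ entries and no auxiliary structure of larger size is used, the space is $O(N)$ as well.

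The only point that requires care, and hence the main obstacle, is the amortized time analysis of the interleaved pointer computing $\rho_i$: one must confirm that the candidate index $\rho_i$, the bound $\sigma_i$, and the threshold $\theta_{\rho_i}$ all shift monotonically as $i$ increases, so that the nested \texttt{while} conditions are checked a total of $O(N)$ times overall rather than $O(N)$ times per $i$. This follows from the stated monotonicities $\mu_i \leq \mu_{i+1}$, $\sigma_{i+1}\leq\sigma_i$, and $\theta_{j-1}\geq\theta_j$, but stating the amortization cleanly is the one genuinely non-mechanical step; everything else reduces to a direct appeal to the lemmas already proved.
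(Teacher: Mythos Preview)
Your proposal is correct and follows essentially the same approach as the paper: both derive correctness directly from Lemma~\ref{lemma:decidability_lemma} and then argue that Algorithm~\ref{algorithm:search_problem} runs in $O(N)$ time and space via the monotone pointer sweeps established in the surrounding text. Your write-up is simply more explicit about the amortized analysis than the paper's two-line proof; one small slip is that you invoke Lemma~\ref{lemma:computation_omega} to compute the $\omega$ values, whereas in the theorem statement $\omega$ is already part of the input instance, but this does not affect the argument.
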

\begin{proof}
Correctness follows from Lemma~\ref{lemma:decidability_lemma}. Since Algorithm~\ref{algorithm:search_problem} computes the values $d(1,N)$ and $\Delta_{\min}$, as well as all the indices $\rho_i$, with $1 \leq i < N$, in $O(N)$ time and space, the claim follows.
\end{proof}

\section{The algorithm for \wdoap}\label{section:metric_trees}

In this section we design an efficient $O(N \log N)$ time and $O(N)$ space algorithm that finds an optimal solution for instances $\langle P, \delta, \omega, c \rangle$ of $\wdoap$, where $c$ is a graph-metric and $\omega$ satisfies the node-triangle inequality. In the rest of the paper we denote by $D^*$ the diameter of an optimal solution to the problem instance and, of course, we assume that $D^*$ is not known by the algorithm. For the rest of this section, unless stated otherwise, $i$ and $j$ are two fixed indices such that $1 \leq i < j \leq N$. Similarly to the notation already used in the previous section, we define $r_i$ as the maximum index such that $i < r_i \leq N$ and $\omega(i) + d(i,r_i)+\omega(r_i) \leq D^*$. If such an index does not exist, then $r_i=i$. Analogously, we define $\ell_N$ as the minimum index such that $1 \leq \ell_N < N$ and $\omega(\ell_N) + d(\ell_N,N) \leq D^*$. If such an index does not exist, then $\ell_N = N$.
Our algorithm consists of the following three phases:
\begin{enumerate}
\item a precomputation phase in which the algorithm computes all the indices $r_{i}$, with $1 \leq i < N$,  and the index $\ell_N$;
\item a search-space reduction phase in which the algorithm reduces the size of the solution search space from $\binom{N}{2}$ to $N-1$ candidates;
\item an optimal-solution selection phase in which the algorithm builds a data structure that is used to evaluate the leftover $N-1$ solutions in $O(\log N)$ time per solution.
\end{enumerate}
Each of the three phases requires $O(N \log N)$ time and $O(N)$ space; furthermore, they all make use of the linear time algorithm for the search version of $\wdoap$ black-box. In the following we assume that $d(1,N) > D^*$, as otherwise, any shorcut returned by our algorithm would be an optimal solution.

\subsection{The precomputation phase}


We perform a binary search over the indices from $1$ to $N$ and use the linear time algorithm for the search version of $\wdoap$ to compute the maximum index $\ell_N$ in $O(N \log N)$ time and $O(N)$ space. Indeed, when our binary search considers the index $k$ as a possible choice of $\ell_N$, it is enough to call the linear time algorithm for the search version of $\wdoap$ with parameter $\lambda=\omega(k)+d(k,N)$ and see whether the algorithm returns either a feasible solution or $\perp$. Due to the node-triangle inequality, in the former case we know that $\ell_N \leq k$, while in the latter case we know that $\ell_N > k$. 

Now we describe how to compute all the indices $r_i$. Because of the node-triangle inequality $r_{i} < N$ iff $i < \ell_N$. Therefore, we only have to describe how to compute $r_{i}$ for every $i < \ell_N$, since if $i \geq \ell_N$, then $r_i=N$. We use the linear time algorithm for the search version of $\wdoap$ and perform a binary search over the set of sorted values $\big\{\omega(i) + d(i,i+1)+\omega(i+1) \mid 1 \leq i < \ell_N\big\}$ to compute the largest value of the set that is less than or equal to $D^*$, if any. This allows us to compute, in $O(N \log N)$ time and $O(N)$ space, the set of all indices $i < \ell_N$ for which $r_i=i$. 
Now, for every index $i < \ell_N$ for which $r_i > i$, we set $a_i=i+1$ and $b_i=N$. Observe that $a_i \leq r_i \leq b_i$. Next, using a two-round binary search, we restrict all the intervals $[a_i, b_i]$'s by updating both $a_i$ and $b_i$ while maintaining the invariant property $a_i \leq r_i \leq b_i$ at the same time. 

Let $X$ be the set of indices $i$, with $1 \leq i < \ell_N$, for which $b_i \geq a_i + 2$. The first round of the binary search ends exactly when $X$ becomes empty. Each iteration of the first round works as follows. For every $i \in X$, the algorithm computes the median index $m_i= \big\lfloor (a_i+b_i)/{2} \big\rfloor$. Next the algorithm computes the {\em weighted} median of the $m_i$'s, say $m_\tau$, where the weight of $m_i$ is equal to $b_i-a_i$. Let
$$
X^+_\tau=\big\{i \in X \mid \omega(i)+d(i,m_i)+\omega(m_i) \geq \omega(\tau)+d(\tau,m_{\tau})+\omega(m_{\tau})\big\}
$$
and
$$
X^-_\tau=\big\{i \in X \mid \omega(i)+d(i,m_i)+\omega(m_i) \leq \omega(\tau)+d(\tau,m_{\tau})+\omega(m_{\tau})\big\}.
$$
Observe that $X=X^+_\tau \cup X^-_\tau$ and $\tau \in X^+, X^-$. 

Now we call the linear time algorithm for the search version of $\wdoap$ with parameter $\lambda = \omega(\tau)+d(\tau, m_\tau)+\omega(m_{\tau})$. If the algorithm finds two indices such that $D(i,j) \leq \lambda$, then we know that $D^* \leq \lambda$ and therefore, for every $i \in X^+_\tau$, we update $b_i$ by setting it equal to $m_i$. If the algorithm outputs $\perp$, then we know that $D^* > \lambda$ and therefore, for every $i \in X^-_\tau$, we update  $a_i$ by setting it equal to $m_i$. We observe that in either case, the invariant property $a_i \leq r_i \leq b_i$ is kept because of (\ref{eq:r_i}). An iteration of the first round of the binary search ends right after the removal of all the indices $i$ such that $b_i=a_i+1$ from $X$. Notice that, in the worst case, the overall sum of the intervals widths at the end of a single iteration is (almost) $3/4$ times the same value computed at the beginning of the iteration. This implies that the first round of the binary search ends after $O(\log N)$ iterations. Furthermore, the time complexity of each iteration is $O(N)$. Therefore, the overall time needed for the first round of the binary search is $O(N \log N)$.

The second round of the binary search works as follows. Because $a_i \leq r_i \leq b_i$ and $b_i \leq a_i+1$ for every $i < \ell_N$ such that $i < r_i$, we have that $r_{i}$ is equal to either $a_i$ or $b_i$. To understand whether either $r_i=a_i$ or $r_i=b_i$, we sort the (at most) $2N$ values 
$$
\Upsilon = \bigcup_{i < \ell_N,\, i < r_i} \big\{\omega(i)+d(i,a_i)+\omega(a_i),\omega(i)+d(i,b_i)+\omega(b_i)\big\}
$$
and use binary search, together with the linear time algorithm for the search version of $\wdoap$, to compute the two consecutive distinct values $D^+,D^- \in \Upsilon$ such that $D^- < D^* \leq D^+$ (if $D^-$ does not exist, then we assume it to be equal to $0$). Finally, we use the two values $D^+$ and $D^-$ to select either $a_i$ or $b_i$. More precisely, if $a_i=b_i$, then $r_i=a_i$. If $a_i \neq b_i$, then by the choice of $D^-$ and $D^+$, either $D^- < \omega(i)+d(i,a_i)+\omega(a_i) \leq D^+$ (i.e., $r_i=a_i$) or $D^- < \omega(i)+d(i,b_i)+\omega(b_i) \leq D^+$ (i.e., $r_i=b_i$). 
The time and space complexities of the second round of the binary search are $O(N \log N)$ and $O(N)$, respectively. We have proved the following lemma.

\begin{lemma}\label{lemma:precomputation_phase}
The precomputation phase takes $O(N \log N)$ time and $O(N)$ space.
\end{lemma}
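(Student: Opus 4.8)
The plan is to verify the $O(N\log N)$ time and $O(N)$ space bounds by accounting separately for the two quantities the phase produces: the single index $\ell_N$, and the $N-1$ indices $r_i$. Every piece invokes the linear-time algorithm for the search version of $\wdoap$ as a black box, so the whole argument reduces to bounding the number of such calls and the per-call overhead, and to checking that the auxiliary bookkeeping never exceeds $O(N)$ space.

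First I would handle $\ell_N$. Because $\omega$ satisfies the node-triangle inequality, the predicate $\omega(k)+d(k,N)\le D^*$ is monotone in $k$, so $\ell_N$ is found by an ordinary binary search over $\{1,\dots,N\}$: each probe at index $k$ runs the search algorithm with $\lambda=\omega(k)+d(k,N)$ and decides whether $\ell_N\le k$ or $\ell_N>k$ according to whether the call returns a feasible pair or $\perp$. This is $O(\log N)$ calls, each $O(N)$ time and $O(N)$ space, hence $O(N\log N)$ time and $O(N)$ space. The same monotonicity (via~(\ref{eq:r_i})) shows $r_i<N$ exactly when $i<\ell_N$, so only the indices $i<\ell_N$ remain; for these I would first sort the $O(N)$ values $\{\omega(i)+d(i,i+1)+\omega(i+1)\}$ and binary-search for the largest one that is at most $D^*$, again with $O(\log N)$ oracle calls, to pin down precisely the indices with $r_i=i$.

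Next comes the core: the two-round binary search that locates $r_i$ for the remaining indices, each starting from the interval $[a_i,b_i]=[i+1,N]$ with the invariant $a_i\le r_i\le b_i$ guaranteed by~(\ref{eq:r_i}). The essential point I would stress is the weighted-median trick in the first round: taking $m_\tau$ to be the median of the per-interval midpoints $m_i$ weighted by $b_i-a_i$, and then issuing a single oracle call with $\lambda=\omega(\tau)+d(\tau,m_\tau)+\omega(m_\tau)$, forces a constant fraction of the total interval width to vanish no matter which way the test resolves. Concretely, either $D^*\le\lambda$ and every $i\in X^+_\tau$ has its upper end contracted to $m_i$, or $D^*>\lambda$ and every $i\in X^-_\tau$ has its lower end advanced to $m_i$; since $\tau$ lies in both $X^+_\tau$ and $X^-_\tau$ and the weighted median balances the total weight, one checks that the aggregate width drops to at most roughly $3/4$ of its previous value per iteration. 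Starting from total width $O(N)$, this yields $O(\log N)$ iterations, each costing $O(N)$ (weighted-median selection plus one oracle call), for $O(N\log N)$ overall.

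Finally I would account for the second round, where every surviving interval has width at most one, so $r_i\in\{a_i,b_i\}$; sorting the at most $2N$ values in $\Upsilon$ and binary-searching for the consecutive pair with $D^-<D^*\le D^+$ costs $O(\log N)$ oracle calls and $O(N)$ space, after which each $r_i$ is selected by one comparison against $D^-$ and $D^+$. Summing the four ingredients gives $O(N\log N)$ time and $O(N)$ space, which proves the lemma. I expect the only delicate step to be the geometric decay in the first round: one must argue that the weighted median guarantees the claimed constant-fraction reduction in total width simultaneously over all active intervals, rather than merely halving a single interval as in a naive per-index binary search, since it is precisely this simultaneity that keeps the call count at $O(\log N)$ instead of $O(N\log N)$.
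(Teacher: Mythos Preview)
Your proposal is correct and follows essentially the same approach as the paper: binary search for $\ell_N$, a sorted binary search to isolate the indices with $r_i=i$, then the two-round scheme using the weighted-median trick to contract all remaining intervals simultaneously with a constant-fraction decay in total width, followed by a final sort-and-binary-search over $\Upsilon$ to resolve each $r_i\in\{a_i,b_i\}$. You have also correctly flagged the one genuinely delicate point, namely that the weighted median is what guarantees the aggregate interval width shrinks geometrically, keeping the number of oracle calls at $O(\log N)$ rather than $O(N\log N)$.
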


\subsection{The search-space reduction phase}

In the search-space reduction phase the algorithm computes a set of $N-1$ candidates as optimal shortcut in $O(N \log N)$ time and $O(N)$ space. Let $f(i,j) := \max\big\{U(i,j),\bar E(i,j)\big\}$. Since both $U(i,j)$ and $\bar E(i,j)$ are monotonically non-increasing w.r.t. $j$,\footnote{Observe that $E(i,j)=\max\{\bar E(i,j), \omega(\ell_N)+d(\ell_N,N)\}$ because of the node-triangle inequality. However, since we know that $\omega(\ell_N)+d(\ell_N,N) \leq D^*$ by definition, we can know whether $E(i,j) \leq D^*$ by simply evaluating $\bar E(i,j)$.} $f(i,j)$ is monotonically non-increasing w.r.t. $j$. For every $1 \leq i < N$, our algorithm computes the minimum index $1 < \psi_i \leq N$, if any, such that $f\big(i,\psi_i\big) \leq D^*$. As both $S(i,j)$ and $C(i,j)$ are monotonically non-decreasing w.r.t. $j$, it follows that the set $\big\{\big(i,\psi_i\big) \mid 1 \leq i < N\big\}$ contains an optimal solution to the problem instance.

We compute all the indices $\psi_i$'s using a two-round binary search techique similar to the one we already used in the precomputation phase. First, we set $a_i=i+1$ and $b_i=N$, for every $1 \leq i < N$. Observe that $a_i \leq \psi_i \leq b_i$.
In the two-round binary search, we restrict all the intervals $[a_i, b_i]$'s by updating both $a_i$ and $b_i$ while maintaining the invariant property $a_i \leq \psi_i \leq b_i$ at the same time. 

Let $X$ be the set of indices $i$ for which $b_i \geq a_i + 2$. The first round of the binary search ends exactly when $X$ becomes empty. Each iteration of the first round works as follows. For every $i \in X$, the algorithm computes the median index $m_i= \big\lfloor (a_i+b_i)/{2} \big\rfloor$. Next the algorithm computes the {\em weighted} median of the $m_i$'s, say $m_\tau$, where the weight of $m_i$ is equal to $b_i-a_i$.
Let
$$
X^+_\tau=\big\{i \in X \mid f(i,m_i) \geq f(\tau,m_{\tau})\big\}
\,\,\,\,\,\,\,\,\,\,\text{ and }\,\,\,\,\,\,\,\,\,\,
X^-_\tau=\big\{i \in X \mid f(i,m_i) \leq f(\tau,m_{\tau})\big\}.
$$
Observe that $X=X^+_\tau \cup X^-_\tau$; moreover, $\tau \in X^+, X^-$. 

Now we call the linear time algorithm for the search version of $\wdoap$ with parameter $\lambda = f(\tau,m_{\tau})$. If the algorithm finds two indices such that $D(i,j) \leq \lambda$, then we know that $D^* \leq f(\tau,m_{\tau})$ and therefore, since $f(i,j) \leq f(i,j+1)$, for every $i \in X^+_\tau$, we update $b_i$ by setting it equal to $m_i$. If the algorithm outputs $\perp$, then we know that $D^* > f(\tau,m_{\tau})$ and therefore, by monotonicity of $f$, for every $i \in X^-_\tau$, we update  $a_i$ by setting it equal to $m_i$. We observe that in either case, the invariant property $a_i \leq \psi_i \leq b_i$ is maintained. An iteration of the first round of the binary search ends right after the removal of all the indices $i$ such that $b_i=a_i+1$ from $X$. Notice that, in the worst case, the overall sum of the intervals widths at the end of a single iteration is (almost) $3/4$ times the same value computed at the beginning of the iteration. This implies that the first round of the binary search ends after $O(\log N)$ iterations.  Furthermore, both the time and space complexities of each iteration is $O(N)$. Therefore, the overall time needed for the first round of the binary search is $O(N \log N)$.

The second round of the binary search works as follows. Because $a_i \leq \psi_i \leq b_i$ and $b_i \leq a_i+1$, $\psi_i$ is either equal to $a_i$ or to $b_i$. To understand whether either $\psi_i=a_i$ or $\psi_i=b_i$, we sort the (at most) $2N$ values 
$
\Upsilon=\bigcup_{1 \leq i < N} \big\{f(i,a_i),f(i,b_i)\big\}
$
and use binary search, together with the linear time algorithm for the search version of $\wdoap$, to compute the two consecutive distinct values $D^+, D^- \in \Upsilon$ such that $D^- < D^* \leq D^+$ (if $D^-$ does not exist, then we assume it to be equal to $0$). Finally, we use the two values $D^+$ and $D^-$ to select either $a_i$ or $b_i$. More precisely, if $a_i=b_i$, then $\psi_i=a_i$. If $a_i \neq b_i$, then by the choice of $D^-$ and $D^+$, either $D^- < f(i,a_i) \leq D^+$ (i.e., $\psi_i=a_i$) or $D^- < f(i,b_i) \leq D^+$ (i.e., $\psi_i=b_i$). The time and space complexities of the second round of the binary search are $O(N \log N)$ and $O(N)$, respectively. 
We have proved the following lemma.

\begin{lemma}\label{lemma:candidate_solutions}
The search-space reduction phase takes $O(N \log N)$ time and $O(N)$ space. Furthermore, there exists a shortcut $(i^*,\psi_{i^*})$ such that $D\big(i^*,\psi_{i^*}\big) = D^*$.
\end{lemma}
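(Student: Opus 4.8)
The statement has two parts, a correctness claim and a resource bound, and I would treat them in turn. For correctness, the plan is to start from any optimal shortcut $(i^*,j^*)$, i.e.\ $D(i^*,j^*)=D^*$, and argue that the candidate $(i^*,\psi_{i^*})$ is also optimal. By Lemma~\ref{lemma:Dij_splitted}, $D(i^*,j^*)=\max\{U(i^*,j^*),S(i^*,j^*),E(i^*,j^*),C(i^*,j^*)\}=D^*$, so in particular $U(i^*,j^*)\le D^*$ and $E(i^*,j^*)\le D^*$; since $\bar E\le E$ this gives $f(i^*,j^*)=\max\{U(i^*,j^*),\bar E(i^*,j^*)\}\le D^*$, whence $\psi_{i^*}$ is well defined. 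Because $f(i^*,\cdot)$ is non-increasing in its second argument and $\psi_{i^*}$ is the smallest index of $\{i^*{+}1,\dots,N\}$ for which $f\le D^*$, I obtain $i^*<\psi_{i^*}\le j^*$. It then remains to bound the four terms of $D(i^*,\psi_{i^*})$: the defining inequality $f(i^*,\psi_{i^*})\le D^*$ gives $U(i^*,\psi_{i^*})\le D^*$ and $\bar E(i^*,\psi_{i^*})\le D^*$, and combining the latter with $\omega(\ell_N)+d(\ell_N,N)\le D^*$ and the identity $E=\max\{\bar E,\omega(\ell_N)+d(\ell_N,N)\}$ yields $E(i^*,\psi_{i^*})\le D^*$; meanwhile $S$ and $C$ are non-decreasing in $j$, so $\psi_{i^*}\le j^*$ forces $S(i^*,\psi_{i^*})\le S(i^*,j^*)\le D^*$ and $C(i^*,\psi_{i^*})\le C(i^*,j^*)\le D^*$. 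Hence $D(i^*,\psi_{i^*})\le D^*$, and as $D^*$ is the optimum this is an equality.

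For the resource bound I would analyse the two-round binary search that computes the $\psi_i$. In the first round I maintain, for each $1\le i<N$, an interval $[a_i,b_i]$ with the invariant $a_i\le\psi_i\le b_i$, and in each iteration I compute the midpoints $m_i$ and then the weighted median $m_\tau$ of the active midpoints with weights $b_i-a_i$. A single call to the linear-time search oracle with $\lambda=f(\tau,m_\tau)$ decides $D^*\le\lambda$ versus $D^*>\lambda$; by the monotonicity of $f(i,\cdot)$ and the definition of $\psi_i$, this certifies, for every index in the appropriate half ($X^+_\tau$ or $X^-_\tau$), on which side of $m_i$ its $\psi_i$ must lie, so the corresponding endpoint can be moved to $m_i$ while preserving the invariant. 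The crucial quantitative point is that the weighted median guarantees that the resolved half carries at least half of the total residual width $\sum_i(b_i-a_i)$ and that each resolved interval is (at least) halved; hence every iteration destroys at least a quarter of the total width. Since that width starts at $O(N^2)$ and the round ends once all widths drop to $1$, there are $O(\log N)$ iterations, each costing $O(N)$ (midpoints, weighted median, one oracle call, updates), for $O(N\log N)$ time and $O(N)$ space.

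The second round then resolves each surviving unit interval: I sort the $O(N)$ values of $\Upsilon$ in $O(N\log N)$ time and binary-search over them, with $O(\log N)$ further oracle calls of cost $O(N)$ each, to locate the consecutive pair $D^-<D^*\le D^+$, after which each $\psi_i\in\{a_i,b_i\}$ is read off in $O(1)$. This keeps the whole phase within $O(N\log N)$ time and $O(N)$ space. I expect the main difficulty to be precisely the first round: proving that the weighted-median choice forces a constant-fraction shrinkage of the aggregate width under either oracle outcome, and, in parallel, checking that each of the two possible endpoint updates preserves $a_i\le\psi_i\le b_i$ --- this is where the monotonicity of $f$ and the relation between $E$ and $\bar E$ do the real work. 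The correctness half, by contrast, is a short monotonicity argument once the optimal pair $(i^*,j^*)$ is fixed.
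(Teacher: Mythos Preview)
Your proposal is correct and follows essentially the same approach as the paper: the paper's correctness argument is the one-line observation that $S$ and $C$ are non-decreasing in $j$ (so replacing an optimal $j^*$ by the minimal $\psi_{i^*}\le j^*$ with $f(i^*,\psi_{i^*})\le D^*$ can only help), and its complexity analysis is precisely the two-round weighted-median binary search you describe, with the same $3/4$-shrinkage bound and the same $O(N\log N)$ sorting-plus-oracle second round. One small caveat: the bald inequality ``$\bar E\le E$'' is not literally true for all $i,j$; the correct justification is Lemma~\ref{lemma:new_functions_to_evaluate_stubs} (or equivalently the footnoted identity $E=\max\{\bar E,\omega(\ell_N)+d(\ell_N,N)\}$), which you in fact invoke correctly a few lines later.
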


\subsection{The optimal-solution selection phase}

In the optimal-solution selection phase, we build a data structure in $O(N \log N)$ time and use it to evaluate the quality of the $N-1$ candidates $(1,\psi_1),\dots,(N-1,\psi_{N-1})$ in $O(\log N)$ time per candidate. For every $k=1,\dots,N$, we define 
$$
\phi_k(x):=\omega(k) + \max\big\{d(k,r_{k}) + \omega(r_{k}), x - d(k,r_{k}+1)+\omega(r_{k}+1)\big\}.
$$
Let ${\cal U}(x):=\max\big\{\phi_k(x) \mid 1 \leq k < \ell_N\big\}$ be the {\em upper envelope} of all the functions $\phi_k(x)$.
Observe that each $\phi_k(x)$ is itself the upper envelope of two linear functions. Therefore, ${\cal U}(x)$ is the upper envelope of at most $2N-2$ linear functions. In~\cite{DBLP:journals/ipl/Hershberger89} it is shown how to compute the upper envelope of a set of $O(N)$ linear functions in $O(N \log N)$ time and $O(N)$ space. In the same paper it is also shown how the value ${\cal U}(x)$ can be computed in $O(\log N)$ time, for any $x \in \mathbb{R}$. 

We denote by $x_i = d(i, \psi_i)+c(i, \psi_i)$ the overall weight of the edges of the unique cycle in $P+(i,\psi_i)$. For every $1 \leq i < N$, we compute the value
\begin{equation}\label{eq:last_formula_to_evaluate}
\eta_i=\max\Big\{U\big(i,\psi_i\big), S\big(i,\psi_i\big), E\big(i,\psi_i\big), {\cal U}(x_i)\Big\}.
\end{equation}
The algorithm computes the index $\alpha$ that minimizes $\eta_\alpha$ and returns the shortcut $(\alpha,\psi_\alpha)$ . 

\begin{lemma}\label{lemma:final_lemma}
For every $i$, with $1 \leq i < N$, $C(i,\psi_{i}) \leq {\cal U}(x_{i})$.
\end{lemma}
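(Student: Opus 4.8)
The plan is to show that the definition of $\phi_k$ packages exactly the two terms that arise when we bound a single summand of $C(i,\psi_i)$, and that evaluating the upper envelope ${\cal U}$ at $x_i$ dominates the contribution of every pair $k,h$ appearing in the maximum that defines $C(i,\psi_i)$. First I would recall from the definition that
$$
C(i,\psi_i)=\max_{i<k<h<\psi_i}\Big(\omega(k)+\min\big\{d(k,h),d(i,k)+c(i,\psi_i)+d(h,\psi_i)\big\}+\omega(h)\Big),
$$
so it suffices to fix an arbitrary pair $k,h$ with $i<k<h<\psi_i$ and prove that its contribution is at most $\phi_k(x_i)\le{\cal U}(x_i)$, where $x_i=d(i,\psi_i)+c(i,\psi_i)$. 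Since we are in the regime $i<k<\ell_N$, the index $r_k$ is well-defined and $r_k<N$, so $\phi_k$ genuinely participates in the envelope (this is why the envelope ranges over $1\le k<\ell_N$).

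The key step is the case analysis on whether $h\le r_k$ or $h>r_k$, mirroring the computation already carried out in the $(\Leftarrow)$ direction of Lemma~\ref{lemma:decidability_lemma}. In the first case, $h\le r_k$, the node-triangle inequality gives $\omega(k)+d(k,h)+\omega(h)\le\omega(k)+d(k,r_k)+\omega(r_k)$, which is exactly the first term inside the maximum defining $\phi_k$, so the contribution is at most $\phi_k(x_i)$. In the second case, $h>r_k$, I would bound the other branch of the inner minimum: using the node-triangle inequality in the form $\omega(h)\le\omega(r_k+1)+d(r_k+1,h)$ and rewriting $d(h,\psi_i)=d(r_k+1,\psi_i)-d(r_k+1,h)$ (valid because $r_k+1\le h<\psi_i$), the telescoping of the $d(r_k+1,h)$ terms yields
$$
\omega(k)+d(i,k)+c(i,\psi_i)+d(h,\psi_i)+\omega(h)\le \omega(k)+d(i,\psi_i)+c(i,\psi_i)-d(k,r_k+1)+\omega(r_k+1),
$$
and the right-hand side is precisely $\omega(k)+\big(x_i-d(k,r_k+1)+\omega(r_k+1)\big)$, i.e.\ the second term inside the maximum defining $\phi_k$. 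Hence again the contribution is at most $\phi_k(x_i)$.

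Combining the two cases, every pair $k,h$ contributes at most $\phi_k(x_i)$, so taking the maximum over all admissible $k,h$ gives $C(i,\psi_i)\le\max_{i<k<\psi_i}\phi_k(x_i)\le\max_{1\le k<\ell_N}\phi_k(x_i)={\cal U}(x_i)$, as claimed. The main obstacle I anticipate is purely bookkeeping: making sure that the index $r_k$ used here (defined with respect to the true optimum $D^*$, as in this section) lines up with the geometric picture of the cycle, and verifying that the additive rewriting $d(h,\psi_i)+d(r_k+1,h)=d(r_k+1,\psi_i)$ and the inequality $\omega(h)-d(r_k+1,h)\le\omega(r_k+1)$ are applied in the correct direction; both are consequences of the node-triangle inequality and of $k,h,\psi_i$ lying in order along the path, but one has to be careful that the $h>r_k$ branch is the one that activates the linear-in-$x_i$ piece of $\phi_k$, which is what ultimately makes ${\cal U}(x_i)$ the right quantity to evaluate.
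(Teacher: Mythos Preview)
Your proposal is correct and mirrors the paper's proof essentially step for step: fix a pair $(k,h)$ contributing to $C(i,\psi_i)$, split into the cases $h\le r_k$ and $h>r_k$, and bound each branch of the inner minimum by the corresponding piece of $\phi_k(x_i)$ via the node-triangle inequality, exactly as the paper does. The one loose end---your assertion that the relevant $k$ automatically satisfies $k<\ell_N$ so that $\phi_k$ participates in $\mathcal U$---is handled with the same informality in the paper's own argument, so your writeup matches it faithfully.
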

\begin{proof}
Let $k$ be any index such that $1 \leq k < \ell_N$. We prove the claim by showing that $C(i,\psi_{i}) \leq \phi_k(x_{i})$.  This is shown by proving that for every $h$, with $k < h < \psi_{i}$, $\omega(k)+d_{i,\psi_i}(k,h)+\omega(h) \leq \phi_k(x_{i})$. If $h \leq r_k$, then, using $\omega(h) \leq \omega(r_k)+d(h,r_k)$, we have that
$$
\omega(k)+d_{i,\psi_i}(k,h)+\omega(h) = \omega(k)+d(k,h)+\omega(h) \leq \omega(k) + d(k,r_k)+\omega(r_k) \leq \phi_k(x_{i}),
$$
If $r_k < h$, then, using $\omega(h) \leq \omega(r_k+1)+d(h,r_k+1)$, we have that
\begin{align*}
\omega(k)	& +d_{i,\psi_i}(k,h)+\omega(h) = \omega(k) +d(k,i)+c(i,\psi_{i})+d(\psi_{i},h)+\omega(h)\\
			& \leq \omega(k) + d(k,i)+c(i,\psi_{i})+d(\psi_{i},h)+d(h,r_k+1)+\omega(r_k+1)\\
			& \leq \omega(k) + x_{i}-d(k,r_k+1)+\omega(r_k+1)=\phi_k(x_{i}).
\end{align*}
This completes the proof.
\end{proof}
Let $i^*$ be the index such that $D(i^*,\psi_{i^*}) = D^*$, whose existence is guaranteed by Lemma~\ref{lemma:candidate_solutions}. The following lemma holds.
\begin{lemma}\label{lemma:final_lemma_bis}
${\cal U}(x_{i^*}) \leq D^*$.
\end{lemma}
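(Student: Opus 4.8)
The plan is to show that the upper envelope value $\mathcal{U}(x_{i^*})$ is bounded by $D^*$ by exhibiting, for each linear component $\phi_k$, that $\phi_k(x_{i^*}) \leq D^*$. Since $\mathcal{U}(x_{i^*}) = \max\{\phi_k(x_{i^*}) \mid 1 \leq k < \ell_N\}$, it suffices to prove $\phi_k(x_{i^*}) \leq D^*$ for every index $k$ with $1 \leq k < \ell_N$. Recalling the definition, $\phi_k(x_{i^*}) = \omega(k) + \max\{d(k,r_k)+\omega(r_k), x_{i^*} - d(k,r_k+1)+\omega(r_k+1)\}$, so the maximum splits into two terms, each of which I would bound separately against $D^*$.

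First I would handle the term $\omega(k) + d(k,r_k) + \omega(r_k)$. By the very definition of $r_k$ (the maximum index with $\omega(k)+d(k,r_k)+\omega(r_k) \leq D^*$), this quantity is at most $D^*$ immediately, with the boundary case $r_k = k$ contributing $2\omega(k) \leq D^*$ handled by the convention and Lemma~\ref{lemma:discard_diameter_appendedrees}. The harder term is $\omega(k) + x_{i^*} - d(k,r_k+1) + \omega(r_k+1)$. Here I would substitute $x_{i^*} = d(i^*,\psi_{i^*}) + c(i^*,\psi_{i^*})$ and rewrite the expression so that it resembles one of the four decomposition functions evaluated at the optimal pair $(i^*,\psi_{i^*})$. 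The goal is to recognize this quantity as a lower bound witnessed by a specific cycle-path distance in $P+(i^*,\psi_{i^*})$, namely the node-weighted distance between vertices $k$ and $r_k+1$ routed through the shortcut, which is one of the configurations captured by $C(i^*,\psi_{i^*})$ (or $U$, $S$, $E$ depending on where $k$ and $r_k+1$ fall relative to $i^*$ and $\psi_{i^*}$).

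The main obstacle will be the bookkeeping needed to position $k$ and $r_k+1$ correctly within the cycle structure and to invoke the right decomposition function. Specifically, I would need to verify that $\omega(k) + d(i^*,k) + c(i^*,\psi_{i^*}) + d(\psi_{i^*}, r_k+1) + \omega(r_k+1)$ — the natural detour-distance through the shortcut — dominates the target expression $\omega(k) + x_{i^*} - d(k,r_k+1) + \omega(r_k+1)$, which reduces to checking $d(i^*,k) + d(\psi_{i^*},r_k+1) \geq d(i^*,\psi_{i^*}) - d(k,r_k+1)$ after cancellation; this is a triangle-inequality-type argument using that $k < r_k+1$ lie along the path. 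Once this routing inequality is established, the quantity is at most $D(i^*,\psi_{i^*}) = D^*$ by Lemma~\ref{lemma:Dij_splitted} and the definition of the diameter function. I expect the care lies in treating the relative orderings of $k, r_k+1, i^*, \psi_{i^*}$ and confirming that in every configuration the corresponding term ($U$, $S$, $E$, or $C$) is the one that absorbs the bound.

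Combining both cases, each $\phi_k(x_{i^*}) \leq D^*$, and taking the maximum over $k$ yields $\mathcal{U}(x_{i^*}) \leq D^*$, which completes the proof.
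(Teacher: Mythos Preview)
Your proposal is correct and follows essentially the same approach as the paper: bound each $\phi_k(x_{i^*})$ separately, dispatch the first branch of the $\max$ by the definition of $r_k$, and for the second branch substitute $x_{i^*}=d(i^*,\psi_{i^*})+c(i^*,\psi_{i^*})$ and compare against the node-weighted detour distance $\omega(k)+d(k,i^*)+c(i^*,\psi_{i^*})+d(\psi_{i^*},r_k+1)+\omega(r_k+1)$, which is then bounded by $D(i^*,\psi_{i^*})=D^*$. The only cosmetic difference is that the paper avoids the case analysis over $U,S,E,C$ that you anticipate: it first argues (by contradiction with the definition of $r_k$) that $k<\psi_{i^*}$, and then observes directly that the detour expression equals $\omega(k)+d_{i^*,\psi_{i^*}}(k,r_k+1)+\omega(r_k+1)\le D^*$, so no positional case split is actually needed.
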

\begin{proof}
Let $k$ be any index such that $1 \leq k < \ell_N$. We show that $\phi_k(x_{i^*}) \leq D^*$. Clearly, if $\phi_k(x_{i^*}) = \omega(k) + d(k, r_k) + \omega(r_k)$, then $\phi_k(x_{i^*}) \leq D^*$ by definition of $r_{k}$ and the claim would trivially hold.  Therefore, we assume that $\phi_k(x_{i^*}) = \omega(k) + x_{i^*} - d(k,r_{k}+1) + \omega(r_{k}+1)$. 
Next, observe that $k < \psi_{i^*}$ as otherwise
$\omega(k) + d_{i^*,\psi_{i^*}}(k, r_k+1) + \omega(r_k+1) = \omega(k) + d(k, r_k+1) + \omega(r_k+1) > D^*$, thus implying that $D(i^*,\psi_{i^*}) > D^*$.
As a consequence,
\begin{align*}
\phi_k(x_{i^*}) & = \omega(k) + x_{i^*} - d(k,r_{k}+1) + \omega(r_{k}+1) \\
				& = \omega(k) + d(i^*,\psi_{i^*}) + c(i^*,\psi_{i^*}) - d(k,r_{k}+1) + \omega(r_{k}+1) \\
				& \leq \omega(k) + c(i^*,\psi_{i^*}) + d(\psi_{i^*}, r_{k}+1) +  \omega(r_{k}+1)\\
				& \leq \omega(k) + d(k, i^*) + c(i^*,\psi_{i^*}) + d(r_{k}+1,\psi_{i^*}) + \omega(r_{k}+1)\\
				& = \omega(k) + d_{i^*,\psi_{i^*}}(k,r_{k}+1) + \omega(r_{k}+1) \leq D^*.
\end{align*}
The claim follows.
\end{proof}
We can finally conclude this section by stating the main results of this paper.
\begin{theorem}\label{theorem:tree-metric_node_weighted_path}
$\wdoap$ can be solved in $O(N \log N)$ time and $O(N)$ space.
\end{theorem}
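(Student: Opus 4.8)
The plan is to assemble the three phases of the algorithm described above and verify that together they find an optimal shortcut in $O(N\log N)$ time and $O(N)$ space. The overall strategy is: first run the precomputation phase (Lemma~\ref{lemma:precomputation_phase}) to obtain all the indices $r_i$ and the index $\ell_N$ relative to the unknown optimal value $D^*$; then run the search-space reduction phase (Lemma~\ref{lemma:candidate_solutions}) to shrink the set of candidate shortcuts from $\binom{N}{2}$ down to the $N-1$ pairs $(i,\psi_i)$, with the guarantee that at least one of them, say $(i^*,\psi_{i^*})$, satisfies $D(i^*,\psi_{i^*})=D^*$; and finally run the optimal-solution selection phase, which builds Hershberger's upper-envelope data structure for the functions $\phi_k$ in $O(N\log N)$ time and queries it in $O(\log N)$ time per candidate.

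The correctness argument hinges on showing that the quantity $\eta_i$ defined in~(\ref{eq:last_formula_to_evaluate}) equals $D(i,\psi_i)$ for the optimal candidate, so that minimizing $\eta_\alpha$ over $\alpha$ returns an optimal shortcut. By Lemma~\ref{lemma:Dij_splitted} we have $D(i,\psi_i)=\max\{U(i,\psi_i),S(i,\psi_i),E(i,\psi_i),C(i,\psi_i)\}$, and $\eta_i$ differs from this only in replacing $C(i,\psi_i)$ with ${\cal U}(x_i)$. The two lemmas immediately preceding the statement bracket the discrepancy exactly where it matters: Lemma~\ref{lemma:final_lemma} gives $C(i,\psi_i)\le{\cal U}(x_i)$ for every $i$, so $\eta_i\ge D(i,\psi_i)$ always, which means $\min_\alpha\eta_\alpha\ge D^*$; and Lemma~\ref{lemma:final_lemma_bis} gives ${\cal U}(x_{i^*})\le D^*$ for the optimal index $i^*$, so that $\eta_{i^*}=\max\{U(i^*,\psi_{i^*}),S(i^*,\psi_{i^*}),E(i^*,\psi_{i^*}),{\cal U}(x_{i^*})\}\le D^*$ because each of the first three terms is at most $D(i^*,\psi_{i^*})=D^*$ by Lemma~\ref{lemma:Dij_splitted}. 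Combining the two bounds yields $\min_\alpha\eta_\alpha=\eta_{i^*}=D^*$, so the returned shortcut $(\alpha,\psi_\alpha)$ is optimal.

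For the running time I would simply add the three phases. Phases one and two are $O(N\log N)$ time and $O(N)$ space by Lemmas~\ref{lemma:precomputation_phase} and~\ref{lemma:candidate_solutions}. In the third phase, each value $U(i,\psi_i)$ costs $O(1)$ and each of $S(i,\psi_i)$ and $E(i,\psi_i)$ costs $O(\log N)$ by Lemma~\ref{lemma:time_complexity_real_functions}; Hershberger's structure is built once in $O(N\log N)$ time and $O(N)$ space~\cite{DBLP:journals/ipl/Hershberger89} and answers each query ${\cal U}(x_i)$ in $O(\log N)$ time. Hence the third phase runs in $O(N\log N)$ time as well, giving the claimed bounds.

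The main conceptual obstacle is the treatment of the $C$ term, and this is precisely what the upper-envelope machinery resolves. Evaluating $C(i,\psi_i)$ directly for each of the $N-1$ candidates would naively cost $\Theta(N)$ per candidate because of the nested maximization over both endpoints of the cycle; the deleted Lemma (the \texttt{hide}-d \texttt{lemma:time\_complexity\_Cij}) only gives an $O(N)$ bound for a single evaluation. The key idea is that, instead of computing $C$ exactly, we overestimate it by ${\cal U}(x_i)$, where $x_i=d(i,\psi_i)+c(i,\psi_i)$ is the cycle weight; the functions $\phi_k$ are designed so that this overestimate is tight at the optimal candidate (Lemma~\ref{lemma:final_lemma_bis}) while never falling below the true $C$ (Lemma~\ref{lemma:final_lemma}). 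The subtlety is that ${\cal U}$ depends on $i$ only through the scalar $x_i$, which is what allows all dependence on the cycle to be precomputed once as an upper envelope of $O(N)$ linear functions and then queried in logarithmic time. Thus the apparent difficulty of evaluating $C$ is entirely absorbed into the one-time $O(N\log N)$ envelope construction, and the remainder of the proof is the bookkeeping sketched above.
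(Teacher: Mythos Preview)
Your proof is correct and follows essentially the same route as the paper's own argument: combine Lemmas~\ref{lemma:final_lemma} and~\ref{lemma:final_lemma_bis} to sandwich $\eta_{i^*}$ at $D^*$ and conclude that the minimizer $(\alpha,\psi_\alpha)$ is optimal, then add up the $O(N\log N)$ time and $O(N)$ space costs of the three phases using Lemmas~\ref{lemma:precomputation_phase}, \ref{lemma:candidate_solutions}, \ref{lemma:time_complexity_real_functions}, and the Hershberger envelope bound. Your additional commentary on why the upper-envelope trick is the right substitute for evaluating $C$ directly is accurate and helpful, though it goes beyond what the paper's proof records.
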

\begin{proof}
From Lemma~\ref{lemma:final_lemma}, we have that $\eta_i \geq D(i,\psi_i)$. However, from Lemma~\ref{lemma:final_lemma_bis}, $\eta_{i^*} \leq D^*$. Therefore, the index $\alpha$ computed by the algorithm satisfies $D(\alpha,\psi_\alpha) \leq \eta_\alpha \leq \eta_{i^*} \leq D^*$, i.e., the shortcut $(\alpha,\psi_\alpha)$ returned by the algorithm is an optimal solution.

Concerning the time and space complexities of the algorith, in Lemma~\ref{lemma:precomputation_phase} we proved that the precomputation phase takes $O(N \log N)$ time and $O(N)$ space, in Lemma~\ref{lemma:candidate_solutions} we proved that the search-space reduction phase takes $O(N \log N)$ time and $O(N)$ space, and, in Lemma~\ref{lemma:time_complexity_real_functions}, we proved that $U\big(i,\psi_i\big)$, $S\big(i,\psi_i\big)$, and $E\big(i,\psi_i\big)$ can be computed in $O(\log N)$ time after a precomputation phase of $O(N)$. Since the value ${\cal U}(x_i)$ can be computed in $O(\log N)$ time~\cite{DBLP:journals/ipl/Hershberger89}, all the $N-1$ values $\eta_i$ can be computed in $O(N \log N)$ time and $O(N)$ space. This completes the proof.
\end{proof}
\begin{theorem}\label{theorem:metric_tree}
$\doap$ on trees embedded in a (graph-)metric space can be solved in $O(n \log n)$ time and $O(n)$ space.
\end{theorem}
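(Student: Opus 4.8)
The plan is to chain together the reductions and algorithmic results established so far, so that an $O(n\log n)$ solution for $\doap$ on (graph-)metric trees follows almost immediately from Theorem~\ref{theorem:tree-metric_node_weighted_path}. First I would observe that, since every metric function is also a graph-metric (as noted after the definition in Section~\ref{section:preliminaries}), it suffices to treat graph-metric instances $\langle T,\delta,c\rangle$ directly; in particular there is no need to invoke Lemma~\ref{lemma:reduction_to_tree_metric_instances} here, as $c$ is already graph-metric.

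The first step is to compute a diametral path $P=(v_1,\dots,v_N)$ of $T$, which can be done in $O(n)$ time by a constant number of shortest-path traversals. By Lemma~\ref{lemma:equivalence_to_metric_instance}, the graph-metric property of $c$ guarantees an optimal shortcut whose endvertices both lie on $P$, so it is enough to search among pairs $(v_i,v_j)$ with $1\le i<j\le N$. By Lemma~\ref{lemma:Dij}, the induced $\wdoap$ instance $\langle P,\delta,w,c\rangle$ can be built in $O(n)$ time and satisfies $\diam\big(T+(v_i,v_j)\big)=D(i,j)$ for all $i<j$. Consequently, the index pair minimizing $D(i,j)$ yields exactly an optimal shortcut of the original $\doap$ instance, and its value equals $D^*$.

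The remaining step is to solve the $\wdoap$ instance within the claimed bounds. Here I would convert the node weights $w$ into the weights $\omega$ via Lemma~\ref{lemma:computation_omega} in $O(n)$ time; by Lemma~\ref{lemma:Dij_splitted} the objective $D(i,j)$ is unchanged under this passage, since it decomposes as $\max\{U,S,E,C\}$ with all four parts expressed through $\omega$, and $\omega$ satisfies the node-triangle inequality. Thus the hypotheses of Theorem~\ref{theorem:tree-metric_node_weighted_path} are met, and applying that theorem solves $\wdoap$ in $O(N\log N)$ time and $O(N)$ space. Translating the returned pair $(i^*,j^*)$ back to the shortcut $(v_{i^*},v_{j^*})$ then gives the optimal solution for $\doap$.

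Finally, since $N\le n$, every phase runs in $O(n\log n)$ time and $O(n)$ space, which establishes the theorem. I do not anticipate a genuine obstacle: all the technical content has already been carried out in the earlier lemmas. The only point requiring care is the bookkeeping that the $\wdoap$ objective obtained from the tree (phrased via $w$) coincides with the objective the Section~\ref{section:metric_trees} algorithm optimizes (phrased via $\omega$), and this identity is precisely what Lemma~\ref{lemma:Dij_splitted} delivers.
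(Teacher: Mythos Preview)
Your proposal is correct and follows essentially the same route as the paper: reduce to the induced $\wdoap$ instance via Lemma~\ref{lemma:equivalence_to_metric_instance} and Lemma~\ref{lemma:Dij}, then invoke Theorem~\ref{theorem:tree-metric_node_weighted_path}. Your explicit mention of the $w\to\omega$ conversion (via Lemma~\ref{lemma:computation_omega} and Lemma~\ref{lemma:Dij_splitted}) is a welcome bit of bookkeeping that the paper leaves implicit inside its treatment of $\wdoap$.
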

\begin{proof}
Let $\langle T, \delta, c \rangle$ be an input instance of $\doap$, where $T$ is a tree and $c$ is a metric function. In Lemma~\ref{lemma:equivalence_to_metric_instance} we proved that, if $P=(v_1,\dots,v_N)$ is a diametral path of $T$, then an optimal solution of the problem is a shortcut whose endvertices are both vertices of $P$. Furthermore, in Lemma~\ref{lemma:Dij} we proved that the problem instance can be reduced in $O(n)$ time to the corresponding induced instance $\langle P, \delta, w, c \rangle$ of $\wdoap$. Therefore, an optimal solution of the induced problem instance of $\wdoap$ is also an optimal solution of the instance of $\doap$. Since the induced instance of $\wdoap$ is solvable in $O(N \log N)=O(n \log n)$ time and $O(N)=O(n)$ space (see Theorem~\ref{theorem:tree-metric_node_weighted_path}), the claim follows.
\end{proof}

\section{The algorithm for general problem instances}\label{section:general_instances}

In this section we design an algorithm that solves $\doap$ in $O(n^2)$ time and space. 
The algorithm first queries the oracle to explicitly compute all the values $c(u,v)$, for every $u,v \in V(T)$, in $O(n^2)$. Next the algorithm computes, in $O(n)$ time, a diametral path $P=(v_1,\dots,v_N)$ of $T$ and all the values $w(v_i)$ and $\omega(v_i)$. Now we show how the graph-metric function $\bar c$ restricted to pair of vertices $v_i$ and $v_j$ of $P$ can be computed in $O(n^2)$ time. For every $1 \leq i < j \leq N$, the algorithm sets 
$$
\hat c(v_i,v_j) = \min\big \{d(u,v_i)+c(u,v)+d(v_j,v) \mid u \in V(T_i),v \in V(T_j)\big\}.
$$
Next, for every $i$ from $1$ up to $N-1$ and for every $j$ from $i+1$ up to $N$, the algorithm computes 
$$
\tilde c(v_i,v_j)=\min\big\{\hat c(v_i,v_j), \tilde c(v_i,v_{j-1})+d(v_{j-1},v_j), \tilde c(v_{i-1},v_j)+d(v_{i-1},v_i)\big\}.
$$
Finally, for every $i$ from $N$ downto $2$ and for every $j$ from $i-1$ downto $1$, the algorithm computes
$$
\bar c(v_i,v_j)=\min\big\{\tilde c(v_i,v_j), \bar c(v_i,v_{j+1})+d(v_{j+1},v_j), \bar c(v_{i+1},v_j)+d(v_{i+1},v_i)\big\}.
$$

\begin{lemma}\label{lemma:computation_graph_metric_closure}
The function $\bar c$ computed by the algorithm corresponds to the graph-metric closure of $c$.
\end{lemma}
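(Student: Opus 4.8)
The plan is to prove the slightly stronger statement that $\bar c(v_i,v_j)$ equals the closure value $\mathrm{cl}(v_i,v_j):=\min\{d(v_i,u')+c(u',v')+d(v',v_j)\mid u',v'\in V(T)\}$ for every pair of path vertices. The entry point is a decomposition of $\mathrm{cl}$ according to which pendant subtree holds each endpoint of the detour. Since $T$ is a tree and $P$ a path with pendant trees $T_1,\dots,T_N$, every $u'\in V(T)$ lies in a unique $T_k$ and $d(v_i,u')=d(v_i,v_k)+d(v_k,u')$; applying this to both endpoints gives
\[
\mathrm{cl}(v_i,v_j)=\min_{k,h}\big(d(v_i,v_k)+g(k,h)+d(v_h,v_j)\big), \quad g(k,h):=\min_{u'\in V(T_k),\,v'\in V(T_h)}\big(d(v_k,u')+c(u',v')+d(v',v_h)\big).
\]
First I would record that $g$ is symmetric and that $g(i,j)=\hat c(v_i,v_j)$ for $i<j$ holds exactly by the definition of $\hat c$. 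I would also dispose of the diagonal terms $k=h$ separately, observing that such a route costs at least $d(v_i,v_k)+d(v_k,v_j)\ge d(v_i,v_j)$ and hence can be omitted without affecting the value relevant to \doap; incorporating this observation is precisely what lets the cross-subtree sweeps recover the closure.

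With $g$ and the decomposition in hand, the core of the proof is to read the three sweeps as a Bellman–Ford-style evaluation of the min-plus sandwich $d\otimes g\otimes d$, certified by loop invariants. For the second sweep I would prove, by induction in the fill order ($i$ increasing, $j$ increasing), the invariant
\[
\tilde c(v_i,v_j)=\min_{k\le i,\ h\le j,\ k<h}\big(d(v_i,v_k)+g(k,h)+d(v_h,v_j)\big),\qquad i<j .
\]
The induction step is where path-additivity does the work: the term $\tilde c(v_i,v_{j-1})+d(v_{j-1},v_j)$ is shown to be exactly the restriction of the right-hand side to $h\le j-1$ (because $d(v_h,v_{j-1})+d(v_{j-1},v_j)=d(v_h,v_j)$ when $h\le j-1$), the term $\tilde c(v_{i-1},v_j)+d(v_{i-1},v_i)$ is exactly the restriction to $k\le i-1$, and $\hat c(v_i,v_j)=g(i,j)$ supplies the corner $(k,h)=(i,j)$; a short case check confirms that these three index ranges cover $\{k\le i,\ h\le j,\ k<h\}$, so the minimum of the three equals the claimed value.

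The third sweep is handled by the symmetric induction: reading $\tilde c(v_i,v_j)$ for $i>j$ as the stored value $\tilde c(v_j,v_i)$ and relaxing through $d(v_{j+1},v_j)$ and $d(v_{i+1},v_i)$ (now with $i$ and $j$ both decreasing), one proves that $\bar c(v_i,v_j)$ attains the unrestricted minimum $\min_{k,h}\big(d(v_i,v_k)+g(k,h)+d(v_h,v_j)\big)=\mathrm{cl}(v_i,v_j)$ for $i>j$; symmetry of $g$ and of $\mathrm{cl}$ then supplies the value for the remaining ordered pairs. Here the symmetric reading of $\tilde c$ accounts for the detour orientation in which the endpoint nearest $v_i$ has the larger base index, while the two relaxation terms propagate each $g(k,h)$ to bases of larger index, so that between them the seed $\tilde c$ and the two sweeps reach every configuration.

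I expect the main obstacle to be exactly the bookkeeping inside these invariants: verifying that each relaxation term is \emph{tight} (an equality, not merely an upper bound) rather than over-counting a doubled segment, and that the three sweeps between them reach every ordered pair $(k,h)$ in both detour orientations exactly once, with the undefined diagonal and out-of-range entries read as $+\infty$ so the inductions start correctly. The only genuinely delicate conceptual point, as opposed to routine index-chasing, is the treatment of the degenerate $k=h$ routes flagged above, which the three cross-subtree sweeps never produce and which must be argued away by the bound $d(v_i,v_k)+d(v_k,v_j)\ge d(v_i,v_j)$.
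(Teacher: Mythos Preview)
Your approach is correct and takes a genuinely different route from the paper's. The paper argues the two inequalities separately: it first observes that every value $\hat c$, $\tilde c$, $\bar c$ produced by the recurrences is the length of some actual detour $d(v_i,u')+c(u',v')+d(v',v_j)$, so $\bar c(v_i,v_j)\ge\mathrm{cl}(v_i,v_j)$; it then fixes the optimal pair $(u,v)$ with $u\in T_k$, $v\in T_h$ and, by a direct case analysis on the relative order of $k,h$ against $i,j$ (three cases), exhibits a concrete chain of updates from $\hat c(v_k,v_h)$ through $\tilde c$ to $\bar c(v_i,v_j)$, giving the reverse inequality. You instead maintain exact closed-form invariants for the partial minima computed at each sweep and prove them inductively in the fill order---a more systematic, Bellman--Ford style argument that makes the DP structure explicit and yields both inequalities at once, at the price of heavier index bookkeeping than the paper's short case split.

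One small point to watch when you flesh out the third sweep: the three terms in the $\bar c$ recurrence do not tightly reach every ordered pair $(k,h)$. Pairs with $k<h\le j$ (both base indices on the ``$v_j$ side'', in the orientation not seeded by $\tilde c$) are produced by neither relaxation, contrary to the phrase ``reach every configuration''. They are harmless only because the swapped pair $(h,k)$ does lie in the $\tilde c$ range and, by path additivity along $P$, carries exactly the same value; you will need this observation explicitly to close the induction, and it is a bit more than the bare symmetry of $g$.
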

\begin{proof}
Let $i$ and $j$ be any two indices such that $1 \leq i < j \leq N$. Let $u$ and $v$ be the two vertices of $T$ such that the graph-metric cost function for the pair $v_i$ and $v_j$ is equal to $d(v_i,u)+c(u,v)+d(v,v_j)$. 

By definition, the value $\hat c(v_i,v_j)$ refers to the length of a (not necessarily simple) path between $v_i$ and $v_j$ in $T$ plus a non-tree edge. A simple proof by induction shows that this is the case also for $\tilde c(v_i,v_j)$ and $\bar c(v_i,v_j)$. Therefore, the value $\bar c(v_i,v_j)$ computed by the algorithm is always greater than or equal to $d(v_i,u)+c(u,v)+d(v,v_j)$.

We conclude the proof by showing that the value $\bar c(v_i,v_j)$ computed by the algorithm is at most $d(v_i,u)+c(u,v)+d(v,v_j)$. Let $T_k$ and $T_h$, with $k \leq h$, be the two trees of $T-E(P)$ that contain $u$ and $v$, respectively. Observe that $\hat c(v_h,v_k)\leq d(v_k,u)+c(u,v)+d(v,v_h)$ by definition. As a consequence, also $\tilde c(v_k,v_h), \bar c(v_k,v_h) \leq d(v_k,u)+c(u,v)+d(v,v_h)$. We divide the proof into the following three cases:
\begin{itemize}
\item $k \leq i \leq h \leq j$;
\item $i \leq k \leq j \leq h$;
\item $i \leq k \leq h \leq j$.
\end{itemize}

We consider the case  $k \leq i \leq h \leq j$. A simple proof by induction shows that $\tilde c(v_i,v_h) \leq \tilde c(v_k,v_h)+d(v_k,v_i)$ before the value $\tilde c(v_i,v_j)$ is computed by the algorithm. Similarly, $\tilde c(v_i,v_j) \leq \tilde c(v_i,v_h)+d(v_j,v_h)$, from which we derive 
$$
\bar c(v_i,v_j) \leq \tilde c(v_i,v_j) \leq d(v_k,v_i)+\tilde c(v_k,v_h)+d(v_h,v_j) \leq d(v_i,u)+c(u,v)+d(v,v_j).
$$

The proof for the case $i \leq k \leq j \leq h$ is similar. A simple proof by induction shows that $\bar c(v_k,v_j) \leq \bar c(v_k,v_h)+d(v_j,v_h)$ before the value $\bar c(v_i,v_j)$ is computed by the algorithm. Similarly, $\bar c(v_i,v_j) \leq \bar c(v_k,v_j) + d(v_i,v_k)$, from which we derive
$$
\bar c(v_i,v_j) \leq d(v_j, v_h) + \bar c(v_k,v_h) + d(v_i,v_k) \leq  d(v_i,u)+c(u,v)+d(v,v_j).
$$

We consider the case $i \leq k \leq h \leq j$. A simple proof by induction shows that $\tilde c(v_k,v_j) \leq \tilde c(v_k,v_h)+d(v_h,v_j)$ before the value $\bar c(v_i,v_j)$ is computed by the algorithm. Similarly, $\bar c(v_i,v_j) \leq \tilde c(v_k,v_j)+d(v_k,v_i)$, from which we derive
$$
\bar c(v_i,v_j) \leq d(v_h,v_j)+ \tilde c(v_k,v_h) + d(v_i,v_k) \leq d(v_i,u)+c(u,v)+d(v,v_j).
$$
This completes the proof.
\end{proof}

\begin{theorem}
$\doap$ can be solved in $O(n^2)$ time and space.
\end{theorem}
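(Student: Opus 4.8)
The plan is to verify that the algorithm described in Section~\ref{section:general_instances} correctly solves $\doap$ within the stated $O(n^2)$ time and space bounds by combining the reduction to graph-metric instances with the fast algorithm for such instances developed earlier. The overall strategy rests on three pillars already established: Lemma~\ref{lemma:reduction_to_tree_metric_instances} (solving $\langle T,\delta,c\rangle$ is equivalent to solving $\langle T,\delta,\bar c\rangle$), Lemma~\ref{lemma:computation_graph_metric_closure} (the algorithm correctly computes the graph-metric closure $\bar c$ restricted to pairs on the diametral path $P$), and Theorem~\ref{theorem:tree-metric_node_weighted_path} together with Lemma~\ref{lemma:equivalence_to_metric_instance} and Lemma~\ref{lemma:Dij} (a $\doap$ instance with a graph-metric cost admits an optimal shortcut with both endvertices on a diametral path, reducible to a $\wdoap$ instance solvable in $O(N\log N)$ time).

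First I would argue correctness. By Lemma~\ref{lemma:reduction_to_tree_metric_instances}, an optimal shortcut for $\langle T,\delta,\bar c\rangle$ yields the same optimal diameter as for $\langle T,\delta,c\rangle$, so it suffices to solve the graph-metric instance. Since $\bar c$ is graph-metric, Lemma~\ref{lemma:equivalence_to_metric_instance} guarantees an optimal shortcut with both endpoints on the diametral path $P=(v_1,\dots,v_N)$; hence the algorithm needs only the values $\bar c(v_i,v_j)$ for indices on $P$, which Lemma~\ref{lemma:computation_graph_metric_closure} certifies are computed correctly. Feeding the induced $\wdoap$ instance $\langle P,\delta,w,\bar c\rangle$ into the $O(N\log N)$ algorithm of Theorem~\ref{theorem:tree-metric_node_weighted_path} then returns an optimal shortcut, and by Lemma~\ref{lemma:Dij} its diameter equals $D(i^*,j^*)=\diam(T+(v_{i^*},v_{j^*}))$, the true optimum.

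Next I would account for the running time and space, phase by phase. Querying the oracle for all $\binom{n}{2}$ shortcut weights costs $O(n^2)$ time and space. Computing the diametral path $P$ and all the node weights $w(v_i)$ and $\omega(v_i)$ costs $O(n)$ by Lemma~\ref{lemma:Dij} and Lemma~\ref{lemma:computation_omega}. The three passes computing $\hat c$, $\tilde c$, and finally $\bar c$ each iterate over all pairs $(v_i,v_j)$ with $i<j$ (or $i>j$) on $P$, spending constant work per pair after the $\hat c$ values are available; the $\hat c$ values themselves, summed over all $i<j$, require examining each pair $(u,v)$ with $u\in V(T_i)$ and $v\in V(T_j)$, and since the vertex sets $V(T_i)$ partition $V(T)$, the total cost is bounded by $O(n^2)$. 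Thus building $\bar c$ on $P$ costs $O(n^2)$ time and space. Finally, invoking the $\wdoap$ solver of Theorem~\ref{theorem:tree-metric_node_weighted_path} adds only $O(N\log N)=O(n\log n)$ time and $O(n)$ space, which is dominated by the preceding $O(n^2)$ terms. Summing gives the claimed $O(n^2)$ time and space.

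I expect the main subtlety to lie not in the high-level composition — which is a clean chaining of the earlier lemmas — but in pinning down precisely why the graph-metric closure restricted to $P$ suffices, rather than the full closure over $V(T)^2$. The point is that Lemma~\ref{lemma:equivalence_to_metric_instance} is stated for the graph-metric cost $\bar c$, and an optimal shortcut on $P$ only ever consults $\bar c(v_i,v_j)$ for $v_i,v_j\in V(P)$; moreover the definition $\bar c(v_i,v_j)=\min\{d(u,v_i)+c(u,v)+d(v_j,v)\}$ already ranges over \emph{all} $u,v\in V(T)$ through the $\hat c$-then-propagate construction of Lemma~\ref{lemma:computation_graph_metric_closure}, so no information about detours leaving $P$ is lost. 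Making this observation explicit, and confirming that the $O(n^2)$ aggregate bound on the $\hat c$ computation genuinely exploits the partition of $V(T)$ by the subtrees $T_i$, is the one place where care is required; the rest follows by assembling the cited results.
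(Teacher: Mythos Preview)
Your proposal is correct and follows essentially the same approach as the paper's proof, which also chains Lemma~\ref{lemma:reduction_to_tree_metric_instances}, Lemma~\ref{lemma:equivalence_to_metric_instance}, Lemma~\ref{lemma:computation_graph_metric_closure}, and the graph-metric solver (the paper cites Theorem~\ref{theorem:metric_tree}, which packages Theorem~\ref{theorem:tree-metric_node_weighted_path} with Lemmas~\ref{lemma:equivalence_to_metric_instance} and~\ref{lemma:Dij}). Your write-up is in fact more explicit than the paper's terse proof, particularly in spelling out why the $\hat c$ computation is $O(n^2)$ via the partition of $V(T)$ into the subtrees $T_i$ and why restricting $\bar c$ to pairs on $P$ suffices.
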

\begin{proof}
In Lemma~\ref{lemma:reduction_to_tree_metric_instances} we proved that it is enough to compute an optimal shortcut w.r.t. the graph-metric closure $\bar c$. In Lemma~\ref{lemma:equivalence_to_metric_instance} we proved that an optimal solution of the problem is a shortcut whose endvertices are both vertices of $P$. Since the instance $\langle P, \delta, \omega, \bar c\rangle$ of $\wdoap$ induced by $\langle T, \delta, w, \bar c\rangle$ can be computed in $O(n^2)$ time and space (see Lemma~\ref{lemma:computation_graph_metric_closure}), the claim follows from Theorem~\ref{theorem:metric_tree}.
\end{proof}

\section{The \texorpdfstring{$(1+\varepsilon)$}{1+epsilon}-approximation algorithm for \doap}\label{section:approximation_algorithm}

In this section we design an algorithm that computes a $(1+\varepsilon)$-approximate solution of $\wdoap$ in $O\left(N+\frac{1}{\varepsilon}\log \frac{1}{\varepsilon}\right)$ time and $O(N+1/\varepsilon)$ space. First, we prove an upper bound to $d(1,N)$ as a function of $D^*$.

\begin{lemma}\label{lemma:relation_between_path_and_D_star}
$d(1,N) \leq 3D^*$.
\end{lemma}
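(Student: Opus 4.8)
The plan is to fix an optimal shortcut $(i^*,j^*)$, so that $D^* = D(i^*,j^*)$, abbreviate $c = c(i^*,j^*)$ and $L = d(i^*,j^*)$, and charge the whole diametral path $d(1,N)$ to two augmented distances that, by the very definition of $D(\cdot,\cdot)$, are each at most $D^*$. For $i^*\le t\le j^*$ write $a_t = d(i^*,t)$ and $b_t = d(t,j^*)$, so that $a_t+b_t=L$. Since $w(\cdot)\ge 0$, the definition $D(i,j)=\max_{k<h}\{w(k)+d_{i,j}(k,h)+w(h)\}$ immediately gives $d_{i^*,j^*}(k,h)\le D^*$ for \emph{every} pair $k<h$; I will invoke this for three well-chosen pairs. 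Note that no case split on whether a shortest $v_1$-$v_N$ path uses the shortcut is needed.

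First I would split the subpath between $i^*$ and $j^*$ at a near-midpoint vertex. Let $m$ be the largest index in $[i^*,j^*]$ with $a_m\le L/2$ (so $m<j^*$ because $L>0$), let $\delta=\delta(m,m+1)=a_{m+1}-a_m$, and let $m'\in\{m,m+1\}$ be the index whose $a_{m'}$ is closer to $L/2$. Because $(L/2-a_m)+(a_{m+1}-L/2)=\delta$, the closer of the two values lies within $\delta/2$ of $L/2$, hence $|a_{m'}-b_{m'}|\le \delta$. This straddling step is the crux: discreteness means no vertex need sit exactly at the midpoint, and it is precisely this gap that will later cost a factor and force the bound $3D^*$ rather than $2D^*$.

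Next I would relate the original path length to the two augmented distances. Since $1\le i^*\le m'\le j^*\le N$, a shortest $v_1$-$v_{m'}$ path either stays on $P$ or detours through the shortcut, so $d_{i^*,j^*}(1,m')=d(1,i^*)+\min\{a_{m'},\,c+b_{m'}\}$ and, symmetrically, $d_{i^*,j^*}(m',N)=\min\{b_{m'},\,c+a_{m'}\}+d(j^*,N)$. Writing $d(1,N)=d(1,i^*)+a_{m'}+b_{m'}+d(j^*,N)$ and subtracting, the discrepancy telescopes to
\[
d(1,N) - d_{i^*,j^*}(1,m') - d_{i^*,j^*}(m',N) = \max\{0,\, a_{m'}-c-b_{m'}\} + \max\{0,\, b_{m'}-c-a_{m'}\} = \max\{0,\, |a_{m'}-b_{m'}|-c\},
\]
where the last equality holds because at most one of the two maxima can be positive. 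As both augmented distances are at most $D^*$, this yields $d(1,N)\le 2D^* + \max\{0,\,\delta-c\}$.

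Finally I would bound the single straddling edge $\delta$ by $D^*$, and this is where I expect the main obstacle to lie, because it is exactly here that the \emph{metric embedding} of this section is indispensable: for a general graph-metric $c$ the statement is false (a heavy path edge can be shortcut by a cheap chord, making $\delta$ arbitrarily large while $D^*$ stays small). Using $\delta(m,m+1)=c(m,m+1)$ together with the triangle inequality for $c$ and $c(\cdot,\cdot)\le d(\cdot,\cdot)$, I would derive $\delta \le c(m,i^*)+c(i^*,j^*)+c(j^*,m+1)\le a_m+c+b_{m+1}$; hence the direct edge is itself a shortest $v_m$-$v_{m+1}$ path, so $\delta=d_{i^*,j^*}(m,m+1)\le D^*$. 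Consequently $\max\{0,\,\delta-c\}\le\delta\le D^*$, and combining with the previous paragraph gives $d(1,N)\le 2D^*+D^*=3D^*$, as claimed.
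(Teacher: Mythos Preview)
Your proof is correct, and it takes a genuinely different route from the paper's.

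The paper argues as follows: let $r_1$ be the furthest vertex from $1$ with $d(1,r_1)\le D^*$ and $\ell_N$ the furthest from $N$ with $d(\ell_N,N)\le D^*$. If $\ell_N\le r_1$ then $d(1,N)\le 2D^*$; otherwise one shows $d(r_1,\ell_N)\le D^*$ by observing that $d_{i^*,j^*}(r_1,\ell_N)\le D^*$ and analysing which of the two paths around the cycle realises this minimum. The three segments $[1,r_1]$, $[r_1,\ell_N]$, $[\ell_N,N]$ each contribute at most $D^*$.

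Your approach instead fixes the optimal shortcut $(i^*,j^*)$, cuts the diametral path at a near-midpoint $m'$ of the span $[i^*,j^*]$, and charges the two halves to $d_{i^*,j^*}(1,m')$ and $d_{i^*,j^*}(m',N)$, each at most $D^*$. The discrepancy is exactly $\max\{0,|a_{m'}-b_{m'}|-c\}\le\delta$, the weight of the straddling edge, which you then bound by $D^*$ via the neat observation that in a true metric the direct edge $(m,m+1)$ is itself a shortest path in $P+(i^*,j^*)$. Your identification of this last step as the place where the full metric (not merely graph-metric) assumption is indispensable is exactly right, and your counterexample sketch is valid: the lemma itself fails for graph-metrics.

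What each buys: the paper's decomposition is arguably more natural because the thresholds $r_1,\ell_N$ are intrinsic to $D^*$ rather than to any particular optimal shortcut, and they reappear elsewhere in the paper. Your argument is more self-contained and makes the ``$2D^*$ plus a discreteness correction'' structure of the bound completely explicit; in a continuous model where $m'$ could sit at the exact midpoint you would recover $2D^*$ immediately.
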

\begin{proof}
Let $r_1$ be the maximum index such that $d(1,r_1) \leq D^*$; similarly, let $\ell_N$ be the minimum index such that $d(\ell_N,N) \leq D^*$.  If $\ell_N \leq r_1$, then $ d(1,N) \leq d(1,r_1)+d(\ell_N,N) \leq 2D^*$ and the claim follows. Therefore, we assume that $r_1 < \ell_N$. We show that $d(r_1,\ell_N) \leq D^*$. Let $i^*$ and $j^*$, with $1 \leq i^* < j^* \leq N$, be two indices such that $D(i^*,j^*) = D^*$. Using the triangle inequality $d(r_1,\ell_N) \leq d(i^*,r_1)+c(i^*,j^*)+d(j^*,\ell_N)$ we obtain
\begin{equation}\label{eq:cost_non_edge_i_j}
c(i^*,j^*)\geq d(r_1,\ell_N)-d(i^*,r_1)-d(j^*,\ell_N).
\end{equation}
Furthermore, either $d(r_1,\ell_N) \leq D^*$ or 
\begin{equation}\label{eq:r1_to_lN}
d(i^*,r_1)+c(i^*,j^*)+d(j^*,\ell_N) \leq D^*.
\end{equation}
In the former case the claim $d(r_1,\ell_N) \leq D^*$ trivially holds. In the latter case, by plugging~(\ref{eq:cost_non_edge_i_j}) into~(\ref{eq:r1_to_lN}) we obtain
$d(r_1,\ell_N) \leq d(1,i^*)+c(i^*,j^*)+d(j^*,N) \leq D^*$.
Therefore, $d(r_1,\ell_N) \leq D^*$ in both cases. Hence, $d(1,N) = d(1,r_1)+d(r_1,\ell_N)+d(\ell_N,N) \leq 3D^*$.
\end{proof}

Let $L=\varepsilon\cdot d(1,N)/18$ and observe that $d(1,N)$ can be subdivided into at most $18/\varepsilon$ intervals of width $L$ each. For each $k$, with $1 \leq k \leq 18/\varepsilon$, the algorithm computes the {\em representative} index $\gamma_k$ that maximizes $\omega(\gamma_k)$  among the indices $i$ such that $(k-1)L \leq d(1,i) \leq kL$ (i.e., the indices of the $k$-th interval). Notice that all the values $\omega(\gamma_k)$ can be computed in $O(N)$ time by scanning the vertices of $P$ in order from $1$ to $N$. 
Next, the algorithm finds an optimal solution of the input instance restricted to the set of the representative vertices in $O(\varepsilon^{-1} \log \varepsilon^{-1})$ time and $O(\varepsilon^{-1})$ space using the algorithm we described in the previous section. We can prove the following theorems.
\begin{theorem}\label{theorem:metric_node_weighted_path}
For every $\varepsilon > 0$, we can compute a $(1+\varepsilon)$-approximate solution of $\wdoap$ in $O\left(N+\frac{1}{\varepsilon}\log \frac{1}{\varepsilon}\right)$ time and $O\left(N+\frac{1}{\varepsilon}\right)$ space.
\end{theorem}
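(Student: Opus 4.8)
The plan is to bound the quality of the shortcut $(\gamma_p,\gamma_q)$ that the algorithm returns, namely the optimal solution of the $\wdoap$ instance built on the $O(1/\varepsilon)$ representatives. To make that a bona fide $\wdoap$ instance I would take as its path the representatives together with the two endpoints $1$ and $N$ (which carry weight $0$); one checks that the node weights $\omega(\gamma_k)$ then satisfy $\omega(\gamma_k)\le\min\{d(1,\gamma_k),d(\gamma_k,N)\}$ and the node-triangle inequality, so Theorem~\ref{theorem:tree-metric_node_weighted_path} applies and $(\gamma_p,\gamma_q)$ minimizes the restricted diameter $\widehat D(\gamma_a,\gamma_b):=\max_{s<t}\{\omega(\gamma_s)+d_{\gamma_a,\gamma_b}(\gamma_s,\gamma_t)+\omega(\gamma_t)\}$ over all pairs of restricted vertices. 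The argument rests on two facts: $L=\varepsilon\,d(1,N)/18\le\varepsilon D^*/6$ because $d(1,N)\le 3D^*$ (Lemma~\ref{lemma:relation_between_path_and_D_star}); and any two vertices of one interval are at path-distance at most $L$, the representative attaining the maximum $\omega$-value of the interval. The target is $D(\gamma_p,\gamma_q)\le D^*+6L\le(1+\varepsilon)D^*$, where $D(\gamma_p,\gamma_q)$ is the true (unrestricted) diameter after adding the shortcut.

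First I would compare the true diameter with the restricted one. Let $(k,h)$, $k<h$, be a farthest pair for $D(\gamma_p,\gamma_q)$, and let $\gamma_a,\gamma_b$ be the representatives of their intervals. If $k,h$ lie in one interval then $d_{\gamma_p,\gamma_q}(k,h)\le d(k,h)\le L$, while optimality gives $D^*\ge\omega(k)+d_{i^*,j^*}(k,h)+\omega(h)\ge\omega(k)+\omega(h)$, so this pair contributes at most $D^*+L$. Otherwise $\gamma_a\ne\gamma_b$; using $\omega(k)\le\omega(\gamma_a)$, $\omega(h)\le\omega(\gamma_b)$, the triangle inequality in $P+(\gamma_p,\gamma_q)$, and $d_{\gamma_p,\gamma_q}(k,\gamma_a),d_{\gamma_p,\gamma_q}(\gamma_b,h)\le L$, the pair contributes at most $\omega(\gamma_a)+d_{\gamma_p,\gamma_q}(\gamma_a,\gamma_b)+\omega(\gamma_b)+2L\le\widehat D(\gamma_p,\gamma_q)+2L$. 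Hence $D(\gamma_p,\gamma_q)\le\max\{D^*+L,\ \widehat D(\gamma_p,\gamma_q)+2L\}$.

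Next I would bound $\widehat D(\gamma_p,\gamma_q)$, splitting on the optimal shortcut $(i^*,j^*)$. If $i^*,j^*$ lie in distinct intervals with representatives $\gamma_{a^*},\gamma_{b^*}$, then by optimality $\widehat D(\gamma_p,\gamma_q)\le\widehat D(\gamma_{a^*},\gamma_{b^*})\le D(\gamma_{a^*},\gamma_{b^*})$, the last inequality holding because the restricted diameter maximizes over a subset of all vertex pairs. I would then prove $D(\gamma_{a^*},\gamma_{b^*})\le D^*+4L$ by transplanting the shortcut: two applications of the graph-triangle inequality (and symmetry of $c$) yield $c(\gamma_{a^*},\gamma_{b^*})\le c(i^*,j^*)+2L$, and since each endpoint moves by at most $L$, every distance satisfies $d_{\gamma_{a^*},\gamma_{b^*}}(k,h)\le d_{i^*,j^*}(k,h)+4L$. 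If instead $i^*,j^*$ share an interval the transplant degenerates, and here I would use the standing assumption $d(1,N)>D^*$: it forces the shortest path between $1$ and $N$ to use $(i^*,j^*)$, so $D^*\ge d(1,i^*)+c(i^*,j^*)+d(j^*,N)\ge d(1,N)-d(i^*,j^*)\ge d(1,N)-L$; since adding a shortcut never increases distances and $\omega(s)\le d(1,s)$, $\omega(t)\le d(t,N)$, any restricted shortcut has $\widehat D(\gamma_p,\gamma_q)\le d(1,N)\le D^*+L$. In both cases $\widehat D(\gamma_p,\gamma_q)\le D^*+4L$, whence $D(\gamma_p,\gamma_q)\le D^*+6L\le(1+\varepsilon)D^*$, and the returned shortcut is feasible for the original instance since its endpoints are vertices of $P$.

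For the resource bounds, all $d(1,\cdot)$ and $\omega(\cdot)$ are computed in $O(N)$ time (Lemma~\ref{lemma:computation_omega}), and a single left-to-right scan identifies the at most $18/\varepsilon$ intervals and their representatives; the restricted instance has $O(1/\varepsilon)$ vertices and is solved in $O(\varepsilon^{-1}\log\varepsilon^{-1})$ time and $O(\varepsilon^{-1})$ space by Theorem~\ref{theorem:tree-metric_node_weighted_path}, giving the claimed $O(N+\varepsilon^{-1}\log\varepsilon^{-1})$ time and $O(N+\varepsilon^{-1})$ space. I expect the crux to be the two-sided comparison between the true and restricted diameters, showing that the restricted optimum is within $O(L)$ of $D^*$ in both directions, together with the degenerate case in which the optimal shortcut collapses into a single interval, where the endpoint-transplant argument fails and one must instead exploit the near-equality $d(1,N)\le D^*+L$.
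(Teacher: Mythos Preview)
Your proposal is correct and follows essentially the same two-sided comparison as the paper: show that moving the optimal shortcut to the representatives of its intervals costs at most $4L$, and that passing from the restricted diameter back to the true one costs at most $2L$, yielding $D(\gamma_p,\gamma_q)\le D^*+6L\le(1+\varepsilon)D^*$. You are in fact slightly more careful than the paper in two places---explicitly adjoining the endpoints $1,N$ to make the restricted instance a valid $\wdoap$ input, and separately handling the case where the farthest pair $(k,h)$ falls into a single interval---but the core argument is the same.
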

\begin{proof}
We already proved that the time and space complexities of the algorithm are $O\left(N+\frac{1}{\varepsilon}\log \frac{1}{\varepsilon}\right)$ and $O\left(N+\frac{1}{\varepsilon}\right)$, respectively. Therefore, we only have to prove that the algorithm returns a $(1+\varepsilon)$-approximate solution.

Let $i^*$ and $j^*$, with $1 \leq i^* < j^* \leq N$, be two indices such that $D(i^*,j^*) = D^*$. We can assume that $(i-1)L \leq d(1,i^*) \leq iL$ and $(j-1)L\leq d(1,j^*)\leq jL$, with $i < j$. Indeed, if $i=j$, then the addition of the shortcut $(i^*,j^*)$ to $P$ cannot shorten paths in $P$ by more than $L$, and therefore, $d(1,N)-L \leq D(i^*,j^*) \leq D^*$, i.e.,
$D(i,j)\leq d(1,N) \leq D^*+\varepsilon D^*/6 \leq (1+\varepsilon)D^*$ for every shortcut $(i,j)$.
Observe that both $\gamma_i$ and $i^*$ belong to the $i$-th interval as well as both $\gamma_j$ and $j^*$ both to the $j$-th interval. As a consequence, using the triangle inequality we have that
\begin{equation}\label{eq:gammas}
c(\gamma_i,\gamma_j) \leq d(i^*,\gamma_i)+c(i^*,j^*)+d(j^*,\gamma_j) \leq 2L + c(i^*,j^*).
\end{equation}
Let $k$ and $h$ be two indices such that $1 \leq k < h \leq N$ and let $d(k,i^*)+c(i^*,j^*)+d(h,j^*)$ be the length of a shortest path between $k$ and $h$ in $P+(i^*,j^*)$ chosen among the set of all paths passing through $(i^*,j^*)$. Using (\ref{eq:gammas}), we can bound the length of the shortest path between $k$ and $h$ in $P+(\gamma_i,\gamma_j)$, chosen among the set of all paths passing through $(\gamma_i,\gamma_k)$,  with
\begin{align*}
d(k,\gamma_i)	& + c(\gamma_i,\gamma_j)+d(\gamma_j,h) \\
				& \leq d(k,i^*)+d(i^*,\gamma_i)+2L + c(i^*,j^*)+d(j^*,\gamma_j)+d(j^*,h)\\
				& \leq d(k,i^*)+c(i^*,j^*)+d(h,j^*)+4L.
\end{align*}
As a consequence, the shortest path between any two vertices in $P+(\gamma_i,\gamma_j)$ is longer than the corresponding shortest path in $P+(i^*,j^*)$ by an additive term of at most $4L$. Therefore, $D(\gamma_i,\gamma_j) \leq D(i^*,j^*) + 4L \leq D^*+4L$. This implies that the value of $D(\gamma_i, \gamma_j)$ measured w.r.t. the path instance restricted to the representative vertices only is at most $D^*+4L$. Let $(\gamma_{k},\gamma_{h})$ be the optimal shortcut of the instance restricted to the representative vertices only returned by the algorithm. When we measure the value $D(\gamma_k,\gamma_h)$ measured w.r.t. all the vertices of $P$, we need to take into account also the additive term paid as the distance between any vertex and its corresponding representative plus the weight of the considered vertex. Since the representative index is the one that maximizes its weight w.r.t. all the indices of the interval it belongs to and since the distance between any vertex of $P$ and its corresponding representative is at most $L$, the value $D(\gamma_k,\gamma_h)$ measured w.r.t. all the vertices of $P$ is at most the value $D(\gamma_k,\gamma_h)$ measured w.r.t. all the representative vertices only plus an additive term of at most $2L$ (at most $L$ for $\gamma_k$ and at most $L$ for $\gamma_h$). Therefore, by choice of $L$ and from Lemma~\ref{lemma:relation_between_path_and_D_star}, the value $D(\gamma_k,\gamma_h)$ measured w.r.t. all the vertices of $P$ is at most $D^*+4L+2L = D^* + \varepsilon\cdot d(1,N)/3 \leq (1+\varepsilon)D^*$.
\end{proof}

\begin{theorem}\label{theorem:_apx_metric_tree}
For every $\varepsilon >0$, we can compute a $(1+\varepsilon)$-approximate solution of $\doap$ in $O\left(n + \frac{1}{\varepsilon}\log\frac{1}{\varepsilon}\right)$ time and $O\left(n+\frac{1}{\varepsilon}\right)$ space when the tree is embedded in a metric space.
\end{theorem}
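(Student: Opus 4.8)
The plan is to reduce the given $\doap$ instance $\langle T, \delta, c\rangle$ to its induced $\wdoap$ instance and then invoke the approximation algorithm of Theorem~\ref{theorem:metric_node_weighted_path}. First I would compute a diametral path $P=(v_1,\dots,v_N)$ of $T$ together with all the node weights $w(v_i)$; by Lemma~\ref{lemma:Dij} this reduction takes $O(n)$ time and produces the induced $\wdoap$ instance $\langle P, \delta, w, c\rangle$. I would then compute all the values $\omega(i)$ in $O(N)=O(n)$ time via Lemma~\ref{lemma:computation_omega}, so that the input $\langle P, \delta, \omega, c\rangle$ expected by the $\wdoap$ algorithm is available.

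The crucial link is Lemma~\ref{lemma:Dij}, which guarantees that $\diam\big(T+(v_i,v_j)\big) = D(i,j)$ for every $1\le i<j\le N$, together with Lemma~\ref{lemma:equivalence_to_metric_instance}, which guarantees that some optimal $\doap$ shortcut has both endpoints on $P$. Combined, these imply that the optimum of the induced $\wdoap$ instance equals $D^*$, the diameter of an optimal $\doap$ solution. Hence any shortcut whose $\wdoap$-value $D(\cdot,\cdot)$ is at most $(1+\varepsilon)D^*$ is, when viewed as a shortcut of $T$, a $(1+\varepsilon)$-approximate solution of the original $\doap$ instance.

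Applying Theorem~\ref{theorem:metric_node_weighted_path} to $\langle P, \delta, \omega, c\rangle$ yields such a shortcut in $O\big(N+\tfrac{1}{\varepsilon}\log\tfrac{1}{\varepsilon}\big)$ time and $O\big(N+\tfrac{1}{\varepsilon}\big)$ space. Adding the $O(n)$ cost of the reduction and using $N\le n$ gives a total of $O\big(n+\tfrac{1}{\varepsilon}\log\tfrac{1}{\varepsilon}\big)$ time and $O\big(n+\tfrac{1}{\varepsilon}\big)$ space, matching the claimed bounds.

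There is essentially no serious obstacle here: all the difficulty has been front-loaded into the correctness and complexity analysis of the $\wdoap$ approximation algorithm (Theorem~\ref{theorem:metric_node_weighted_path}) and into the reduction lemmas of Sections~\ref{section:preliminaries} and~\ref{section:reduction_to_paths}. The only point requiring a line of care is verifying that the $(1+\varepsilon)$ ratio is preserved under the reduction, which follows immediately from the identity $\diam\big(T+(v_i,v_j)\big)=D(i,j)$ and from the fact that the $\wdoap$ optimum coincides with $D^*$.
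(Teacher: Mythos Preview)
Your proposal is correct and follows essentially the same approach as the paper: reduce to the induced $\wdoap$ instance via Lemma~\ref{lemma:Dij}, invoke Lemma~\ref{lemma:equivalence_to_metric_instance} to ensure the $\wdoap$ optimum coincides with $D^*$, and then apply Theorem~\ref{theorem:metric_node_weighted_path}. The only (harmless) extra detail you spell out is the explicit computation of the $\omega$-values, which the paper treats as part of the $\wdoap$ machinery.
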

\begin{proof}
In Lemma~\ref{lemma:equivalence_to_metric_instance} we proved that, if $P=(v_1,\dots,v_N)$ is a diametral path of $T$, then an optimal solution of the problem is a shortcut whose endvertices are both vertices of $P$. Furthermore, in Lemma~\ref{lemma:Dij} we proved that the instance of $\doap$ can be reduced in $O(n)$ time to the corresponding induced instance of $\wdoap$, and the reduction preserves the approximation. Therefore, a $(1+\varepsilon)$ -approximate solution of the induced problem instance is also a $(1+\varepsilon)$-approximate solution of our problem instance. Since a $(1+\varepsilon)$-approximate solution for the induced problem instance can be found in $O\left(N +\frac{1}{\varepsilon}\log \frac{1}{\varepsilon}\right)=O\left(n +\frac{1}{\varepsilon}\log \frac{1}{\varepsilon}\right)$ time and $O\left(N+\frac{1}{\varepsilon}\right)=O\left(n+\frac{1}{\varepsilon}\right)$ space (see Theorem~\ref{theorem:metric_node_weighted_path}), the claim follows.
\end{proof}

\bibliography{bibliography}

\begin{thebibliography}{10}

\bibitem{DBLP:journals/jgt/AlonGR00}
Noga Alon, Andr{\'{a}}s Gy{\'{a}}rf{\'{a}}s, and Mikl{\'{o}}s Ruszink{\'{o}}.
\newblock Decreasing the diameter of bounded degree graphs.
\newblock {\em Journal of Graph Theory}, 35(3):161--172, 2000.

\bibitem{DBLP:journals/tcs/BiloGP12}
Davide Bil{\`{o}}, Luciano Gual{\`{a}}, and Guido Proietti.
\newblock Improved approximability and non-approximability results for graph
  diameter decreasing problems.
\newblock {\em Theor. Comput. Sci.}, 417:12--22, 2012.

\bibitem{DBLP:conf/swat/CarufelMS16}
Jean{-}Lou~De Carufel, Carsten Grimm, Anil Maheshwari, and Michiel H.~M. Smid.
\newblock Minimizing the continuous diameter when augmenting paths and cycles
  with shortcuts.
\newblock In Rasmus Pagh, editor, {\em 15th Scandinavian Symposium and
  Workshops on Algorithm Theory, {SWAT} 2016}, volume~53 of {\em LIPIcs}, pages
  27:1--27:14. Schloss Dagstuhl - Leibniz-Zentrum fuer Informatik, 2016.

\bibitem{DBLP:conf/wads/CarufelGSS17}
Jean{-}Lou~De Carufel, Carsten Grimm, Stefan Schirra, and Michiel H.~M. Smid.
\newblock Minimizing the continuous diameter when augmenting a tree with a
  shortcut.
\newblock In Ellen et~al. \cite{DBLP:conf/wads/2017}, pages 301--312.

\bibitem{DBLP:journals/algorithmica/ChepoiV02}
Victor Chepoi and Yann Vax{\`{e}}s.
\newblock Augmenting trees to meet biconnectivity and diameter constraints.
\newblock {\em Algorithmica}, 33(2):243--262, 2002.

\bibitem{DBLP:journals/jgt/ChungG84}
F.~R.~K. Chung and M.~R. Garey.
\newblock Diameter bounds for altered graphs.
\newblock {\em Journal of Graph Theory}, 8(4):511--534, 1984.

\bibitem{DBLP:conf/swat/DemaineZ10}
Erik~D. Demaine and Morteza Zadimoghaddam.
\newblock Minimizing the diameter of a network using shortcut edges.
\newblock In Haim Kaplan, editor, {\em Algorithm Theory - {SWAT} 2010, 12th
  Scandinavian Symposium and Workshops on Algorithm Theory}, volume 6139 of
  {\em Lecture Notes in Computer Science}, pages 420--431. Springer, 2010.

\bibitem{DBLP:conf/wads/2017}
Faith Ellen, Antonina Kolokolova, and J{\"{o}}rg{-}R{\"{u}}diger Sack, editors.
\newblock {\em Algorithms and Data Structures - 15th International Symposium,
  {WADS} 2017}, volume 10389 of {\em Lecture Notes in Computer Science}.
  Springer, 2017.

\bibitem{DBLP:journals/algorithmica/FratiGGM15}
Fabrizio Frati, Serge Gaspers, Joachim Gudmundsson, and Luke Mathieson.
\newblock Augmenting graphs to minimize the diameter.
\newblock {\em Algorithmica}, 72(4):995--1010, 2015.

\bibitem{DBLP:journals/dam/GaoHN13}
Yong Gao, Donovan~R. Hare, and James Nastos.
\newblock The parametric complexity of graph diameter augmentation.
\newblock {\em Discrete Applied Mathematics}, 161(10-11):1626--1631, 2013.

\bibitem{DBLP:conf/icalp/GrosseGKSS15}
Ulrike Gro{\ss}e, Joachim Gudmundsson, Christian Knauer, Michiel H.~M. Smid,
  and Fabian Stehn.
\newblock Fast algorithms for diameter-optimally augmenting paths.
\newblock In Magn{\'{u}}s~M. Halld{\'{o}}rsson, Kazuo Iwama, Naoki Kobayashi,
  and Bettina Speckmann, editors, {\em Automata, Languages, and Programming -
  42nd International Colloquium, {ICALP} 2015}, volume 9134 of {\em Lecture
  Notes in Computer Science}, pages 678--688. Springer, 2015.

\bibitem{DBLP:journals/corr/GrosseGKSS16}
Ulrike Gro{\ss}e, Joachim Gudmundsson, Christian Knauer, Michiel H.~M. Smid,
  and Fabian Stehn.
\newblock Fast algorithms for diameter-optimally augmenting paths and trees.
\newblock {\em CoRR}, abs/1607.05547, 2016.
\newblock URL: \url{http://arxiv.org/abs/1607.05547}, \href
  {http://arxiv.org/abs/1607.05547} {\path{arXiv:1607.05547}}.

\bibitem{DBLP:journals/ipl/Hershberger89}
John Hershberger.
\newblock Finding the upper envelope of $n$ line segments in {{\textit O}}$(n
  \log n)$ time.
\newblock {\em Inf. Process. Lett.}, 33(4):169--174, 1989.
\newblock \href {http://dx.doi.org/10.1016/0020-0190(89)90136-1}
  {\path{doi:10.1016/0020-0190(89)90136-1}}.

\bibitem{DBLP:journals/jgt/Ishii13}
Toshimasa Ishii.
\newblock Augmenting outerplanar graphs to meet diameter requirements.
\newblock {\em Journal of Graph Theory}, 74(4):392--416, 2013.

\bibitem{LI1992303}
Chung-Lun Li, S.Thomas McCormick, and David Simchi-Levi.
\newblock On the minimum-cardinality-bounded-diameter and the
  bounded-cardinality-minimum diameter edge addition problems.
\newblock {\em Operations Research Letters}, 11(5):303--308, 1992.

\bibitem{DBLP:journals/jacm/Megiddo83}
Nimrod Megiddo.
\newblock Applying parallel computation algorithms in the design of serial
  algorithms.
\newblock {\em J. {ACM}}, 30(4):852--865, 1983.

\bibitem{DBLP:conf/isaac/0001A16a}
Eunjin Oh and Hee{-}Kap Ahn.
\newblock A near-optimal algorithm for finding an optimal shortcut of a tree.
\newblock In Seok{-}Hee Hong, editor, {\em 27th International Symposium on
  Algorithms and Computation, {ISAAC} 2016, December 12-14, 2016, Sydney,
  Australia}, volume~64 of {\em LIPIcs}, pages 59:1--59:12. Schloss Dagstuhl -
  Leibniz-Zentrum fuer Informatik, 2016.

\bibitem{DBLP:journals/jgt/SchooneBL87}
Anneke~A. Schoone, Hans~L. Bodlaender, and Jan van Leeuwen.
\newblock Diameter increase caused by edge deletion.
\newblock {\em Journal of Graph Theory}, 11(3):409--427, 1987.

\bibitem{DBLP:conf/wads/wang}
Haitao Wang.
\newblock An improved algorithm for diameter-optimally augmenting paths in a
  metric space.
\newblock In Ellen et~al. \cite{DBLP:conf/wads/2017}, pages 545--556.

\end{thebibliography}

\end{document}